\newtheorem{theorem}{Theorem}[section]
\newtheorem{lemma}[theorem]{Lemma}
\newtheorem{proposition}[theorem]{Proposition}
\newtheorem{remark}[theorem]{Remark}
\newtheorem{corollary}[theorem]{Corollary}
\newtheorem{example}[theorem]{Example}
\def\square{\hbox{\vrule\vbox{\hrule\phantom{o}\hrule}\vrule}}
\def\re{{\rm Re}}
\def\im{{\rm Im}}
\def\ord{{\mathcal O}}
\def\W{{\mathcal W}}
\def\R{\mathbb {R}}
\def\H{\mathcal {H}}
\def\C{\mathbb {C}}
\def\N{\mathbb {N}}
\def\eq#1{(\ref{#1})}
\def\e{\varepsilon}
\def\l{\lambda}
\title[Semiclassical WKB]{Semiclassical WKB problem for the non-self-adjoint Dirac operator with analytic potential} 
\author[S.Fujiie]{Setsuro Fujii\'e}
\address{Department of Mathematical Sciences, Ritsumeikan University,
1-1-1 Nojihigashi, Kusatsu, Shiga, 525-8577, Japan}
\author[S. Kamvissis]{Spyridon Kamvissis}
\address{Department of Pure and Applied Mathematics,
University of Crete, GR--700 13 Voutes Campus, Greece, and Institute
of Applied and Computational Mathematics, FORTH, GR--711 10
Voutes Campus, Greece}
\begin{document}
\setcounter{section}{0}
\maketitle

\abstract
In this paper we examine the semiclassical behaviour of the scattering data of a non-self-adjoint Dirac operator with analytic potential decaying at infinity.
In particular, employing the exact WKB method, we
provide the complete rigorous uniform semiclassical analysis of the reflection coefficient
and the Bohr-Sommerfeld condition for the location of the eigenvalues. 
Our analysis has some interesting consequences concerning
the focusing cubic NLS equation, in view of the well-known fact discovered by Zakharov and Shabat that 
the spectral analysis of the Dirac operator is the basis of the solution of the NLS equation via
inverse scattering theory.
\endabstract

\section{Introduction: Motivation}

In the last twenty  years or so  
the analysis of the semiclassical behaviour of the focusing NLS equation has been rigorously achieved 
(and also numerically supported and clarified) for 
a certain class of real analytic decaying initial data (\cite{kmm}, \cite{kr}, \cite{k1}, \cite{k2}, \cite{lm}, \cite{bt}).

The problem is as follows:
consider the semiclassical limit ($\epsilon \to 0$)
of the solution to the initial value problem of the one-dimensional nonlinear Schr\"odinger equation: 
\begin{equation}
\left\{
\begin{array}{l}
i\epsilon\partial_t\psi +
\frac{\epsilon^2}{2}\partial_x^2\psi + |\psi|^2\psi = 0, \\[8pt]
\psi(x,0)=A(x), 
\end{array}
\right.
\end{equation}
We assume here that $A(x)$ is a {\it real analytic} integrable function, and moreover that it is a
positive  ``bell-shaped" function;
in other words 
\begin{equation}
A (x) >0, \quad
A (-x) = A (x),\\
\end{equation}
and it has  one single non-degenerate maximum at 0, say $A_0$,
\begin{equation}
A (0)=A_0>0,\quad xA'(x) < 0,\quad
A''(0) < 0, 
\end{equation}

Suppose now that
we replace the initial data by the so-called ``soliton ensembles" data  which are defined by replacing
the scattering data for $\psi (x,0)= A(x)$ with  their {\it formal}
WKB-approximation: we set the reflection coefficient of the associated Dirac operator (see section 2) to be identically zero and 
replace the actual eigenvalues  by their Bohr-Sommerfeld approximation (see  section 5).
In other words we replace the initial data by a {\it new} set of data which is now depending on $\epsilon$.
Suppose that we solve the focusing NLS equation under this new set of initial data. Then we have 
the following.

Let $x_0, t_0$ be any given point ($x_0 \in \Bbb R, t_0 > 0$).
The solution $\psi(x,t)$  is asymptotically ($\epsilon \to 0$)
described  (locally) as a slowly
modulated $G+1$ phase wavetrain.  Setting $x=x_0+\epsilon  \hat{x}$
and $t=t_0+\epsilon  \hat{t}$,
so that $x_0, t_0$ are ``slow" variables
while $\hat{x}, \hat{t}$ are ``fast" variables,
there exist
parameters
$a,  U = (U_0, U_1, .... , U_G)^T,$
$k =(k_0, k_1, ......, k_G)^T,$
$w =(w_0, w_1, ....., w_G)^T, $
$Y =(Y_0, Y_1, ........., Y_G)^T,$
$Z =( Z_0, Z_1, ...... , Z_G)^T $
depending on the slow variables
$x_0$ and $t_0$  (but not  $\hat{x}, \hat{t}$)
such that
$\psi(x,t)= \psi(x=x_0+ \epsilon  \hat{x}, t=t_0+\epsilon  \hat{t})$ has the following 
leading order asymptotics as $\epsilon \to 0$: 

\begin{equation}
\label{asymptotics}
\psi(x,t) =
a(x_0, t_0) e^{iU_0(x_0, t_0)/\epsilon}
e^{i(k_0(x_0, t_0) \hat{x}-w_0(x_0, t_0) \hat{t})} \hspace{2.5cm}
$$
$$
\hspace{1.3cm}\cdot  \frac{\Theta(  Y(x_0, t_0)+
i  U(x_0, t_0)/\epsilon +
i(  k(x_0, t_0) \hat{x}-  w(x_0, t_0)\hat{t}))}
{
\Theta(  Z(x_0, t_0)+
i  U(x_0, t_0)/\epsilon +
i(  k(x_0, t_0)\hat{x}-  w(x_0, t_0)\hat{t}))} (1+o(1)).
\end{equation}

All parameters can be defined in terms of an underlying
Riemann surface $X$ which depends solely on $x_0, t_0$.   The moduli of $X$ vary slowly with $x, t$, i.e.
they depend on  $x_0, t_0$ but not on
$\epsilon, \hat{x}, \hat{t}$. 
$\Theta$ is the G-dimensional Jacobi theta function associated with $X$.
The genus of $X$ can vary with $x_0, t_0$. In fact, the $x,t$-plane is divided into open regions in each of which
G is constant. On the boundaries of such regions (sometimes called ``caustics"; they are unions of analytic arcs), 
some degeneracies appear in the mathematical analysis (we may have ``pinching" of the surfaces $X$ for example) and
interesting physical phenomena can appear (like the famous Peregrine rogue wave \cite{bt}).
The above formulae give pointwise asymptotics, which are in fact uniform
in compact (x,t)-sets not containing points on the caustics.
For the exact formulae for the parameters as well as the definition  of the theta functions we refer to \cite{kmm}  or \cite{kr}.
For an analysis of the somewhat more delicate behaviour (especially for higher order terms in $\epsilon$)
near the first caustic see \cite{bt}.

The above result is interesting but somewhat unsatisfactory.
The reason, of course, is that the initial data is substituted by 
the soliton ensembles data. A rigorous justification of this substitution 
requires rigorous semiclassical asympotics
for the spectral data of the Dirac operator that is associated to the focusing NLS equation
(see the next section).
Our main aim in this paper is to show 
how the powerful ``exact WKB method" can be used to provide the necessary
rigorous asymptotic results.

The question of the semiclassical approximation of the scattering data has a deeper significance 
in view of the instability of the problem which appers in many levels.
In fact even in the non-semiclassical regime, the {\it focusing} NLS is the main model for the so-called
``modulational instability" (\cite{bf}, \cite{bm}), although for positive fixed $ \epsilon $ the initial value problem is well-posed.

Semiclassically the  instabilities become more pronounced.
One way to see this  is related to the underlying ellipticity
of the formal semiclassical limit. 
To be more specific, consider the well-known Madelung transformation \cite{m}.

\begin{equation}
\rho = |\psi|^2, \\~~~~~~~~~~ \mu = \epsilon \im\,(\bar \psi \psi_x).
\end{equation}
Then the initial value problem becomes
\begin{equation}
\rho_t  +\mu_x = 0,\\ ~~~~~~~~~\mu_t + ({\mu^2 \over \rho} + {\rho^2 \over 2})_x = {\epsilon^2 \over 4} \partial_x(\rho (\log \rho)_{xx}),
\end{equation}
with initial data $\rho(x,0) =  (\psi(x,0))^2$ and $\mu(x,0)=0$.

The formal limit as $\epsilon \to 0$ is
\begin{equation}
\rho_t  +\mu_x = 0,\\ ~~~~~~~~~\mu_t + ({\mu^2 \over \rho} + {\rho^2 \over 2})_x = 0,
\end{equation}
with initial data $\rho(x,0) = (\psi(x,0))^2$ and $\mu(x,0)=0$.

This is an initial value problem for an elliptic system of equations and so one expects that small perturbations of the initial data
(independent of $\epsilon$) can lead to large changes in the solution, at any given time.

Another appearance of instabilities appears at the spectral analysis of the related non-self-adjoint Dirac operator
(see the next section). Instability appears also at the related equilibrium measure problem 
(see section 6 and the appendix),  the related Whitham equations (they are also elliptic) 
 and even in the numerical studies of the problem.

As already stated, the semiclassical approximation of the scattering data results in
 small changes of the initial data $that ~depend~on~\epsilon$.
It is a priori unclear whether they can have a significant effect in the semiclassical asymptotics of the solution at a given time. 
Our aim is to prove that, at least for  these particular initial data, they do not.

In simpler problems like the real KdV equation, 
or the defocusing nonlinear Schr\"odinger equation, one can make use of the underlying
hyperbolicity of the formal limit to prove, a posteriori, that the formal semiclasssical WKB analysis of the scattering data 
is justified. In the focusing nonlinear Schr\"odinger equation, 
we need more delicate tools, provided by the exact WKB method.  
The exact WKB method was first developed for the Schr\"odinger operator,  but here we apply it
to the Dirac operator that is associated to the focusing NLS equation. The method goes back to works of Ecalle 
\cite{e} and Voros \cite{v} but here we argue along the lines of the papers of G\'erard-Grigis \cite{gg} 
and Fujii\'e-Lasser-N\'ed\'elec \cite{fln}.
Rather than relying on the usual formal WKB method which relies on asymptotic series that are in general divergent,
we use a ``resummation" of the series and in fact construct $exact$ solutions in terms of $convergent$ series, 
thus resolving a problem of ``asymptotics beyond all orders".
 
We begin by addressing the issue of the reflection coefficient. 
The exact WKB method is employed to prove that it is exponentially small
away from the point 0, in the spectral plane. 
Similarly, we  give a rigorous justification of the Bohr-Sommerfeld asymptotic conditions
for the locations of the eigenvalues.  Our main  assumptions on the initial data, i.e. the potential
of the Dirac operator are two: analyticity near the real line  and a mild decay estimate.
Some extra technical assumptions are needed for the analysis of the scattering data near $0$.
These assumptions are often cumbersome to check.
Still a wide enough open class of ``potentials" $A(x)$ satisfies these assumptions, including
positive  bell-shaped  rational functions in $L^1$ and also
exponential functions like $(\cosh x)^{-1}$, $e^{-x^2}$ (see Example \ref{pol}, \ref{exp}).

The plan of this paper is the following. In section 2 we state the exact  assumptions on the potential and present the results
on the rigorous WKB approximation. In section 3 the exact WKB method is presented. 
In section 4, it is applied to the reflection coefficient
of  the Dirac operator. In section 5, the eigenvalues are considered. 
In section 6, we present  the application of the WKB results to the focusing NLS problem
and  we explain  how the analysis of \cite{kmm} needs to be  modified in view of these results. 

In the  appendix, we first present the Riemann-Hilbert problem for the focusing NLS equation.
We then give a rudimentary description of the change of variables needed to asymptotically deform the
given Riemann-Hilbert problem  into a ``model" problem that can be explicitly solved.
Finally, we present a discussion of the results
of \cite{k1}, which concern the possible obstacle of the non-analyticity of the spectral density of eigenvalues.

\section{Assumptions and results}
We study the semiclassical asymptotics of the reflection coefficient and the eigenvalue distribution of the Dirac operator
$$
L:=\left (
\begin{array}{cc}
-\frac \epsilon i\frac{d}{dx} & -iA(x) \\[8pt]
-iA(x) & \frac \epsilon i\frac{d}{dx}
\end{array}
\right ).
$$
Here, $\e>0$ is the semiclassical small parameter and $A(x)$ is a function satisfying

\begin{description}
\item[(A1)]
$A(x)$ is a real positive smooth function on $\R$, and extends analytically to the complex domain
$$
D_0=D(\rho_0,  \theta_0):=\{x\in \C; |\im \,x|<\max (\rho_0, (\tan\theta_0 )|\re \,x|)\}
$$
for some positive $\rho_0$ and $\theta_0$.
Moreover, there exists a positive $\tau$ such that, as $x\to\infty$ in $D_0$,
$$
A(x)=\ord (|x|^{-1-\tau}).
$$
\end{description}

Under this condition, it is known that the spectrum of the non-self-adjoint operator $L$ consists of the continuous spectrum  $\R$ and 
a finite number of eigenvalues coming in complex conjugate pairs (and which are close to $i[-A_0,A_0]$, where $A_0:=\max_{x\in \R}A(x)>0$
 when $\epsilon$ is small).

We will first study the asymptotic behavior of the reflection coefficient $R(\l,\e)$ for 
$\l\in\R\setminus \{0\}$.

First we have the following result for the reflection coefficient for $\l\in\R$ away from $0$.

\begin{theorem}
\label{ref1}
Assume (A1). Then, for any $\delta>0$, there exists $\sigma>0$ independent of $\e$ such that
$$
|R(\l,\e)|=\ord (e^{-\sigma/\e}),
$$
as $\e\to 0$ uniformly for $\l\in (-\infty, \delta]\cup [\delta,\infty)$.
\end{theorem}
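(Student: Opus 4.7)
The plan is to apply the exact WKB method of Section 3 to the eigenvalue equation $L\psi = \l\psi$. Writing $\psi=(\psi_1,\psi_2)^T$, this becomes a first-order $2\times 2$ system $\e\Psi' = M(x,\l)\Psi$ whose coefficient matrix has eigenvalues $\pm i p(x,\l)$ with $p(x,\l) = \sqrt{\l^2 + A(x)^2}$, giving the action $S(x,\l) = \int_0^x p(y,\l)\,dy$ and the formal WKB solutions $p^{-1/2}e^{\pm iS/\e}$ times a vector amplitude. The key geometric observation is that for \emph{real} $\l\neq 0$ the momentum $p$ is strictly positive on $\R$, so no turning point lies on the real axis; by (A1) and the reality of $A$ on $\R$, the zeros of $p$, i.e.\ the roots of $A(x) = \pm i\l$, are genuinely complex and come in complex-conjugate pairs inside $D_0$ whenever $|\l|$ is not too large.

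I first construct exact Jost solutions $\psi_\pm^L,\psi_\pm^R$ normalised so that $\psi_\pm^L \sim e^{\pm i\l x/\e}$ at $-\infty$ and $\psi_\pm^R$ similarly at $+\infty$, each represented as the WKB exponential times a Volterra amplitude that converges absolutely on a canonical complex domain, in the spirit of G\'erard--Grigis and Fujii\'e--Lasser--N\'ed\'elec. The decay $A(x) = \Og(|x|^{-1-\tau})$ in (A1) is exactly what guarantees the convergence at $\pm\infty$ of the integrals defining the Volterra series, so that the Jost data is well defined and depends analytically on $x$ in a complex strip. The reflection coefficient $R(\l,\e)$ is then extracted from the scattering matrix relating these two bases; it appears as the ratio of two Wronskians, schematically $R = W[\psi_+^L,\psi_+^R]/W[\psi_+^L,\psi_-^R]$.

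The main step is to evaluate the small Wronskian not on $\R$ but along a contour deformed into the upper half-plane past a complex turning point $x_+^\l$ with $A(x_+^\l) = i\l$. On such a contour one of the two WKB exponentials decays like $e^{-|\im S|/\e}$, with $|\im S|$ bounded below by a positive constant $\sigma_\l > 0$ depending continuously on $\l$, while the Volterra amplitudes stay uniformly $\Og(1)$ by the exact WKB estimates; the large Wronskian is close to the transmission coefficient and bounded away from zero. This yields a pointwise bound $|R(\l,\e)| \le C e^{-\sigma_\l/\e}$.

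The principal obstacle will be uniformity in $\l$ over the unbounded range $(-\infty,-\delta]\cup[\delta,\infty)$. On a bounded slice $[\delta,M]$ the turning points move continuously in $D_0$ and $\sigma_\l\ge\sigma>0$ by compactness, whereas for $|\l|$ large the equation $A(x)=\pm i\l$ may have no root inside $D_0$ at all, and one instead deforms by a fixed height $\eta \in (0,\rho_0)$, on which $|\im S|$ grows linearly in $|\l|$. Splicing the two regimes, together with checking that the Volterra estimates remain uniform along the deformed contours---which is where the full strength of (A1) is really used---gives the desired uniform bound $|R(\l,\e)| = \Og(e^{-\sigma/\e})$.
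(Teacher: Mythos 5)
Your proposal is correct and follows essentially the same route as the paper: exact WKB Jost solutions in the spirit of G\'erard--Grigis and Fujii\'e--Lasser--N\'ed\'elec, a Wronskian expression for $R(\l,\e)$, and exponential smallness coming from anchoring the amplitude at a point $x_+$ with $\im\,x_+>0$, which makes $\re\,z(x_+;0)<0$ and hence makes the exponential prefactor in the small Wronskian decay, while the $w^+_{\rm even}$ and $w^+_{\rm odd}$ factors are $1+\ord(\e)$ and $\ord(\e)$ along progressive paths. The only simplification the paper makes over your sketch is that one never needs to deform near (or ``past'') a turning point, nor to splice bounded and unbounded $\l$: a single fixed $x_+$ with small positive imaginary part already yields $\re\,\sigma<0$ uniformly, and $|\re\,\sigma|$ only grows as $|\l|\to\infty$, so the uniformity on $|\l|\ge\delta$ is automatic.
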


For the eigenvalues, we assume moreover that $A(x)$ is a ``bell-shaped" function:

\begin{description}
\item[(A2)]
$A(x)=A(-x)$ and $A'(x)<0$ for $x>0$.
\item[(A3)]
$A''(0)<0$.
\end{description}

We will also study the accuracy of the quantization condition of the eigenvalues on $i[-A(0),A(0)]$ as  $\epsilon \to 0$.

Let $\lambda=i\mu$ with $0<\mu<A_0=A(0)$. The assumption (A2) implies that there exists  a unique positive $x^*(\mu)$ such that there are exactly two real numbers $x^*(\mu)$ and $-x^*(\mu)$  which
satisfy $A(x)=\mu$. Define an action integral 
\begin{equation}
\label{action}
S(\mu)=\int_{-x^*(\mu)}^{x^*(\mu)} \sqrt{A(x)^{2}-\mu^2}dx.
\end{equation}
\begin{theorem}
\label{ev1}
Assume (A1), (A2) and (A3). 
Then there exists a function $m(\mu,\e)$ with asymptotic behavior
$$
m(\mu,\e)=-1+\ord (\e)
$$
as $\e\to 0$ uniformly in any closed interval $I\subset (0,A(0)]$
such that $\l=i\mu$ where $\mu\in I$ is an eigenvalue of $L$ if and only if
\begin{equation}
\label{BS}
m(\mu,\e)e^{2iS(\mu)/\e}=1.
\end{equation}
\end{theorem}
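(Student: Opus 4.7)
The plan is to characterize $\l = i\mu$ as an eigenvalue of $L$ via the vanishing of the Wronskian of two Jost solutions, one decaying at $-\infty$ and one at $+\infty$, and to compute that Wronskian by the exact WKB calculus developed in Section 3. The underlying classical symbol equation is $\xi^2 = A(x)^2 - \mu^2$; by (A1)--(A3) the equation $A(x) = \mu$ has exactly two simple nondegenerate real roots $\pm x^*(\mu)$ for each $\mu \in (0, A_0)$. Hence $(-x^*(\mu), x^*(\mu))$ is classically allowed (exact WKB solutions of the oscillatory form $w_\pm(x,\e)\sim (A^2-\mu^2)^{-1/4}\exp\bigl(\pm i\int\sqrt{A^2-\mu^2}\,dx/\e\bigr)$) while the outer regions $|x|>x^*(\mu)$ are classically forbidden with exponentially growing/decaying exact WKB solutions. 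The symmetry in (A2) will reduce the two connection calculations to a single one.

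First I would construct two exact WKB Jost solutions $f_-(x,\mu,\e)$ and $f_+(x,\mu,\e)$, normalized by their exponentially decaying behaviour at $-\infty$ and $+\infty$ respectively. The decay rate $|x|^{-1-\tau}$ in (A1) guarantees that the exact WKB series of Section 3 converges uniformly on unbounded sectors covering the relevant parts of the real axis, so that $f_\pm$ are genuine, not merely formal, solutions. The eigenvalue condition $L\psi=i\mu\psi$ with $\psi\in L^2(\R)$ is then equivalent to $W(f_-,f_+)=0$. Next, using the Airy-type connection formulas at the two simple turning points $\pm x^*(\mu)$ phrased in the exact WKB language, I would analytically continue each Jost solution into the oscillatory region and express it in the basis $\{w_-,w_+\}$, writing $f_\pm = c^\pm_- w_- + c^\pm_+ w_+$ with connection coefficients that are explicit products of exact WKB symbol values at the turning points and universal Stokes constants.

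Computing $W(f_-,f_+)$ in this basis produces a sum of two oscillatory contributions whose total phases are $\pm i\int_{-x^*}^{x^*}\sqrt{A^2-\mu^2}\,dx/\e = \pm i S(\mu)/\e$; factoring one of them, the eigenvalue equation becomes $m(\mu,\e)e^{2iS(\mu)/\e}=1$, where $m(\mu,\e)$ is a specific ratio of connection coefficients. The leading Maslov contribution is $m(\mu,0)=-1$: each simple turning point contributes the factor $e^{-i\pi/2}$ from the Airy connection, and their product is $-1$. The subleading corrections, encoded in the convergent exact WKB series for the symbols, are $\ord(\e)$ and smooth in $\mu$, giving $m(\mu,\e)=-1+\ord(\e)$.

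The main technical obstacle will be the uniformity of this estimate up to the endpoint $\mu=A_0$, where the two turning points $\pm x^*(\mu)$ coalesce at $0$. There the simple-turning-point Airy model degenerates, $S(\mu)\to 0$, and the exact WKB amplitudes pick up singularities in $(A_0-\mu)$; hypothesis (A3) (which gives $x^*(\mu)\sim\sqrt{2(A_0-\mu)/|A''(0)|}$) fixes the strength of this degeneration. Controlling this either via a merging-turning-points normal form (parabolic cylinder/Weber functions) with a rescaled exact WKB construction, or via direct uniform bounds on the relevant symbol integrals in the convergent resummation, is where the exact WKB framework really pays off over purely formal WKB. The lower bound that $I$ be bounded away from $0$ is comparatively innocuous: it keeps $\pm x^*(\mu)$ at a bounded distance and in the interior of the analyticity domain $D_0$, so the uniformity there follows from a compactness argument combined with the convergence of the exact WKB series.
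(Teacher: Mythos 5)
Your overall architecture matches the paper's: characterize $i\mu$ as an eigenvalue via the vanishing Wronskian of two exact WKB Jost solutions, express each in a common basis near the oscillatory region, factor out the phase $e^{2iS(\mu)/\epsilon}$, and identify the remaining ratio as $m(\mu,\epsilon)$. The paper does exactly this, with $m$ given as a ratio of four Wronskians (Proposition 5.4), reduced by Theorem 3.1(iii) to a product of four $w_{\rm even}^+$ values (equation \eqref{mmh}); the leading factor $-1$ comes out of the branch continuation of $H$ and ${\bf w}^\pm$ around the turning points (the $H(x)=iH(\hat x)$, ${\bf v}_1=-i{\bf u}^+(\hat x,\ldots)$ manipulations), which is the exact-WKB incarnation of your two $e^{-i\pi/2}$ Maslov factors rather than an externally imported Airy connection formula.

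Where you diverge is in the treatment of the endpoint $\mu\to A(0)$, and here the paper is both simpler and cleaner than what you propose. You flag coalescing turning points as the hard case and suggest a rescaled normal form (Weber/parabolic cylinder) or direct amplitude bounds. That would work, but it is not needed, and the reason is the main content of hypothesis (A3) in this argument. The four quantities $w_{\rm even}^+(x_3;\hat x_1)$, $w_{\rm even}^+(\hat x_4;x_2)$, $w_{\rm even}^+(x_1;x_2)$, $w_{\rm even}^+(x_3;x_4)$ appearing in \eqref{mmh} are computed along progressive paths connecting base points $x_1\in D_l$, $x_2\in D_u$, $x_3\in D_d$, $x_4\in D_r$, and (A3) guarantees that as $\mu\to A(0)$ no other complex turning points converge to the origin, so the Stokes geometry near $0$ does not change qualitatively. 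Crucially, the two coalescing turning points $\pm x^*(\mu)$ both sit on the real axis while the relevant paths route through $D_u$ and $D_d$ --- they are never \emph{pinched} between the coalescing pair, in contrast to the Schr\"odinger barrier-top situation (cf.\ Remark 3.2 and the references to \cite{gg}, \cite{fr}). Consequently the distance $\rho(\lambda)$ from the paths to the turning points stays bounded below, the constant $C$ in \eqref{est} stays bounded, and the same uniform estimate $w_{\rm even}^+=1+\ord(\epsilon)$ holds straight through to $\mu=A(0)$. No degenerate normal form is required; the non-pinching observation is the whole point. Your proposed route would reach the same conclusion at the cost of introducing a parabolic cylinder model and a matching argument that the paper deliberately avoids.
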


\begin{remark}
Klaus and Shaw proved in \cite{ks} that all the eigenvalues are simple and  purely imaginary under the bell-shaped
conditions (A1), (A2).
Recently Hirota and Wittsten refined Theorem \ref{ev1} to show that the
 eigenvalues  are still pure imaginary even if we only  impose an ``energy-local" bell-shaped condition
for small  enough $\epsilon$ (see \cite{hs}).
\end{remark}

Now let us focus on the asymptotic behavior of the functions $R(\lambda,\e)$ and $m(\mu,\e)$ when
$\l>0$ or $\mu>0$ tends to $0$ together with $\e$. In such a case, we need a more precise assumption on 
the asymptotic
behavior of the potential $A(x)$ as $|x|\to \infty$ in $D$.

We define a function
\begin{equation}
\label{zx}
z(x)=i\int_0^x\sqrt{A(t)^2+\lambda^2}dt,
\end{equation}
where we take the branch of the square root such that it is positive at $t=0$.
This function is well-defined and  holomorphic at least near the origin $x=0$. It is extended in $D_0$ except at the turning points, i.e.
the zeros of $A(t)^2+\lambda^2$, around which it is multi-valued.

We first consider the case where $\l>0$ is small. 
In this case, there is no turning point on the real axis, and the image  of the real axis by the map $x\mapsto z(x)$ is the imaginary axis.
Let $F(a)$ be the cone-like set
$$
F(a)=\{z\in \C;|\re z|<a|\im z|\}
$$
for $a>0$.
We assume
\begin{description}
\item[(A4)]
For any $\l>0$ small, there exist positive constants $\rho(\l)$, $\theta(\l)$ and $a(\lambda)$ such that
$D(\rho(\l), \theta(\l))$ contains no turning point and its image by the map $x\mapsto z(x)$
includes $F(a(\l))$.
\end{description}

\begin{theorem}
\label{ref2}
Assume (A1), (A2) and (A4). Then there exists a positive constant $c$ such that
$$
R(\l,\e)=\ord \left (e^{-ca(\l)/\e}\right ),
$$
as $\e\to +0$ and $\l\to +0$ with $\frac\e{a(\l)}\to 0$.
\end{theorem}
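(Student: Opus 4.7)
The plan is to proceed exactly as in the proof of Theorem \ref{ref1} (Section 4), but tracking every estimate in the parameter $\l$ as it degenerates to zero. The three ingredients are: construction of exact WKB solutions in the domain $D(\rho(\l),\theta(\l))$, identification of the reflection coefficient as a specific entry of the connection matrix between these WKB solutions and the Jost solutions at $\pm\infty$, and a contour deformation.

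Since (A4) guarantees that $D(\rho(\l),\theta(\l))$ contains no turning point of $A(x)^2 + \l^2$, the exact WKB construction of Section 3 applies and produces a fundamental system
$$
\psi_\pm(x,\l,\e) = e^{\pm z(x)/\e}\bigl(v_\pm(x,\l) + r_\pm(x,\l,\e)\bigr),
$$
where $v_\pm$ is the leading WKB amplitude and the remainder $r_\pm$ is a convergent Volterra series whose $L^\infty$ norm on a given contour $\gamma\subset D(\rho(\l),\theta(\l))$ is bounded in terms of an $L^1(\gamma)$-norm of an effective density built from $A$, $A'$ and $(A^2+\l^2)^{-1/2}$. Using the decay stated in (A1), this density is uniformly integrable on contours that keep a reasonable distance from the turning points.

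One next identifies the Jost solutions at $\pm\infty$ with the appropriate $\psi_\pm$ by comparing the asymptotic form $z(x) = i\l x + \ord(1)$ forced by $A(x)\to 0$. The reflection coefficient $R(\l,\e)$ is then, up to a uniformly bounded prefactor, a specific Wronskian that picks up a factor $e^{-2z(x_0)/\e}$ when evaluated at a freely chosen point $x_0$. Here assumption (A4) enters geometrically: since the image of $D(\rho(\l),\theta(\l))$ under $x\mapsto z(x)$ covers the cone $F(a(\l))$, one may choose $x_0$ so that $\re z(x_0)\geq c\,a(\l)$ for an absolute constant $c>0$, and conclude $|e^{-2z(x_0)/\e}|\leq e^{-2ca(\l)/\e}$.

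The main obstacle, absent in Theorem \ref{ref1}, is to maintain uniform control as $\l\to 0$. The turning points, solutions of $A(x) = \pm i\l$, migrate toward complex zeros of $A$ or toward infinity as $\l\to 0$, forcing the admissible contours to thread between them. The hypothesis $\e/a(\l)\to 0$ is precisely what is needed for the exponential $e^{-ca(\l)/\e}$ to overcome any polynomial blow-up of the WKB prefactors (coming from $v_\pm \sim (A^2+\l^2)^{-1/4}$ and from the Volterra constants) caused by the proximity of the contour to these migrating turning points. Verifying this uniform domination, and in particular producing contours that simultaneously stay a sufficient distance from the turning points and reach deep into $F(a(\l))$, is the technical crux of the argument.
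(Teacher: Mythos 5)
Your plan reproduces the paper's strategy in broad strokes: reuse Proposition \ref{refwronsky}, exploit (A4) to push the contour and the base points deep into the cone $F(a(\l))$, and track the $\l$-dependence of the Volterra estimates. The claim that the key exponential factor is controlled by choosing points with $|\re z|\gtrsim a(\l)$ matches the paper's choice $z(x_\pm)=\mp\frac{ca(\l)}{4}$, which makes $\re\sigma\leq -ca(\l)/4$ in the exponential prefactor $e^{2(\sigma+z_r(0))/\e}$ of Proposition \ref{refwronsky}.

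There is, however, a genuine gap in the quantitative heart of the argument, and it inverts the mechanism. You assert that the density governing the Volterra estimates is ``uniformly integrable on contours that keep a reasonable distance from the turning points'' and then invoke $\e/a(\l)\to0$ to let the exponential beat a ``polynomial blow-up of WKB prefactors.'' Both halves of this are off. The relevant density is $\H(z)=\frac{d}{dz}\log H(z)$, which has a simple-pole-type singularity at (the images of) the turning points (see \eqref{pole}). On the contours you must use, the closest approach to a turning point shrinks like $a(\l)$ as $\l\to0$, so the $L^1$ and $L^2$ norms of $\H$ along the contour blow up like $1/a(\l)$ (respectively $1/a(\l)^2$); they are emphatically not uniformly bounded in $\l$. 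This is exactly the content of Remark \ref{rho}: the constant in the basic estimate $\Vert J\circ I_+(f)\Vert\le C\e\Vert f\Vert$ satisfies $C=C'/\rho(\l)$ with $\rho(\l)$ the distance from the contour to the nearest turning point measured in the $z$-variable, and the proof of Theorem \ref{ref2} arranges the contours (the two half-lines $\im z=\mp\frac{2\re z}{a(\l)}\pm\frac{c}{2}$) so that $\rho(\l)\geq\mathrm{const}\cdot a(\l)$. The upshot is that the effective small parameter of the exact WKB method near $\l=0$ is $\e/a(\l)$, not $\e$: the series converges and gives $w^+_{\mathrm{even}}=1+\ord(\e/a(\l))$, $w^+_{\mathrm{odd}}=\ord(\e/a(\l))$ precisely when $\e/a(\l)\to0$. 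If your uniform-integrability claim were true, one would get $w^+_{\mathrm{even}}=1+\ord(\e)$ with no constraint on $\l$, and the hypothesis $\e/a(\l)\to0$ in the theorem would be superfluous --- a sign the estimate cannot be what you stated. So the missing ingredient is the scaling $C\sim 1/a(\l)$ in the Volterra bound (Remark \ref{rho}) together with the explicit contour choice that realizes $\rho(\l)\gtrsim a(\l)$; the condition $\e/a(\l)\to0$ enters there, to keep the remainder small, not as a device to dominate polynomial growth by the exponential.
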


Next we consider the case where $\lambda=i\mu$ and $\mu>0$ is small. 
In this case, there are exactly two turning points $x^*(\mu)$ and $-x^*(\mu)$ on the real axis.
By the map $x\mapsto z(x)$, the real interval $(-x^*(\mu), x^*(\mu))$ is sent to the imaginary interval $(-z(x^*(\mu)),z(x^*(\mu)))$,
and the half line $(x^*(\mu),\infty)$ (resp. $(-\infty, -x^*(\mu))$ is sent to the half line $z(x^*(\mu))+\R_+$ (resp. $-z(x^*(\mu))+\R_+$) when
the square root in \eq{zx} is continued from $(-x^*(\mu),x^*(\mu))$ to $(x^*(\mu),\infty)$ (resp. $(-\infty, -x^*(\mu))$
passing through the upper half plane around the turning point $x^*(\mu)$ (resp. $-x^*(\mu)$).
Notice that, as $\mu\to 0$, one has $x^*(\mu)\to +\infty$ and 
$$
| z(x^*(\mu))|\to \int_0^{+\infty}A(x)dx=:z_\infty,
$$
which is a positive finite number.
Let $G(b)$ be the complex subdomain of $\C_z$ defined by
$$
\begin{array}{ll}
G(b)=&\{-b<\re z<0, |\im z|<|z(x^*(\mu))|+b)\} \\[8pt]
&\cup\{ \re z\ge 0,|z(x^*(\mu))|<|\im z|<|z(x^*(\mu))|+b\}
\end{array}
$$
for $b>0$.
We assume
\begin{description}
\item[(A5)]
For any $\mu>0$ small, there exist positive constants $\rho(\mu)$,  $\theta(\mu)$ and $b(\mu)$ such that
$D(\rho(\mu), \theta(\mu))\cap \{z\in\C;\im z>0\}$ contains no turning point and its image by the map $x\mapsto z(x)$
includes $G(b(\mu))$.

\end{description}

%\begin{description}
%\item[(A4)]
%There exist constants $C_0>0,C_1>0$, $m\in \R$, $\kappa>0$ and $\rho\ge 0$ such that
%$A(x)$ and its derivative behave, as $|x|\to \infty$ in $D$,  like
%$$
%A(x)= C_0x^me^{-\kappa x^\rho}+o(x^me^{-\kappa x^\rho}),\quad A'(x)\sim C_1x^{m+\rho-1}e^{-\kappa x^\rho}.
%$$
%\end{description}

%on  the asymptotic behavior of the complex turning points, i.e. the zero of $A(x)^2+\l^2$ near the real axis.
%\begin{description}
%\item[(A4)]
%There exist $c>0$ and $0\le\nu<1$ such that $A(x)^2+\l^2$ and $A(x)^2-\mu^2$
%do not vanish in $\{x\in D; 0<\im\, x<cA(\re \, x)^\nu\}$  for any small positive $\l$ and $\mu$.
%\item[(A5)]
%There exist $\kappa>0$ and $c_1, c_2>0$ such that 
%$$
%c_1\mu^{1+\kappa}\le -A'(x^*(\mu))\le c_2\mu
%$$
%for sufficiently small $\mu>0$.
%\end{description}
%\begin{example}
%The functions $(1+x^2)^{-m/2}$, $m=1,2,\ldots$ satisfy the conditions (A4) and (A5) with $\nu=0$, $\kappa=\frac 1{2m}$ and 
%$e^{-x^{2m}}$, $m=1,2,\ldots$ with arbitrarily small $\nu$ and $\kappa=0$.
%\end{example}

%Remark that when (A1) and (A4) are satisfied with $\rho=0$, $A(x)$ is of polynomial order and $m$ should be smaller than $-1$.

\begin{theorem}
\label{ev2}
Assume (A1), (A2), (A3) and (A5). Then
there exists a function $m(\mu,\e)$ with asymptotic behavior
$$
m(\mu,\e)=-1+\ord\left (\frac\e{b(\mu)}\right ) 
$$
%$$
%m(\mu,\e)=-1+\ord\left (\frac\e{\mu^{1+3\nu/2+\kappa/2}}\right ) 
%$$
as $\e\to +0$ and $\mu\to +0$ with  $\frac\e{b(\mu)}\to 0$,
such that $\l=i\mu$ is an eigenvalue of $L$ if and only if \eqref{BS} holds.
\end{theorem}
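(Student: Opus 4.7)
The plan is to adapt the proof of Theorem \ref{ev1} to the degenerate regime $\mu\to +0$, in which the turning points $\pm x^*(\mu)$ escape to infinity and the admissible working region shrinks. The key idea is to work in the conformal coordinate $z(x)$ of \eqref{zx} and to invoke assumption (A5), which provides, for each small $\mu>0$, a complex domain whose $z$-image contains $G(b(\mu))$ and on which all the exact WKB solutions of section~3 are simultaneously defined and uniformly controllable.

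First I would construct four families of exact WKB solutions to the eigenvalue equation $Lu=i\mu u$: two solutions $w_R^\pm$ in a neighborhood of $x=+\infty$ characterized by the asymptotic behavior $\sim e^{\pm z(x)/\e}$, two solutions $w_L^\pm$ in a neighborhood of $x=-\infty$ defined analogously, and two solutions $w_C^\pm$ holomorphic on a complex neighborhood of the real interval $(-x^*(\mu), x^*(\mu))$, normalized at $x=0$. By (A5) the four families have overlapping domains whose $z$-images jointly cover $G(b(\mu))$, so that pairwise Wronskians are meaningful.

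Next, I would recast the eigenvalue condition as the vanishing of the Wronskian of the two $L^2$-decaying solutions at $\pm\infty$. Decomposing each of them in the basis $\{w_C^+,w_C^-\}$ via connection formulas through the turning points $\pm x^*(\mu)$---which are locally modeled by the Airy equation and yield explicit connection matrices as in \cite{fln}---the eigenvalue condition rearranges into the form $m(\mu,\e)e^{2iS(\mu)/\e}=1$. The phase $S(\mu)$ agrees with the action \eqref{action}, arising as the WKB phase accumulated along $(-x^*(\mu),x^*(\mu))$, and $m(\mu,\e)$ emerges as a ratio of exact connection coefficients whose leading value $-1$ is contributed by the pair of $-i$ factors from the two Airy matchings.

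The main obstacle is the uniform estimate $m(\mu,\e)=-1+\ord(\e/b(\mu))$. In the nondegenerate setting of Theorem \ref{ev1} the analogous error is $\ord(\e)$, produced by (i) the Volterra iteration that generates the exact WKB amplitudes, whose operator norm on a domain of diameter $d$ is of order $\e/d$, and (ii) the Airy matching near each turning point, controlled on a disk whose radius is bounded below by the distance to the boundary of the admissible domain. Under (A5) the effective domain width in the $z$-variable is $b(\mu)$ instead of $\ord(1)$, so both mechanisms produce corrections of size $\ord(\e/b(\mu))$. The technical heart of the proof is to verify that these bounds indeed hold uniformly in $\mu$---in particular that no spurious factor of $x^*(\mu)\to\infty$ creeps in through the change of variables between $x$ and $z$---and that the local estimates combine without cancellation losses when forming the Wronskian ratio that defines $m(\mu,\e)$.
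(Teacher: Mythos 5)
Your overall architecture — recast the eigenvalue condition as a Wronskian vanishing, decompose the $L^2$ solutions at $\pm\infty$ in a central basis, extract the factor $e^{2iS(\mu)/\epsilon}$, and identify $b(\mu)$ from (A5) as the scale controlling the error — matches the paper's. But the mechanism you invoke to pass through the turning points, namely local Airy connection matrices ``as in \cite{fln}'', is \emph{not} the paper's method and is not in \cite{fln} either: the exact WKB method is built precisely to avoid any local Airy analysis. In the paper the two turning points never enter through a local model; instead, the exact WKB solutions $\mathbf v_1,\dots,\mathbf v_4, \widetilde{\mathbf v}_2,\widetilde{\mathbf v}_3$ all live on the turning-point--free region, and the passage from $D_l$ to $D_d$ (and from $D_u$ to $D_r$) is handled by analytically continuing $\mathbf v_1$ (resp.\ $\mathbf v_4$) across the branch cut: the quarter-root factor $H$ picks up $H(x)=iH(\hat x)$, the phase flips sign, and the symbol swaps $\mathbf w^-\leftrightarrow\mathbf w^+$, yielding $\mathbf v_1=-i\,\mathbf u^+(\hat x;\alpha,\hat x_1)$. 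After this, \emph{all} four Wronskians entering $m(\mu,\epsilon)$ in \eqref{mmh} are of the form $w^+_{\mathrm{even}}$ evaluated along progressive paths, and the leading value $-1$ comes from these branch-cut signs, not from any Airy $(-i)$ factor.

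This difference matters precisely at the step you flag as ``the technical heart.'' With Airy matching one must control the overlap region between the local Airy parametrix and the WKB regime; in the degenerate limit $\mu\to 0$ the turning points escape to infinity and the admissible domain in $z$ shrinks to a tube of width $b(\mu)$, and it is genuinely unclear that the Airy matching error would come out as $\ord(\epsilon/b(\mu))$ rather than, say, a power of $\epsilon/b(\mu)^{3/2}$ or worse. Your proposal asserts the bound but does not derive it, and the derivation is not routine. The paper avoids the issue entirely: Remark \ref{rho} shows that the constant in the exact WKB contraction estimate $\Vert J\circ I_+(f)\Vert\le C\epsilon\Vert f\Vert$ scales as $C=C'/\rho$, where $\rho$ is the $z$-distance from the integration contour to the nearest turning point. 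Assumption (A5) guarantees this distance is $\gtrsim b(\mu)$ along contours threaded through the tube $G(b(\mu))$ from $z(x_2)$ out to $z(\hat x_4)=\infty$ (taking $x_1=-\infty$, $x_4=+\infty$ as in Remark \ref{infty}); hence every $w^+_{\mathrm{even}}$ in \eqref{mmh} is $1+\ord(\epsilon/b(\mu))$ and the conclusion follows at once. In short, the gap in your proposal is exactly the uniformity of the Airy matching in the shrinking domain; the paper's exact-WKB route makes that difficulty disappear.
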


\begin{figure}[htbp]
\begin{center}
\includegraphics[width=90mm]{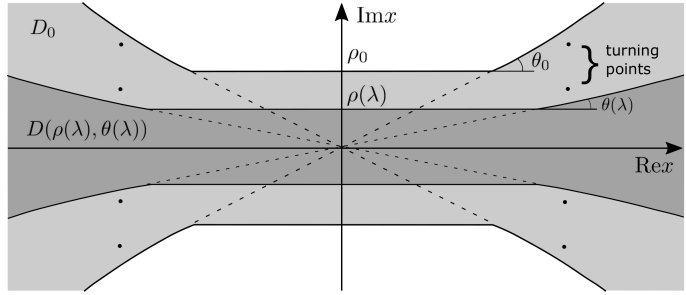}
\end{center}
\caption{The domains $D_0$ and $D(\rho(\l),\theta(\l))$}
\end{figure}

\begin{figure}[htbp]
\begin{center}
\includegraphics[width=90mm]{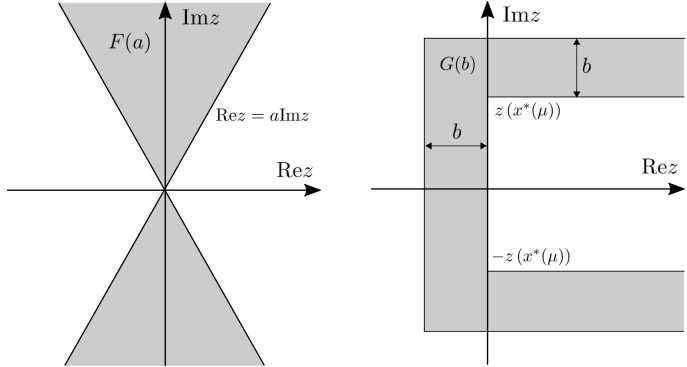}
\end{center}
\caption{The domains $F(a)$ and $G(b)$}
\end{figure}

\begin{example}
\label{pol}
Suppose $A(x)$ satifies (A1), (A2) and 
$$
A(x)={C}x^{-d}+r(x) \text{ for } x>1,
$$ 
with $d>1$, {$C>0$} and $r(x)=o(|x|^{-d-1})$, $r'(x)=o(|x|^{-d-2})$ as $\re x\to \infty$ in $D_0$.
Then, one can take $a(\lambda)=c\lambda$ and  $b(\mu)=c\mu^{1+\frac 1{2d}}$ for some positive constant $c$.
\end{example}
\begin{proof}
For simplicity, $C$ is assumed to be 1 below.
Let $\l>0$ small. Since $r(x)=o(|x|^{-d-1})$, we see by Rouch\'e's theorem that the turning points in this domain are $\pm e^{(k+1/2)\pi i/d}\l^{-1/d}+o(1)$ 
with integers $k$ satisfying 
$|k+1/2|\pi <m\theta_0$  and  the nearest turning points to the real axis are
$\pm e^{\pm \pi i/(2d)}\l^{-1/d}+o(1)$.
Hence the domain $D(\rho,\theta)$ contains no turning point for small enough $\l$-independent $\rho$ and any $\theta $ smaller than $\pi/(2d)$.
Its image  $z(D(\rho,\theta))$ includes the domain $F(a(\l))$ with $a(\l)=c\l$  for some poisitive constant $c$.
In fact, for $x\in D(\rho_1,\theta_1)$ with small enough $\rho_1, \theta_1$, 
$$
\begin{array}{rl}
|\re \,z(x)|=&\l|\im\, x|\int_0^1\re\,\sqrt{1+\l^{-2}A(\re\, x+is\im\,x)^2}ds \\[8pt]
\ge &\frac 12\l|\im\, x|,
\end{array}
$$
$$
\begin{array}{rl}
|\im\,z(x)|=&|\re \,x\int_0^{1}\sqrt{A(s\re \,x)^2+\l^2}dt \\[8pt]
&-\l\im\, x\int_0^1\im\,\sqrt{1+\l^{-2}A(\re\, x+is\im\,x)^2}ds|  \\[8pt]
& \le 2A_0 |\re\, x|.
\end{array}
$$
Hence  $F(c\l)\subset z(D(\rho_1,\theta_1))$  for $c=\frac{|\tan\theta_1|}{4A_0}$.

For $\l=i\mu$ with $\mu>0$ small,
the turning points in this domain are 
$\pm e^{k\pi i/d}\mu^{-1/d}+o(1)$ with integers $k$ satisfying 
$|k|\pi <d\theta_0$. In particular $x^*(\mu)=\mu^{-1/d}+o(1)$, and
the nearest turning points  to the real axis (apart from the real ones $\pm x^*(\mu)$) are $\pm e^{\pm\pi i/d}\mu^{-1/d}+o(1)$.
Hence $D(\rho(\mu), \theta(\mu))\cap \{z\in\C;\im z>0\}$  is turning point free for $\mu$-independent $\rho$ and any $\theta$ smaller than $\pi/d$.
Its image by the map $z(x)$ includes the domain $G(b(\l))$ with $b(\mu)=c\mu^{1+\frac 1{2d}}$ for some positive constant $c$. To see this,
we observe that
$$
\int_{x^*(\mu)}^x\sqrt{A(t)^2-\mu^2}dt=\int_0^{x-x^*(\mu)}\sqrt{A(x^*(\mu)+s)^2-\mu^2}ds
$$
and that, since $A'(x)=-dx^{-d-1}+o(|x|^{-d-1})$ as $x\to \infty$, the Taylor expansion of $A(x^*(\mu)+s)^2-\mu^2$ in $s$ gives 
$$
\sqrt{A(x^*(\mu)+s)^2-\mu^2}\sim \sqrt{2d}\,\mu^{1+\frac 1{2d}}(-s)^{1/2},
$$
as $\mu^{2+\frac 1{d}}s\to 0$. This means that, when $x$ runs from a point $ic_0$ to the right along a line $\im\, x=c_0$ for a small but $\mu$-independent positive $c_0$,
its image $z(x)$ goes from $z(ic_0)$ near $z=0$ with $\re\, z<0$ first to the upper direction and then changes the direction to the right around $z(x^*(\mu))$ 
keeping a distance of order $\mu^{1+\frac 1{2d}}$ from $z(x^*(\mu))$,
and finally goes to infinity above the
horizontal line $\im z=\im z(x^*(\mu))$.
\end{proof}

\begin{example}
\label{exp}
Suppose $A(x)$ satifies (A1), (A2) and 
$$
A(x)={C}e^{-x^\sigma}+r(x) \text{ for } x>1,
$$ 
with $\sigma>0$, $C>0$ and $r(x)=o(e^{-x^\sigma})$, $r'(x)=o(x^{\sigma-1}e^{-x^\sigma})$ as $\re x\to \infty$ in $D_0$.
Then, one can take $a(\l)=c\frac\l {\log\frac 1\l}$ and  $b(\mu)=c\mu(\log\frac 1\mu)^{-1+\frac 1{\sigma}}$ for some positive constant $c$.
\end{example}

\begin{proof}
Here also $C$ is assumed to be 1.

For $\l>0$ small, i.e. $L=\log\frac 1\l$ large, the turning points in $D_0$ are $\pm \left (L+(k+\frac 12)\pi i\right )^{\frac 1{\sigma}}+o(L^{\frac 1\sigma-1})$ 
with some integers $k$ (the distance between two neighboring turning points is of order $L^{\frac 1\sigma-1}$) and  the nearest turning points to the real axis are
$\pm L^{\frac 1{\sigma}}(1\pm\frac{\pi}{2\sigma L}i)+o(L^{\frac 1{\sigma}-1})$.
Hence the domain $D(\rho(\l),\theta(\l))$ has no turning point for $\rho(\l)=\frac {\pi}{4\sigma}L^{\frac 1{\sigma}-1}$ and $\theta (\l)=\frac{\pi }{4\sigma}L^{-1}$. Then we see as in the previous example that
its image by the map $z(x)$ includes the domain $F(a(\l))$ with $a(\l)=c\frac\l L=c\frac\l {\log\frac 1\l}$ for some poisitive constant $c$.

For $\l=i\mu$ with $\mu>0$ small i.e. $M=\log\frac 1\mu$ large,
the turning points in $D_0$ are $\pm \left (M+k\pi i\right )^{\frac 1{\sigma}}+o(M^{\frac 1\sigma-1})$ 
for some integers $k$ and
the nearest turning points  to the real axis (apart from the real ones $\pm x^*(\mu)$) are 
$\pm M^{\frac 1{\sigma}}(1\pm\frac{\pi}{\sigma M}i)+o(M^{\frac 1{\sigma}-1})$.
Hence $D(\rho(\mu), \theta(\mu))\cap \{z\in\C;\im z>0\}$  has no turning point  for $\rho(\mu)=\frac {\pi}{2\sigma}M^{\frac 1{\sigma}-1}$ and  $\theta (\mu)=\frac{\pi }{2\sigma}M^{-1}$.
As in the previous example, we see that its image by the map $z(x)$ includes the domain $G(b(\l))$ with $b(\mu)=c\mu(\log\frac 1\mu)^{-1+\frac 1{\sigma}}$. In fact we have in this case
$$
\sqrt{A(x^*(\mu)+s)^2-\mu^2}\sim\sqrt{2\sigma} \mu M^{\frac 12-\frac 1{2\sigma}}(-s)^{1/2},
$$
and hence
$|\int_{x^*(\mu)}^x\sqrt{A(t)^2-\mu^2}dt|$ is of order $\mu M^{-1+\frac 1{\sigma}}$ when $|x-x^*(\mu)|$ is of order $M^{\frac 1{\sigma}-1}$.
\end{proof}

\begin{corollary}
\label{2.5}
Assume  (A1), (A2) and (A4) with $a(\l)\ge c\l^{\beta}$ for some $\beta>0$ and $c>0$.
Then the reflection coefficient $R(\l,\epsilon)$ is exponentially small with respect to $\epsilon$
uniformly for  $|\l|\ge \epsilon^\alpha$ with any $\alpha<1/\beta$.
In particular, for potentials of Example \ref{pol} and \ref{exp},
$R(\l,\epsilon)$ is exponentially small with respect to $\epsilon$
uniformly for  $|\l|\ge \epsilon^\alpha$ with any $\alpha<1$.
\end{corollary}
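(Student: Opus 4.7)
The plan is to split the range $|\l|\ge\e^\alpha$ into an outer piece $|\l|\ge\delta_0$ and an inner piece $\e^\alpha\le|\l|\le\delta_0$, where $\delta_0>0$ is a fixed small constant chosen so that the uniform asymptotic bound of Theorem~\ref{ref2} applies throughout $0<|\l|\le\delta_0$. On the outer piece, Theorem~\ref{ref1} immediately gives $|R(\l,\e)|=\ord(e^{-\sigma/\e})$ for some $\sigma>0$ independent of $\e$, which is exponentially small.

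On the inner piece, the hypothesis $a(\l)\ge c\l^\beta$ combined with $|\l|\ge\e^\alpha$ yields
$$
\frac{a(\l)}\e\;\ge\;\frac{c\l^\beta}\e\;\ge\;c\,\e^{\alpha\beta-1}\;=\;\frac{c}{\e^{1-\alpha\beta}}.
$$
Since $\alpha<1/\beta$ forces $1-\alpha\beta>0$, the right-hand side tends to $+\infty$ as $\e\to 0$; equivalently, $\e/a(\l)\to 0$ uniformly in $\l$ throughout the inner range, which is precisely the regime of validity of Theorem~\ref{ref2}. That theorem then delivers
$$
|R(\l,\e)|\;=\;\ord\bigl(e^{-ca(\l)/\e}\bigr)\;=\;\ord\bigl(e^{-c/\e^{1-\alpha\beta}}\bigr),
$$
which is again exponentially small in $\e$. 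Negative $\l$ is handled by the symmetry $A(x)=A(-x)$ from (A2), which produces a mirror-symmetric configuration of turning points and cones, so the whole argument goes through with obvious sign changes.

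For the two examples it only remains to verify a power-law lower bound on $a(\l)$. In Example~\ref{pol} one has $a(\l)=c\l$, so $\beta=1$ is admissible and the assertion holds for every $\alpha<1$. In Example~\ref{exp}, $a(\l)=c\l/\log(1/\l)$ is not itself a pure power, but for any fixed $\beta>1$ the inequality $\l/\log(1/\l)\ge c_\beta\l^\beta$ is satisfied for $\l$ sufficiently small (because $\l^{\beta-1}\log(1/\l)\to 0$ as $\l\to 0$); applying the first part of the corollary with that $\beta$ and then letting $\beta\to 1^+$ covers every $\alpha<1$.

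The only genuinely delicate point I foresee is the uniformity of the constants in Theorem~\ref{ref2} across the moving range $[\e^\alpha,\delta_0]$ rather than just pointwise as $\l\to 0$. This is, however, already built into the exact WKB construction of Section~4: the constants there depend only on geometric quantities associated with the domain $D(\rho(\l),\theta(\l))$ and the cone $F(a(\l))$, and the error estimates remain valid uniformly as long as $\e/a(\l)$ stays uniformly small, which is guaranteed by the power-law lower bound on $a(\l)$.
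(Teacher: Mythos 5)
The paper does not record an explicit proof of this corollary; it is left as an immediate consequence of Theorems \ref{ref1} and \ref{ref2}, which is exactly how you argue. Your decomposition into $|\l|\ge\delta_0$ (Theorem \ref{ref1}) and $\e^\alpha\le|\l|\le\delta_0$ (Theorem \ref{ref2}), the verification that $\e/a(\l)\le c^{-1}\e^{1-\alpha\beta}\to 0$ uniformly when $\alpha<1/\beta$, the appeal to the symmetry $A(-x)=A(x)$ for $\l<0$, and the limiting argument $\beta\to 1^+$ to absorb the logarithm in Example \ref{exp} are all correct and match the intended reading of the corollary.
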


Suppose that $\frac\e{b(\mu)}$ is small enough and let $r(\mu,\epsilon):=\log \left (-\frac 1{m(\mu,\e)}\right )$
where the logarithm is defined near 1 with $\log 1=0$.
Then the  Bohr-Sommerfeld quantization condition \eqref{BS} is equivalent to
\begin{equation}
\label{mun}
S(\mu)=(2n+1)\pi \epsilon+i\epsilon r(\mu,\epsilon),
\end{equation}
for some integer $n$. Let then $\mu_n=\mu_n(\epsilon)$ be the (unique) root of \eq{mun}
and let
$\mu_n^{\rm WKB}=\mu_n^{\rm WKB}(\epsilon)$ be the root of the equation
\begin{equation}
\label{mukwkb}
S(\mu)=(2n+1)\pi \epsilon.
\end{equation}

By the previous theorem, we have, as $\epsilon\to 0$ with $\frac{\epsilon^2}{b(\mu_n)}\to 0$,
$$
S(\mu_n)-S(\mu_n^{\rm WKB})=i\epsilon r(\mu_n,\epsilon)=\ord\left(\frac{\epsilon^2}{b(\mu_n)}\right ).
$$
For $\mu z=A(x)$, one has
$$S'(\mu)=2\mu\int_1^{A_0/\mu}\frac {dz}{\sqrt{z^2-1}|A'(x)|}.
$$
In the case of Example \ref{pol}, $A'(x)\sim -d\mu^{1+\frac 1d}z^{1+\frac 1d}$, and  in the case of Example \ref{exp}
$A'(x)\sim -\sigma\mu z(\log\frac 1{\mu z})^{1-\frac 1\sigma}$, and hence
we have, 
$$
|S'(\mu)|\ge 
\left\{
\begin{array}{l}
c\mu^{-\frac 1d}\text{ (Example \ref{pol})}, \\[8pt]
c\left (\log \frac 1\mu\right )^{-1+\frac 1\sigma}\text{ (Example \ref{exp})},
\end{array}
\right.
$$
for some positive constant $c$. Hence we have the following corollary. 

\begin{corollary}
\label{2.7}
Assume  (A1), (A2), (A3) and (A5) with $b(\mu)\ge c\mu^{\beta}$ for some $\beta>0$ and assume 
also $|S'(\mu)|\ge c\mu^{\gamma}$ for some $c>0$. Then
\begin{equation}
\label{muk}
|\mu_n(\epsilon)-\mu_n^{\rm WKB}(\epsilon)|=o(\epsilon)
\end{equation}
uniformly for  $|\mu_n|\ge \epsilon^\alpha$ with any $\alpha<1/(\beta+\gamma)$.
In particular, 
for potentials of Example \ref{pol}, \eq{muk} holds
uniformly for $|\mu|\ge \epsilon^\alpha$ with any $\alpha<\frac d{d+1}$.
For potentials of Example \ref{exp},  \eq{muk} holds
uniformly for $|\mu|\ge \epsilon^\alpha$ with any $\alpha<1$.
\end{corollary}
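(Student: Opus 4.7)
\textbf{Proof plan for Corollary \ref{2.7}.} The plan is to convert the already-established error bound on the Bohr--Sommerfeld quantization into an error bound on the eigenvalues themselves, by inverting the action function $S$ with the help of the lower bound on $|S'|$.

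\textbf{Step 1: set up the difference equation.} The discussion preceding the corollary has already reduced the eigenvalue equation \eqref{BS} to \eqref{mun}. Subtracting \eqref{mukwkb} from \eqref{mun} gives
\begin{equation*}
S(\mu_n)-S(\mu_n^{\rm WKB})=i\epsilon r(\mu_n,\epsilon).
\end{equation*}
By Theorem \ref{ev2}, $m(\mu,\epsilon)=-1+\ord(\epsilon/b(\mu))$, so $r(\mu_n,\epsilon)=\log(-1/m(\mu_n,\epsilon))=\ord(\epsilon/b(\mu_n))$, giving
\begin{equation*}
|S(\mu_n)-S(\mu_n^{\rm WKB})|=\ord\!\left(\frac{\epsilon^2}{b(\mu_n)}\right)\le C\frac{\epsilon^2}{\mu_n^{\beta}}
\end{equation*}
under the hypothesis $b(\mu)\ge c\mu^\beta$.

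\textbf{Step 2: invert $S$ via the mean value theorem.} Since $S$ is smooth and real on $(0,A_0)$, the mean value theorem yields $\xi$ between $\mu_n$ and $\mu_n^{\rm WKB}$ with $|S(\mu_n)-S(\mu_n^{\rm WKB})|=|S'(\xi)|\cdot|\mu_n-\mu_n^{\rm WKB}|$. Combining with Step 1 and the assumption $|S'(\mu)|\ge c\mu^\gamma$, one obtains
\begin{equation*}
|\mu_n-\mu_n^{\rm WKB}|\le\frac{C\epsilon^2}{b(\mu_n)\,|S'(\xi)|}\le \frac{C\epsilon^2}{\mu_n^{\beta}\,\xi^{\gamma}}.
\end{equation*}
Since $\mu_n$ and $\mu_n^{\rm WKB}$ are asymptotically close (a first, rougher application of the same inequality, using any crude positive lower bound for $|S'|$ on compact subintervals, shows that $\mu_n^{\rm WKB}/\mu_n\to 1$), the intermediate point $\xi$ satisfies $\xi\asymp\mu_n$, so $\xi^\gamma\asymp\mu_n^\gamma$ uniformly. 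Hence
\begin{equation*}
|\mu_n-\mu_n^{\rm WKB}|\le C'\frac{\epsilon^2}{\mu_n^{\beta+\gamma}}.
\end{equation*}

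\textbf{Step 3: conclusion.} If $\mu_n\ge\epsilon^\alpha$ with $\alpha<1/(\beta+\gamma)$, then $\mu_n^{\beta+\gamma}\ge\epsilon^{\alpha(\beta+\gamma)}$, so the right-hand side is bounded by $C'\epsilon^{2-\alpha(\beta+\gamma)}$, and the exponent $2-\alpha(\beta+\gamma)>1$ yields \eqref{muk}. The specific ranges $\alpha<d/(d+1)$ (Example \ref{pol}) and $\alpha<1$ (Example \ref{exp}) are then obtained by substituting the exponents $\beta$ and $\gamma$ computed immediately above the corollary statement.

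\textbf{Main obstacle.} The only genuinely delicate point is the bootstrap in Step 2: one needs $\xi$ to lie in the regime where the lower bound $|S'(\xi)|\ge c\xi^\gamma$ is effective and comparable to $c\mu_n^\gamma$. This is handled by a two-stage argument—first a rough estimate (using only the non-vanishing of $|S'|$ on any compact subinterval of $(0,A_0)$, together with Theorem \ref{ev2}, to control $\mu_n^{\rm WKB}-\mu_n$) to confine $\xi$ to a neighbourhood of $\mu_n$ of relative size $o(1)$, then the sharp estimate in that neighbourhood. The rest is straightforward arithmetic with the exponents given in Examples \ref{pol} and \ref{exp}.
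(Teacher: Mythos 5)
Your proof does exactly what the paper does: the paper's ``proof'' of Corollary~\ref{2.7} is contained entirely in the displayed computations that precede its statement — the estimate $S(\mu_n)-S(\mu_n^{\rm WKB})=\ord(\epsilon^2/b(\mu_n))$ from Theorem~\ref{ev2} plus the lower bounds on $|S'(\mu)|$ — followed by ``Hence we have the following corollary.'' The mean value theorem step you insert between these two facts is exactly the intended (and only reasonable) bridge, so the overall strategy matches. Three points, however, deserve more care.

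First, the applicability of Theorem~\ref{ev2} itself imposes a constraint you omit. That theorem is only asserted in the regime $\epsilon/b(\mu)\to 0$, which for $\mu_n\ge\epsilon^\alpha$ and $b\ge c\mu^\beta$ requires $\alpha<1/\beta$. Thus the usable range is in fact $\alpha<\min\bigl(1/\beta,\,1/(\beta+\gamma)\bigr)$. When $\gamma\ge 0$ this is the same as $1/(\beta+\gamma)$, but when $\gamma<0$ — which is exactly the situation for Example~\ref{pol}, where $|S'(\mu)|\ge c\mu^{-1/d}$ forces $\gamma=-1/d$ — one has $1/(\beta+\gamma)>1/\beta$, and the constraint $\alpha<1/\beta$ is the binding one. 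Your statement ``$\alpha<1/(\beta+\gamma)$'' would allow $\alpha$ in the window $(1/\beta,1/(\beta+\gamma))$, where Step~1 of your argument is not justified.

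Second, and related, your Step~3 claim that the specific ranges ``are then obtained by substituting the exponents'' does not check out for Example~\ref{pol}: substituting $\beta=1+\tfrac1{2d}$ and $\gamma=-\tfrac1d$ gives $1/(\beta+\gamma)=2d/(2d-1)$, not $d/(d+1)$. Taking the applicability constraint into account gives $1/\beta=2d/(2d+1)$, still not $d/(d+1)$. Since $d/(d+1)<2d/(2d+1)<2d/(2d-1)$, the paper's stated range is covered by (indeed weaker than) what the argument proves, so the corollary as written is not contradicted; but the claimed one-line ``substitution'' is not what is happening, and you should have flagged the mismatch rather than assert it.

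Third, the bootstrap in Step~2 — confining $\xi$ to be comparable to $\mu_n$ — cannot be run with ``any crude positive lower bound for $|S'|$ on compact subintervals,'' because the whole point of the corollary is the regime where $\mu_n\to 0$ with $\epsilon$, in which $\mu_n$ leaves every fixed compact subinterval of $(0,A_0]$. You must use the global hypothesis $|S'(\mu)|\ge c\mu^\gamma$ itself: from $C\epsilon^2/\mu_n^\beta\ge|S(\mu_n)-S(\mu_n^{\rm WKB})|=\int_{\mu_*}^{\mu^*}|S'|$ one deduces $\mu^*-\mu_*=o(\mu_n)$ provided $\alpha$ also satisfies $\alpha<2/(1+\beta+\gamma)$; one should check that this is compatible with the constraints above (it is for the two examples). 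Neither the paper nor your write-up spells this out, but since you explicitly identify the bootstrap as the ``only genuinely delicate point,'' you should also observe that the compact-subinterval argument fails precisely in the regime of interest.
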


In section 6, it will be Theorem \ref{ref1} and the above corollary that 
 will be applied to the focusing non-linear Schr\"odinger equation.

\section{Exact WKB method for the Zakharov-Shabat system}
In this section, we briefly review the exact WKB method applied to our operator $L$.
Here we only assume (A1).

The eigenvalue problem of the operator $L$ can be rewritten in the form
\begin{equation}
\label{M}
\frac \e i\frac d{dx}{\bf u}=M(x,\lambda)
{\bf u},
\quad
M(x,\lambda)=
\left(\begin{array}{cc}
-\l & -iA(x)\\
iA(x) & \l
\end{array}\right)
\end{equation} 
where the unknown function ${\bf u}(x,\epsilon)={}^t(u_1(x,\epsilon), u_2(x,\epsilon))$ is a
column vector,  $\epsilon$ is
a small positive parameter, $\l$ is a complex spectral parameter.

The zeros of $\det M(x,\l)=-A(x)^2-\l^2$ are called {\it turning points}.
Let $\Omega$ be a connected subdomain of $D$ free from turning point.
Then the map $x\mapsto z(x;\alpha)$ defined by
\begin{equation}
\label{phase}
z(x;\alpha) = i\int_\alpha^x \sqrt{
A(t)^2+\l^2}\;dt\,,  
\end{equation}
for a fixed point $\alpha$ is conformal from $\Omega$ to $z(\Omega,\alpha)$.

We also define a function 
$$
H(x)=\left (\frac{A(x)-i\l}{A(x)+i\l}\right )^{1/4},
$$
which is holomorphic in $\Omega$ but multivalued around turning points,
and a matrix valued function
$$
Q(x)=\left (
\begin{array}{cc}
1 & 1 \\
-1 & 1
\end{array}
\right )
\left (
\begin{array}{cc}
H(x)^{-1} & H(x)^{-1} \\
iH(x) & -iH(x)
\end{array}
\right ).
$$

Our WKB solutions are  of the form 
\begin{equation}
{\bf u}_\pm(x,\e)=e^{\pm z(x;\alpha)/\e}Q(x)
\left (
\begin{array}{cc}
0 & 1 \\
1 & 0
\end{array}
\right )^{\frac {1\pm 1}2}
{\bf w}^\pm(z(x;\alpha),\e).
\end{equation}
In the usual WKB theory,
the vector valued symbols ${\bf w}^\pm(z,\e)$ are constructed 
as a power series in the parameter $\e$, which is in general divergent.
Here we use the so-called {\it exact} WKB method along the lines of G\'erard-Grigis \cite{gg}
and Fujii\'e-Lasser-N\'ed\'elec \cite{fln}.
This method consists in the resummation of this divergent series in the following way.

We take a point $x_0$ in $\Omega$ and construct ${\bf w}^\pm(z,\e)$ of the form
\begin{equation}
{\bf w}^\pm(z,\e)=\sum_{n=0}^\infty
{\bf w}_n^\pm(z,\e)=\sum_{n=0}^\infty \left (
\begin{array}{c}
w^\pm_{2n} \\
w^\pm_{2n-1}
\end{array}
\right )=:\left (
\begin{array}{c}
w^\pm_{\rm even} \\
w^\pm_{\rm odd}
\end{array}
\right ),
\end{equation}
where the scalar functions $w_n^\pm$ are defined inductively by
\begin{equation}
w_{-1}^\pm\equiv 0,\quad w_0^\pm\equiv 1,
\end{equation}
and for $n\ge 1$,
\begin{equation}
\left\{
\begin{array}{rl}
\label{recurrence}
\displaystyle{
\frac{d}{dz}w_{2n}^\pm}&=\displaystyle{\,\,\,\H(z)\,w_{2n-1}^\pm}, \\[8pt]
\displaystyle{\left (\frac{d}{dz}\pm\frac 2\e\right )w_{2n-1}^\pm}&=\,\,\,
\displaystyle{\H(z)\,w_{2n-2}^\pm.}
\end{array}
\right.
\end{equation}
with initial conditions at $z_0=z(x_0)$
\begin{equation}
\label{initial}
w_n^\pm|_{z=z_0}=0 \quad(n\ge 1).
\end{equation}
Here we defined
$$
\H(z):=\frac{H'_z(z)}{H(z)}=\frac d{dz}\log H(z)=\frac {\l}2\frac {A'_x(x)}{(A(x)^2+\l^2)^{3/2}}.
$$
Notice that $\H(z)$ is holomorphic in $z(\Omega)$, and if $\beta$ is a turning point of order $k$, it behaves, as $z\to \beta$, like
\begin{equation}
\label{pole}
\H(z)=\frac {\mp ik}{(2k+4)(z-\beta)}(1+\ord((z-\beta)^{2/(k+2)})),
\end{equation}
where $\mp$ corresponds to whether $\beta$ is zero of $A-i\l$ or $A+i\l$ ($\l\ne 0$).

The recurrence
equations  
 uniquely determine (at least in a neighborhood of $x_0$) the sequence of scalar
functions $\{w_n^\pm(z,\e;z_0)\}_{n=-1}^\infty$,  and hence the sequence of vector-valued
functions $\{{\bf w}_n^\pm(z,\e;z_0)\}_{n=0}^\infty$. 

The recursive relations  \eqref{recurrence} ,\eq{initial} can be written in the integral form:
\begin{equation}
\label{int}
w_{2n}=J(w_{2n-1}),\quad w_{2n-1}=I_\pm(w_{2n-2}) \quad(n\ge 1),
\end{equation}
with two integral operators
\begin{equation}
\label{J}
J(f)(z):=\int_{\Gamma}\H(\zeta)f(\zeta)d\zeta,
\end{equation}
\begin{equation}
\label{I}
I_\pm(f)(z):=\int_{\Gamma} e^{\pm2(\zeta-z)/\e}\H(\zeta)f(\zeta)d\zeta,
\end{equation}
where $\Gamma=\Gamma(z;z_0)$ is the image by $z=z(x;\alpha)$ of a path $\gamma(x;x_0)$ in $\Omega$ starting from $x_0$ and ending at $x$.

Thus we have constructed formal solutions, which we write
from now on 
${\bf u}_\pm(x,\e;\alpha,x_0)$, or simply ${\bf u}_\pm(x;\alpha,x_0)$
depending on a base point
$\alpha$ for the phase and a base point $x_0$ for the symbol.
This solution has the following important properties:

\begin{theorem}
\label{wkb}
(i) The formal series  are absolutely convergent in a
neighborhood of $x_0$.

(ii) Let $\Omega_\pm$ be the set of $x\in\Omega$ such that there exists a
path $\gamma(x;x_0)$ from $x_0$ to $x$ in $\Omega$ along which $\pm\re \,z(x)$ increases
strictly (we will call such a path {\it progressive}). Then we have for each $N\in\N$
$$
{\bf w}^\pm-\sum_{n=0}^{N-1}{\bf w}^\pm_{n}=\ord(\e^N),
$$
$$
w^\pm_{\rm even}-\sum_{n=0}^{N-1}w_{2n}^\pm=\ord(\e^N),\quad
w^\pm_{\rm odd}-\sum_{n=0}^{N-1}w_{2n+1}^\pm=\ord(\e^N),
$$
as $\e\to 0$, uniformly in any compact subset of $\Omega_\pm$.
In particular,  there we have
$$
w^\pm_{\rm even}=1+\ord (\e),\quad w^\pm_{\rm odd}=\ord (\e).
$$
(iii) The Wronskian of any two exact WKB solutions with different base points of amplitude are given by
\begin{equation}
\label{wronsky+-}
{\mathcal W}({\bf u}^+(x,\e;\alpha, x_0),{\bf u}^-(x,\e;\alpha,
x_1))=4i\,w_{\rm even}^+(z_1;z_0),
\end{equation}
\begin{equation}
{\mathcal W}({\bf u}^+(x,\e;\alpha, x_0),{\bf u}^+(x,\e;\alpha,
x_1))=-4i\,e^{2z_1/\e}w_{\rm odd}^+(z_1;z_0),
\end{equation}
where $z_j=z(x_j;\alpha)$ for $j=0,1$ and ${\mathcal W}({\bf f},{\bf g})$ is by definition the determinant of
the matrix
$({\bf f},{\bf g})$.
\end{theorem}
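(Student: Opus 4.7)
My plan is to prove the three parts in turn by direct analysis of the integral recursion \eqref{int}--\eqref{I}. In any turning-point-free subdomain, $\H(z)$ is holomorphic and bounded on compacts, so the only nontrivial factor to control is $e^{\pm 2(\zeta - z)/\e}$ in $I_\pm$. For part (i), I would work in a small disk $U$ around $x_0$ such that $z(U)$ has diameter $\delta$; setting $M_0 = \sup_{z(U)} |\H|$ and $K(\e) = e^{2\delta/\e}$ gives the crude estimates $|J(f)(z)| \le M_0 |z - z_0| \sup |f|$ and $|I_\pm(f)(z)| \le K(\e) M_0 |z - z_0| \sup |f|$. Induction on $n$ through the alternation $w_{2n} = J(w_{2n-1})$, $w_{2n-1} = I_\pm(w_{2n-2})$ then yields a majorant of the form $|w_n(z)| \le K(\e)^{\lceil n/2 \rceil} M_0^n |z - z_0|^n / n!$, so $\sum_n \|w_n\|_\infty$ is dominated by a convergent series for each fixed $\e$, giving absolute convergence in $U$.

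For part (ii), fix a compact $\mathcal K \subset \Omega_\pm$ and, for each $x \in \mathcal K$, a progressive path $\gamma(x; x_0)$ so that $|e^{\pm 2(\zeta - z)/\e}| \le 1$ along the corresponding $\Gamma$. The sharper bound comes from integration by parts against $\pm (\e/2) \partial_\zeta e^{\pm 2(\zeta - z)/\e}$:
\begin{equation*}
I_\pm(f)(z) = \pm \tfrac{\e}{2}\bigl[\H(z) f(z) - e^{\pm 2(z_0 - z)/\e} \H(z_0) f(z_0)\bigr] \mp \tfrac{\e}{2} \int_\Gamma e^{\pm 2(\zeta - z)/\e} (\H f)'(\zeta)\,d\zeta,
\end{equation*}
in which the boundary term at $z_0$ stays bounded (in fact exponentially small away from degenerate situations) and the remaining integral is again $\ord(\e)$ times $\sup|(\H f)'|$. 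Thus $I_\pm$ is $\ord(\e)$ in the $C^1$-topology. A joint induction controlling simultaneously $\|w_n\|_\infty$ and $\|w_n'\|_\infty$ (with $w_{2n}' = \H w_{2n-1}$ from $J$, and $w_{2n+1}'$ recovered from the differential form of \eqref{recurrence}) produces $|w_{2n}| \le C_n \e^n$ and $|w_{2n+1}| \le C_n \e^{n+1}$ uniformly on $\mathcal K$, with the $C_n$ summable after absorbing geometric factors. Summing the tail from $n = N$ gives the claimed $\ord(\e^N)$ remainders, and in particular $w^\pm_{\rm even} = 1 + \ord(\e)$, $w^\pm_{\rm odd} = \ord(\e)$.

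For part (iii), since $\mathrm{tr}\,M(x, \l) = 0$, Liouville's formula makes the Wronskian of any two solutions of \eqref{M} constant in $x$. I would evaluate at $x = x_1$: the initial condition \eqref{initial} gives ${\bf w}^-(z_1, \e; z_1) = {}^t(1, 0)$, hence ${\bf u}^-(x_1; \alpha, x_1) = e^{-z_1/\e} Q(x_1) \, {}^t(1, 0)$, while the swap matrix turns ${\bf u}^+(x_1; \alpha, x_0)$ into $e^{z_1/\e} Q(x_1) \, {}^t(w^+_{\rm odd}(z_1; z_0), w^+_{\rm even}(z_1; z_0))$. The prefactors $e^{\pm z_1/\e}$ cancel, and a direct $2 \times 2$ determinant together with $\det Q(x) = -4i$ yields the first identity. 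The second identity follows identically: at $x = x_1$ the base-point-$x_1$ solution ${\bf u}^+(x_1; \alpha, x_1)$ equals $e^{z_1/\e} Q(x_1) \, {}^t(0, 1)$, and reproduces the same computation with an extra factor $e^{2z_1/\e}$ surviving the cancellation.

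The main obstacle is the simultaneous $C^0$ and $C^1$ bookkeeping in part (ii): to obtain the $\ord(\e^{n+1})$ bound on $w_{2n+1} = I_\pm(w_{2n})$ one must already control $w_{2n}$ in $C^1$, yet the differential recursion only gives $w_{2n+1}' = \ord(\e^n)$, not $\ord(\e^{n+1})$. Tracking both orders through the iteration while keeping the geometric constants summable uniformly over compact subsets of $\Omega_\pm$ is the technical heart of the argument, and is where the strategy of \cite{gg} and \cite{fln} is invoked.
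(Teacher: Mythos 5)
Your outline follows the paper's proof quite closely in all three parts, and the strategy is sound. For part~(iii) in particular your computation is exactly what the paper does: Liouville (tracelessness of $M$) to make the Wronskian $x$-independent, evaluation at $x=x_1$ so that the initial conditions \eqref{initial} collapse ${\bf w}^-(z_1;z_1)$ to ${}^t(1,0)$, cancellation of the $e^{\pm z_1/\e}$ prefactors, and $\det Q=-4i$. Part~(i) is also the same argument: nested integration along a fixed short contour gives the factorial decay $|w_n|\le C(AL)^n/n!$, with the only $\e$-dependence in the constant $A$ through a factor $\max(1,e^{2L/\e})$.

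The one place you should close the loop is precisely the ``main obstacle'' you flag at the end of part~(ii). You are right that the naive bookkeeping stalls: the ODE $w_{2n+1}'=\mp\tfrac{2}{\e}w_{2n+1}+\H w_{2n}$ only yields $w_{2n+1}'=\ord(\e^n)$ even though $w_{2n+1}=\ord(\e^{n+1})$, so tracking $\sup|w_n|$ and $\sup|w_n'|$ as two separate sequences does not obviously iterate. The paper's resolution is to package both into the single $\e$-weighted norm
$$
\Vert f\Vert := \sup_{\Gamma(z;z_0)}|f| + \e\,\sup_{\Gamma(z;z_0)}|f'|.
$$
With this norm the integration-by-parts/Taylor identity you wrote down, combined with $J$, gives a single clean operator estimate $\Vert J\circ I_\pm(f)\Vert\le C\e\Vert f\Vert$ on progressive paths (where $|e^{\pm 2(\zeta-z)/\e}|\le 1$), and iterating $J\circ I_\pm$ then yields $\Vert w_{2n}^\pm\Vert\le (C\e)^n$ and $\Vert w_{2n+1}^\pm\Vert\le (C\e)^n\Vert w_1^\pm\Vert=\ord(\e^{n+1})$ in one stroke. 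This is exactly the content hidden in the phrase ``absorbing geometric factors'' in your write-up; stating the weighted norm explicitly is what makes the induction close, and it is also the structure that Remark~\ref{rho} later exploits (the constant $C$ scales like $1/\rho(\l)$ when a turning point approaches the contour). So: same route, but you should replace the informal ``joint $C^0$/$C^1$ bookkeeping'' by this one-line norm choice, which is the actual key lemma of G\'erard--Grigis.
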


\begin{proof}
The proof is almost the same as in references 
\cite{gg}, \cite{fln} and \cite{fr}, so we only point out the essence. 

The main point lies in  the ``transport" equation \eqref{recurrence} or equivalently \eqref{int}.
In the usual WKB construction in powers of $\epsilon$, each coefficient is determined as an integral  of the second derivative of the previous coefficient, which makes the sum divergent in general, whereas
in the above construction, $w_n$ is an integral of $w_{n-1}$ itself, which 
makes the sums $\sum w_{2n}$ and $\sum w_{2n+1}$ convergent.
More precisely,
let $K$ be any compact set in $z(\Omega)$. Then one has an estimate
$$
|w_n^\pm(z,z_0)|\le C(AL)^n/n!
$$
with some positive constant $C$ and
$$
L=\text{diam }(K),\quad A=\max_{z\in K}|\H(z)|\cdot\max(1,e^{2L/h}).
$$

As for the asymptotic property (ii), let us define a norm
$$
\Vert f\Vert :=\sup_{\Gamma(z;z_0)}|f|+\epsilon\sup_{\Gamma(z;z_0)}|f'|
$$
for holomorphic functions $f$ on $z(\Omega)$. For $I_+(f)$, we have, by a change of variable
$s=(t-z)/\epsilon$ and the Taylor expansion of $F(z+sh)$ at $z$ in the integral expression,
\begin{equation}
\label{taylor}
I_+(f)=\frac \e 2(1-e^{2(z_0-z)/\e})(\H f)(z)+\e^2\int^0_{(z_0-z)/\e}se^{2s}ds\int_0^1(\H f)'(z+s\e t)dt.
\end{equation}
It follows from the fact $\re \,z<\re \,z_0$ that
$$\sup_{\gamma(z;z_0)}|J\circ I_+(f)|\le C\e\Vert f\Vert.
$$
Moreover, using that $\frac d{dz}(J\circ I_+(f))=\H I_+(f)$, we obtain
\begin{equation}
\label{est}
\Vert J\circ I_+(f)\Vert\le C\e\Vert f\Vert.
\end{equation}
for some positive constant $C$. Hence we conclude that
$$
\Vert w_{2n}^+\Vert=\Vert (J\circ I_+)^n(1)\Vert\le (C\e)^{n},
$$
$$
\Vert w_{2n+1}^+\Vert=\Vert (J\circ I_+)^n(w_1^+)\Vert\le (C\e)^{n}\Vert w_1^+\Vert,
$$
which prove (ii).

It remains to check (iii). We only prove \eqref{wronsky+-}. From the fact that $\det Q=-4i$, we
immediately have
$$
\begin{array}{l}
{\mathcal W}({\bf u}^+(x,\e;\alpha, x_0),{\bf u}^-(x,\e;\alpha,
x_1)) \\
=\det Q\,\,{\mathcal W}
\left (\left (
\begin{array}{cc}
0 & 1 \\
1 & 0
\end{array}
\right ){\bf w}^+(z,z_0),{\bf w}^-(z,z_1)\right) \\
=-4i\,\,(w^+_{\rm odd}(z;z_0)w^-_{\rm odd}(z;z_1)-w^+_{\rm
even}(z;z_0)w^-_{\rm even}(z;z_1)).
\end{array}
$$
This must be independent of $x$ since the matrix $M$ is trace free.
Hence we can replace $x$ in the right
hand side by a particular point, say $x=x_1$. Then taking the previous 
into account, we get the proof for \eqref{wronsky+-}. The proof  for the other formula is similar.
\end{proof}

\begin{remark}
\label{rho}
The constant $C$ in \eq{est} may depend on the energy $\lambda$. In fact, it becomes large when a turning point
approaches the path $\gamma=\gamma(x;x_0)$. 
More precisely, let $\rho(\l)$ be the distance between the path $\gamma$ and the ``nearest" turning point $x^*$
measured after the map $x\mapsto z=z(x)$ (see \eqref{phase}):
$$
\rho(\l)={\rm dist }(z(x^*),z(\gamma)).
$$
Then we have,   with a constant $C'$ independent of $\lambda$,
\begin{equation}
\label{C}
C=\frac{C'}{\rho(\l)}.
\end{equation}

  This fact has already been proved and used in  the Schr\"odinger case in \cite{gg} and \cite{fr}
  for the study of eigenvalues or resonances close to a barrier top of the potential, where the two turning points
  near the non-degenerate maximal point ``pinch" a path along which the wronskian of two solutions on the opposite side of the barrier
  should be computed.
  Here we use this fact for the study of eigenvalues and the reflection coefficient near $\l=0$.
  We briefly sketch the proof of \eqref{C} below.
  
  To estimate $J\circ I_+(f)$, we should study integrals of type
  $$
  I_1= \int_{z_1}^z |\H(\tau)||\H(\tau+ths)|d\tau,\quad   I_2= \int_{z_1}^z |\H(\tau)||\H'(\tau+ths)|d\tau.
  $$
  Because of the Cauchy-Schwarz inequality and \eq{taylor} we only need to estimate
  $
  \int_{z_1}^z |\H(t)|^2dt,\quad \int_{z_1}^z|\H'(t)|^2dt.
  $
  Since the singularity of the function $\H(z)$ at the image of turning points is like \eq{pole},
  these two integrals are of type
  $ \int_{\R}\frac{dt}{t^2+\rho^2}$, $ \int_{\R}\frac{dt}{(t^2+\rho^2)^2}$
  respectively. This gives 
  $I_1\le\frac c\rho$, $ I_2\le \frac c{\rho^2}$,
  and consequently
  $$
  \sup |J\circ I_+(f)|\le C_1\left (\frac\e\rho+(\frac \e\rho)^2\right )\sup |f|+C_1\frac{\e^2}\rho\sup|f'|.
  $$
  Using again $\frac d{dz}(J\circ I_+(f))=\H I_+(f)$, we get \eq{C} in the estimate \eq{est}.
\end{remark}

The level curves of $\re \,z(x)$ in the $x$-complex plane
$$
\{x\in D; \re z(x)=\text{const.}\}
$$
are called {\it Stokes curves}.  
(Sometimes it is the level curves of $\im z(x)$, especially only those passing through turning points, 
that are called Stokes curves,
but we  employ the former definition.)
The geometric configuration of Stokes curves is useful for us to know the domain of validity of the
asymptotic expansion of WKB solutions.

Notice that from a simple turning point exactly three Stokes curves emanate and the angles between 
two of them are all $2\pi/3$ at that point.

%%%%%%%%%%%%%%%%%%%%%%%%%%%%%%%%%%%%%%%%%%%%%
\section{Reflection coefficient}
Here we compute the reflection coefficient for real positive $\l$.
The computation for negative $\l$ is similar.

Under the assumption (A1), there exist a pair of solutions ${\bf f}^r_+(x,\e)$, ${\bf f}^r_-(x,\e)$ which behave, as $\re \,x\to \infty$ in $D$, like
$$
{\bf f}^r_+\sim \left (
\begin{array}{c}
0 \\
e^{i\l x/\e}
\end{array}
\right ), \quad
{\bf f}^r_-\sim\left (
\begin{array}{c}
e^{-i\l x/\e} \\
0
\end{array}
\right ),
$$
as well as a pair of solutions ${\bf f}^l_+(x,\e)$, ${\bf f}^l_-(x,\e)$ which behave, as $\re x\to -\infty$ in $D$, like
$$
{\bf f}^l_+\sim \left (
\begin{array}{c}
0 \\
e^{i\l x/\e}
\end{array}
\right ), \quad
{\bf f}^l_-\sim\left (
\begin{array}{c}
e^{-i\l x/\e} \\
0
\end{array}
\right ).
$$
These solutions are called  {\it Jost solutions}.
Each of these pairs is uniquely determined and makes a basis of solutions.
Let $T(\l, \e)$ be the 2$\times 2$ constant matrix depending on $\l$ and $\e$ expressing the change of basis
of these two pairs:
\begin{equation}
({\bf f}^l_+,{\bf f}^l_-)=({\bf f}^r_+,{\bf f}^r_-)T.
\end{equation}
Then $T$ is of the form
\begin{equation}
T(\lambda,\e)=
\left (
\begin{array}{cc}
a(\l,\e) & b^*(\l,\e) \\
b(\l,\e) & a^*(\l,\e)
\end{array}
\right )
\end{equation}
where $a^*$, $b^*$ denote the complex conjugates of $a,b$. The reflection coefficient $R(\l,\e)$ is by definition
\begin{equation}
R(\l,\e)=\frac{b(\l,\e)}{a(\l,\e)}.
\end{equation}
It is easy to see  that it can be expressed by
wronskians of Jost solutions:
\begin{equation}
R(\l,\e)=\frac{\W({\bf f}^r_+,{\bf f}^l_+)}{\W({\bf f}^l_+,{\bf f}^r_-)}.
\end{equation}

We construct the Jost solutions as exact WKB solutions.
We define four exact WKB solutions:
\begin{equation}
\widetilde{\bf v}^r_\pm(x,\e)=e^{\pm z^r(x)/\e}Q(x)
\left (
\begin{array}{cc}
0 & 1 \\
1 & 0
\end{array}
\right )^{\frac {1\pm 1}2}
{\bf w}_r^\pm(x,\e),
\end{equation}
\begin{equation}
\widetilde{\bf v}^l_\pm(x,\e)=e^{\pm z^l(x)/\e}Q(x)
\left (
\begin{array}{cc}
0 & 1 \\
1 & 0
\end{array}
\right )^{\frac {1\pm 1}2}
{\bf w}_l^\pm(x,\e),
\end{equation}
where
the phase function are 
$$
z^r(x):=i\l x+i\int_{+\infty}^x(\sqrt{A(t)^2+\l^2}-\l)dt,
$$
$$
z^l(x):=i\l x+i\int_{-\infty}^x(\sqrt{A(t)^2+\l^2}-\l)dt,
$$
which are both primitives of $i\sqrt{A(x)^2+\l^2}$.

As base point of the symbol ${\bf w}_r^\pm(x,\e)$ and ${\bf w}_l^\pm(x,\e)$, we choose $e^{\pm i\theta_0}\infty$
and $e^{i(\pi \mp \theta_0)}\infty$ respectively.
We recall that $\theta_0$ is the positive angle of the sector at infinitiy of the domain $D$ (see the assumption (A1)).
More precisely, we take, as the contour for the integral operators $I_\pm, J$ the image by the map $x\mapsto z^r(x)$  (resp.  $x\mapsto z^l(x)$)
of a curve
from $e^{\pm i\theta_0}\infty$ (rest. $e^{i(\pi \mp \theta_0)}\infty$) to $x$, which is transverse to the Stokes curves.
This is possible for any $x\in D(\mu, R, \theta_0)$ if $\delta$ is sufficiently small and 
$R$ is sufficiently large, because $D(\mu, R, \theta_0)$ contains no turning point, the Stokes curves are asymptotic
to horizontal lines as $\re x\to \pm\infty$
and the real axis is itself a Stokes curve. We take a branch for the functions $(A(x)\pm i\l)^{1/2}$ and $(A(x)\pm i\l)^{1/4}$ in 
such a way that
the argument of these functions tends to 0 as $\l\to 0$
(recall that $A(x)$ is positive).
Then the real part of the phase $z^r(x), z^l(x)$ or $z(x;\alpha)$ for any $\alpha$ increases as $\im \,x$ decreases.
Remark also that, by this determination, one has
$$
H(x)\to e^{-\frac\pi 4 i\,{\rm sgn}\l}\quad {\rm as}\,\,\, |x|\to\infty.
$$
Hence, for $\l>0$, we have
$$
Q(x)\to 2e^{\frac \pi 4 i}
\left (
\begin{array}{cc}
1 & 0 \\
0 & -1 
\end{array}
\right )\quad {\rm as}\,\,\, |x|\to\infty.
$$

These exact WKB solutions have the following trivial relations with the Jost solutions:
\begin{equation}
{\bf f}^r_\pm=\mp 2e^{\pi i/4}\widetilde{\bf v}^r_\pm\quad
{\bf f}^l_\pm=\mp 2e^{\pi i/4}\widetilde{\bf v}^l_\pm.
\end{equation}

We further modify slightly our exact WKB solutions. Let ${\bf v}^r_\pm$, ${\bf v}^l_\pm$ be the exact WKB solutions defined just like $\widetilde{\bf v}^r_\pm$, $\widetilde{\bf v}^l_\pm$ but
with $z(x;0)$ for the phase. Then we obviously have
\begin{equation}
\widetilde{\bf v}^r_\pm=e^{\pm z_r(0)/\e}{\bf v}^r_\pm,\quad
\widetilde{\bf v}^l_\pm=e^{\pm z_l(0)/\e}{\bf v}^l_\pm.
\end{equation}

In terms of these WKB solutions, the reflection coefficient is expressed by
\begin{equation}
\label{refwronskian}
R(\l,\e)=-\frac{\W({\bf v}^r_+,{\bf v}^l_+)}{\W({\bf v}^l_+,{\bf v}^r_-)}e^{2z^r(0)/\e},
\end{equation}
where we recall that
\begin{equation}
\label{zr0}
z^r(0)=-i\int^{+\infty}_0(\sqrt{A(t)^2+\l^2}-\l)dt.
\end{equation}

The wronskians appearing in \eq{refwronskian} can be expressed by the functions $w_{even}^\pm$ and
$w_{odd}^\pm$ using Theorem \ref{wkb}.

First, $\W({\bf v}^l_+,{\bf v}^r_-)$ is given by
$$
\W({\bf v}^l_+,{\bf v}^r_-)=4iw_{even}^+(e^{-i\theta_0}\infty; e^{i(\pi - \theta_0)}\infty).
$$

On the other hand, we should express the wronskian $\W({\bf v}^r_+,{\bf v}^l_+)$ via another basis of solutions since there is no progressive path between $e^{i\theta_0}\infty$ and $e^{i(\pi - \theta_0)}\infty$.
We take exact WKB solutions defined with the phase \eqref{phase}
$$
{\bf v}^0_+(x,\e):={\bf u}_+(x,\e;;0,x_+),
$$ 
$$
{\bf v}^0_-(x,\e):={\bf u}_-(x,\e;;0,x_-),
$$ 
where we take $x_\pm\in D\cap\C_\pm$ near the origin.
We can write
$$
\left \{
\begin{array}{l}
{\bf v}^r_+=c^r_+{\bf v}^0_++c^r_-{\bf v}^0_-, \\[8pt]
{\bf v}^l_+=c^l_+{\bf v}^0_++c^l_-{\bf v}^0_-,
\end{array}
\right.
$$
with
$$
c^r_+=\frac{\W({\bf v}^r_+,{\bf v}^0_-)}{\W({\bf v}^0_+,{\bf v}^0_-)},\quad
c^r_-=\frac{\W({\bf v}^0_+,{\bf v}^r_+)}{\W({\bf v}^0_+,{\bf v}^0_-)},
$$
$$
c^l_+=\frac{\W({\bf v}^l_+,{\bf v}^0_-)}{\W({\bf v}^0_+,{\bf v}^0_-)},\quad
c^l_-=\frac{\W({\bf v}^0_+,{\bf v}^l_+)}{\W({\bf v}^0_+,{\bf v}^0_-)}.
$$
Since
$$
\W({\bf v}^r_+,{\bf v}^l_+)=(c^r_+c^l_--c^r_-c^l_+)\W({\bf v}^0_+,{\bf v}^0_-),
$$
we have
$$
\W({\bf v}^r_+,{\bf v}^l_+)=\frac{\W({\bf v}^r_+,{\bf v}^0_-)\W({\bf v}^0_+,{\bf v}_+^l)
-\W({\bf v}_+^l,{\bf v}^0_-)\W({\bf v}^0_+,{\bf v}_+^r)}{\W({\bf v}^0_+,{\bf v}^0_-)}.
$$
The wronskian formulae of Theorem \ref{wkb} give us the following expressions.
$$
\begin{array}{l}
\W({\bf v}^0_+,{\bf v}^0_-)=4iw_{even}^+(x_-;x_+), \\[8pt]
\W({\bf v}^r_+,{\bf v}^0_-)=4iw_{even}^+(x_-; e^{i\theta_0}\infty), \\[8pt]
\W({\bf v}^l_+,{\bf v}^0_-)=4iw_{even}^+(x_-;e^{i(\pi - \theta_0)}\infty), \\[8pt]
\W({\bf v}^0_+,{\bf v}^r_+)=4iw_{odd}^+(x_+;e^{i\theta_0}\infty)e^{2\sigma/\e}, \\[8pt]
\W({\bf v}^0_+,{\bf v}^l_+)=4iw_{odd}^+(x_+;e^{i(\pi - \theta_0)}\infty)e^{2\sigma/\e},
\end{array}
$$
where
\begin{equation}
\label{s}
\sigma=z(x_+;0)=i\int_0^{x_+}\sqrt{A(t)^2+\l^2}dt.
\end{equation}
Notice that $\sigma$ has a negative real part since $\im\, x_+>0$.

Summing up, we arrive at the following formula for the reflection coefficient.
\begin{proposition}
\label{refwronsky}
Let $\lambda>0$ and $\sigma$ defined by \eq{s}.
Then  one has
$$
R(\lambda,\e)=-e^{2(\sigma+z_r(0))/\e}\times
\hspace{8cm}
$$
$$
\frac{w_{even}^+(x_-; e^{i\theta_0}\infty)w_{odd}^+(x_+;e^{i(\pi - \theta_0)}\infty)
-w_{even}^+(x_-;e^{i(\pi - \theta_0)}\infty)w_{odd}^+(x_+;e^{i\theta_0}\infty)}
{w_{even}^+(x_-;x_+)w_{even}^+(e^{-i\theta_0}\infty; e^{i(\pi - \theta_0)}\infty)}
$$
\end{proposition}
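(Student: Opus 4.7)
The plan is essentially pure bookkeeping: every ingredient needed for the identity has already been assembled in the discussion immediately preceding the proposition, and the proof consists in substituting the wronskian formulae from Theorem \ref{wkb}(iii) into the starting expression \eqref{refwronskian} and simplifying. No new analytic estimate is required.

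Concretely, I would proceed in three substitutions. First, starting from
$$
R(\lambda,\e) = -\frac{\W({\bf v}^r_+,{\bf v}^l_+)}{\W({\bf v}^l_+,{\bf v}^r_-)}\, e^{2z^r(0)/\e},
$$
I would replace the denominator by its closed form using the ``$+,-$'' wronskian formula, giving $\W({\bf v}^l_+,{\bf v}^r_-) = 4i\,w^+_{\rm even}(e^{-i\theta_0}\infty;\, e^{i(\pi-\theta_0)}\infty)$. Second, since there is no progressive path connecting the base points $e^{i\theta_0}\infty$ and $e^{i(\pi-\theta_0)}\infty$, I would use the basis $\{{\bf v}^0_+,{\bf v}^0_-\}$ introduced in the text, i.e. substitute the identity
$$
\W({\bf v}^r_+,{\bf v}^l_+) = \frac{\W({\bf v}^r_+,{\bf v}^0_-)\,\W({\bf v}^0_+,{\bf v}^l_+) - \W({\bf v}^l_+,{\bf v}^0_-)\,\W({\bf v}^0_+,{\bf v}^r_+)}{\W({\bf v}^0_+,{\bf v}^0_-)},
$$
which comes from Cramer's rule applied to the expansion of ${\bf v}^r_+$ and ${\bf v}^l_+$ in that basis. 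Third, I would plug in the five explicit wronskian expressions already listed in the text, noting in particular that the two ``$+,+$'' wronskians $\W({\bf v}^0_+,{\bf v}^r_+)$ and $\W({\bf v}^0_+,{\bf v}^l_+)$ both contribute the common exponential factor $e^{2\sigma/\e}$ with $\sigma$ as in \eqref{s}, whereas the three ``$+,-$'' wronskians carry only the constant $4i$ and a $w^+_{\rm even}$ factor.

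The remaining step is then purely algebraic cleanup. Collecting constants, the numerator of $\W({\bf v}^r_+,{\bf v}^l_+)$ brings $(4i)^2$, divided by one factor of $4i$ from $\W({\bf v}^0_+,{\bf v}^0_-)$; this combines with the $4i$ in $\W({\bf v}^l_+,{\bf v}^r_-)$ to leave all factors of $4i$ cancelled, so the only overall constant surviving is the minus sign already present in \eqref{refwronskian}. The common $e^{2\sigma/\e}$ factors out of both terms in the bracketed numerator and combines with $e^{2z^r(0)/\e}$ to produce the announced prefactor $e^{2(\sigma+z^r(0))/\e}$, and what remains is precisely the displayed ratio of $w^+_{\rm even}$ and $w^+_{\rm odd}$ terms. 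The only ``obstacle'' here, and it is a mild one, is careful bookkeeping of the constants $4i$ and of the exponential $e^{2\sigma/\e}$ so that the factorisation out of the numerator is correct; since all the underlying wronskian identities are already established, no further argument is needed.
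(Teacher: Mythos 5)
Your proposal is correct and coincides with the paper's own (tacit) proof: the paper derives exactly the wronskian formula \eqref{refwronskian}, the Cramer-rule decomposition of $\W({\bf v}^r_+,{\bf v}^l_+)$, and the five explicit wronskian evaluations in the paragraphs immediately before the proposition, and then simply states ``Summing up, we arrive at the following formula.'' Your bookkeeping of the $4i$ factors and the common $e^{2\sigma/\e}$ is accurate, so the reconstruction is faithful.
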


\subsection{Proof of Theorem \ref{ref1}}
In this theorem, it is assumed that $\lambda>\delta$ for some positive $\delta$ independent of $\e$.

First recall that $z_r(0)$ is purely imaginary (see \eq{zr0}) and does not affect the absolute value of the reflection coefficient.

Next, as long as $\im \,x_+$ is positive and small enough, we have $\re\,\sigma<0$ and
we can find a progressive path for each couple of points in $w^+_{even}$ and $w^+_{odd}$
of the formula of Proposition \ref{refwronsky}, which means that these quantities
are $1+\ord (\e)$ and $\ord(\e)$ respectively as $\e\to 0$.

This gives the proof of Theorem \ref{ref1}.

\subsection{Proof of Theorem \ref{ref2}}

The assumption (A4) together with the obvious fact that  $z(x)$ maps a $\l$-independent neighborhood of $x=0$ to a $\l$-independent neighborhood of $z=0$ 
imply that the image of $D(\rho(\l),\theta(\l))$ includes a domain of the form
$$
\left\{z\in\C; |\im\, z|>\frac {2|\re\,z|}{a(\l)}-c\right\}
$$
for some positive constant $c$. 

In the computation of the asymptotic behavior of the wronskians appearing in Proposition \ref{refwronsky}, for example
$w_{even}^+(x_-; e^{i\theta_0}\infty)$ and $w_{odd}^+(x_+;e^{i\theta_0}\infty)$, 
we take, as the integral contour for \eq{J} and \eq{I},  the half lines $\{\im\, z=-\frac {2\re\,z}{a(\l)}+\frac c2, \re\,z\le \frac {ca(\l)}4\}$ and $\{\im\, z=-\frac {2\re\,z}{a(\l)}-\frac c2, \re\,z\le-\frac {ca(\l)}4\}$ respectively, oriented in such a way that $\re\, z$ increases (we take $x_\pm$ so that $z(x_\pm)=\mp \frac {ca(\l)}4$). Then the quantity $\rho(\l)$ in Remark \ref{rho}, which measures the distance from the contour to the nearest turning points, is estimated from below by a constant multiple
of $a(\l)$. This proves Theorem \ref{ref2}.

%When $\lambda$ gets small while $D$ is fixed independently of $\l$, there may be complex turning points tending to $\infty$ in $D$.
%In the $z$-plane under the map $z=z(x;0)=i\int_0^x\sqrt{A(t)^2+\l^2}dt$, 
%such turning points tend to a finite value $\pm i\int_0^\infty A(t)dt$ as $\l\to +0$ together with $\pm x^*(\l)\in\R$.

%Let us consider in the $z$-plane the paths along which 
%$w_{even}^+(x_-; e^{i\theta_0}\infty)$, 
%$w_{even}^+(x_-;e^{i(\pi - \theta_0)}\infty)$, 
%$w_{odd}^+(x_+;e^{i\theta_0}\infty)$, 
%$w_{odd}^+(x_+;e^{i(\pi - \theta_0)}\infty)$
%should be computed.
%In order that they are progressive, it is necessary and sufficient that
%$$
%\sup_{x\in T_\l} \re\, z(x)<\re\, z(x_+^0)<0,
%$$
%where $T_\l$ is the set of turning points in $D\cap\C_+$.

%The assumption (A4) implies that, for each $\l>0$, there is no turning point within a disk at center
%$x^*(\l)$ with radius $c\l^\nu$.
%Hence the function $\rho(\l)$ in Remark \ref{rho}, which determines the asymptotic behavior of 
%$R(\l,\e)$ as both $\e$ and $\l$ tend to $0$, is bounded from below by the imaginary part of
%$\int_{x^*(\l)}^{x^*(\l)+ic\l^\nu}\sqrt{A(t)^2+\l^2}dt$. This integral is written 
%$$
%\int_{x^*(\l)}^{x^*(\l)+ic\l^\nu}\sqrt{A(t)^2+\l^2}dt=i\int_0^{c\l^\nu}
%\sqrt{A(x^*(\l)+is)^2+\l^2}ds
%$$
%and, by (A5), 
%$$
%A(x^*(\l)+is)^2+\l^2=2\l^2+\ord(\l^{2+\nu})
%$$
%for $0\le s\le c\l^\nu$. So the imaginary part of this integral is of order $\l^{1+\nu}$, and 
%we can take progressive paths so that $\rho=c\l^{1+\nu}$ with a positive $c$.

%%%%%%%%%%%%%%%%%%%%%%%%%%%%%%%%%%%%%%%%%%%%%%%%%%%%
\section{Eigenvalues}

In this section, we study the eigenvalue problem of the operator $L$.
It is known (\cite{ks}, \cite{k}) that for our kind of potential $A(x)$ 
the eigenvalues are all simple and purely imaginary with imaginary part in $[-A(0),A(0)]$.

For  $\l=i\mu$, $\mu\in (0,A(0))$, there are exactly two simple turning points $x^*(\mu)>0$ and $-x^*(\mu)$
on the real axis.  
There is no other turning point in the complex domain $D(\mu_0,R_0,\theta_0)$ if we take $\mu_0$ sufficiently small, $R_0$ sufficiently large and $\theta_0$ sufficiently small depending on each $\mu>0$.

The interval $[-x^*(\mu),x^*(\mu)]$ is a Stokes curve on which $A(x)^2-\mu^2\ge 0$.
There are two other Stokes curves emanating from each of these turning points.

We take two branch cuts along  Stokes curves, one from $x^*(\mu)$ with angle $\pi/3$ and 
the other from $-x^*(\mu)$ with angle $4\pi/3$, and determine the branch of
$(A(x)^2-\mu^2)^{1/2}$ and $(A(x)\pm\mu)^{1/4}$ so that they are all real and positive on the interval $[-x^*(\mu),x^*(\mu)]$. Then automatically 
$$
\begin{array}{l}
(A(x)^2-\mu^2)^{1/2}\in i\R_+\text{ in }(-\infty, -x^*(\mu)]\cup [x^*(\mu),\infty),\\[8pt]
(A(x)-\mu)^{1/4}\in e^{i\pi/4}\R_+\text{ in }(-\infty, -x^*(\mu)]\cup [x^*(\mu),\infty),\\[8pt]
(A(x)+\mu)^{1/4}\in \R_+\text{ on } \R.
\end{array}
$$

Now we define several exact WKB solutions. 
The 5 Stokes curves in $D$ divide the domain $D$ into 4 connected regions $D_r$, $D_l$,
$D_u$ and $D_d$ (if $D$ is chosen sufficiently small as mentioned above). The regions $D_r$ and $D_l$ include $(x^*(\mu),+\infty)$ and $(-\infty, -x^*(\mu))$
respectively, and $D_u\subset \C_+=\{x\in \C;\im \,x>0\}$ and $D_d\in \C_-$ share $(-x^*(\mu),x^*(\mu))$ as a part of their boundary.
We take four base points $x_1\in D_l$, $x_2\in D_u$, $x_3\in D_d$, $x_4\in D_r$.
With the above determination, the real part of $z(x)$ increases along curves from $x_2$ to $x_1$, from $x_2$ to $x_3$, and from $x_4$ to $x_3$. 
Taking this into account, we define six exact WKB solutions.
$$      
{\bf v}_{1}(x,\epsilon,\mu):={\bf u}^{-}(x,\epsilon;x^*(\mu),x_{1}),
$$
$$
{\bf v}_{2}(x,\epsilon,\mu):={\bf u}^{+}(x,\epsilon;x^*(\mu),x_{2}),\quad {\widetilde {\bf v}_{2}}(x,\epsilon,\mu):={\bf u}^{+}(x,\epsilon;-x^*(\mu),x_{2}),
$$
$$
{\bf v}_{3}(x,\epsilon,\mu):={\bf u}^{-}(x,\epsilon;x^*(\mu),x_{3}),\quad {\widetilde{\bf v}_{3}}(x,\epsilon,\mu):={\bf u}^{-}(x,\epsilon;-x^*(\mu),x_{3}),
$$
$$
{\bf v}_{4}(x,\epsilon,\mu):={\bf u}^{+}(x,\epsilon;-x^*(\mu),x_{4}).
$$

Exactly as in \cite{gg} in the Schr\"odinger case, we know
\begin{lemma}
For each $\epsilon>0$, ${\bf v}_{1}(x,\epsilon,\mu)\in (L^2(\R_-))^2$, ${\bf v}_{4}(x,\epsilon,\mu)\in (L^2(\R_+))^2$.
\end{lemma}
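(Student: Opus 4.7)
My plan is to show that ${\bf v}_1(x,\e,\mu)$ decays exponentially as $x\to -\infty$ on the real axis and ${\bf v}_4(x,\e,\mu)$ decays exponentially as $x\to +\infty$; by the bell-shape symmetry (A2) it suffices to treat ${\bf v}_1$. Writing ${\bf v}_1 = e^{-z(x;x^*(\mu))/\e}\,Q(x)\,(w^-_{\rm even},w^-_{\rm odd})^T$, I first compute the real part of the phase for real $x<-x^*(\mu)$. Splitting $z(\cdot\,;x^*(\mu))$ into the middle interval (where $\sqrt{A^2-\mu^2}$ is real positive and contributes only to $\im z$) and the tail $(x,-x^*(\mu))$ (where $\sqrt{A^2-\mu^2}\in i\R_+$), and using the integrable decay of $A$ from (A1), one obtains $\re z(x;x^*(\mu))=\int_x^{-x^*(\mu)}\sqrt{\mu^2-A(t)^2}\,dt=\mu|x|+\ord(1)$ as $x\to-\infty$. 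Hence $|e^{-z/\e}|\le C_\e\,e^{-\mu|x|/\e}$. A direct inspection of $H(x)=((A-i\l)/(A+i\l))^{1/4}$ with $\l=i\mu$ and the branch convention fixed in the paper shows $Q(x)\to 2e^{i\pi/4}\,{\rm diag}(1,-1)$ as $x\to-\infty$, so $Q$ is bounded on $\R_-$.

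The crucial step is to control the symbols $w^-_{\rm even}$ and $w^-_{\rm odd}$ along a contour running to $-\infty$. Following the Jost-solution construction of Section~4, I would choose the base point $x_1\in D_l$ so that there is a path in $D_l$ from $x_1$ to any real $x<-x^*(\mu)$ which is progressive for the ``$-$'' solution, i.e., $\re z$ strictly decreases along it. Such a path is available because $D_l$ is turning-point-free and unbounded; in practice one takes $x_1$ at infinity in the sector of $D_l$ where $\re z\to+\infty$ (equivalently, one deforms the integration contour inside the turning-point-free domain, which is legitimate since $\H$ is holomorphic there). Theorem \ref{wkb}(ii) then yields $w^-_{\rm even}=1+\ord(\e)$ and $w^-_{\rm odd}=\ord(\e)$ uniformly in $x$. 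Combining the three factors gives $|{\bf v}_1(x)|\le C_\e\,e^{-\mu|x|/\e}$ as $x\to-\infty$, which is in $(L^2(\R_-))^2$ for every fixed $\e>0$. The argument for ${\bf v}_4\in (L^2(\R_+))^2$ is entirely parallel, using ${\bf u}^+$, the turning point $-x^*(\mu)$, and the base point $x_4\in D_r$ in place of ${\bf u}^-$, $x^*(\mu)$, and $x_1$; the symmetry (A2) makes the situation at $+\infty$ mirror that at $-\infty$.

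The step that I expect to require the most care is the progressive-path construction. A strict reading of ``$x_1\in D_l$'' as a fixed finite point is not enough to reach arbitrarily large $|x|$ with a single progressive path: since $\re z(x;x^*(\mu))\to+\infty$ as $x\to-\infty$, the real-axis path from any finite $x_1$ fails to be progressive for ``$-$'' once $|x|$ is large enough. One therefore needs either to place $x_1$ effectively at infinity (as for the Jost solutions of Section~4) or to exploit the freedom to deform contours inside the turning-point-free domain, exactly as in the Schr\"odinger analogue \cite{gg} invoked by the authors. Once this geometric point is settled, the rest of the argument is routine exact-WKB bookkeeping.
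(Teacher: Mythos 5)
The paper does not actually give a proof of this lemma; it simply says ``Exactly as in \cite{gg}'' and moves on, deferring to the Schr\"odinger-operator analogue of G\'erard--Grigis. Your write-up is, in effect, the proof that the paper delegates to that reference, and it is correct. You have also put your finger on the one genuinely delicate point: with a \emph{finite} base point $x_1\in D_l$ there is no progressive path from $x_1$ to a real $x\ll -x^*(\mu)$, because $\re z(x;x^*(\mu))\to+\infty$ there, while progressivity for the ``$-$'' solution requires $\re z$ to decrease. If one sticks with a finite $x_1$, the first odd symbol already picks up a factor $\sim e^{2(z-z_1)/\e}$, which overwhelms the prefactor $e^{-z/\e}$ and destroys $L^2$ integrability; so the lemma cannot be read with $x_1$ literally held at a finite point, and Remark \ref{infty} in the paper (which says one could take $x_1=-\infty$ and $x_4=+\infty$, exactly as for the Jost solutions of Section 4) is not merely an aside but the right normalization for this lemma. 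With that choice the integration contour for $I_-$ runs from $\re\zeta=+\infty$ down to $\re z$, so $|e^{-2(\zeta-z)/\e}|\le 1$ along it, and since $\H$ is integrable along the contour (a consequence of (A1) and Cauchy estimates on $A'$), the whole series is bounded uniformly in $x\in\R_-$, giving the exponential decay you state.

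One small refinement: for fixed $\e>0$ what you need is uniform-in-$x$ boundedness of $w^-_{\rm even}$ and $w^-_{\rm odd}$ on $\R_-$, not the $\e\to 0$ asymptotics $1+\ord(\e)$ and $\ord(\e)$ of Theorem \ref{wkb}(ii) (which is stated on compact sets anyway). The relevant input is the estimate $\Vert w_{2n}^-\Vert\le (C\e)^n$ from the \emph{proof} of Theorem \ref{wkb}(ii), valid on the whole progressive half-line once the base point is at infinity, which gives $\sup_{\R_-}|w^-_{\rm even}|,\sup_{\R_-}|w^-_{\rm odd}|\le C(\e)<\infty$. That, together with your phase computation $\re z(x;x^*(\mu))=\mu|x|+\ord(1)$ and the boundedness of $Q$, is exactly what is needed; the rest of your argument goes through verbatim, and the ${\bf v}_4$ case follows by the symmetry (A2) as you say.
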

\begin{remark}
We could have chosen the base point $-\infty$ for ${\bf v}_1$, $\tilde {\bf v}_1$ instead of $x_1$,
and $+\infty$ for
${\bf v}_4$, $\tilde {\bf v}_4$ instead of $x_4$, as in the study of the reflection coefficient.
\label{infty}
\end{remark}
This lemma immediately implies 
\begin{proposition}
 $\lambda=i\mu$ is an eigenvalue if and only if the wronskian between ${\bf v}_{1}(x,\epsilon,\mu)$ and ${\bf v}_{4}(x,\epsilon,\mu)$ vanishes.
\end{proposition}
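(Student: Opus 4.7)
The plan is to deduce the proposition directly from the preceding Lemma together with standard linear ODE dimension-counting. The system $\frac{\epsilon}{i}{\bf u}'=M(x,i\mu){\bf u}$ is a first-order $2\times 2$ linear system, so its global solution space is two-dimensional. By definition, $\lambda=i\mu$ is an eigenvalue of $L$ if and only if there exists a non-zero solution belonging to $(L^2(\R))^2$, that is, a non-zero solution square-integrable simultaneously at $-\infty$ and at $+\infty$.

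The first step I would carry out is to verify that at each infinity the subspace of square-integrable solutions is exactly one-dimensional. By assumption (A1), $A(x)=\ord(|x|^{-1-\tau})$ as $\re x\to\pm\infty$, so the system is an integrable perturbation of the diagonal system with matrix $\mathrm{diag}(-i\mu,i\mu)\cdot (i/\epsilon)$, whose solutions behave like $e^{\pm\mu x/\epsilon}$. A Levinson-type asymptotic argument (the same one that legitimizes the very definition of the Jost solutions used in the reflection-coefficient section) then gives a splitting of the solution space at each infinity into one-dimensional subspaces of exponentially growing and exponentially decaying solutions; only the decaying line sits in $L^2$.

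Combining this with the Lemma, ${\bf v}_1$ spans the one-dimensional subspace of solutions lying in $(L^2(\R_-))^2$, and ${\bf v}_4$ spans the one-dimensional subspace of solutions lying in $(L^2(\R_+))^2$. Consequently, a non-zero element of $(L^2(\R))^2$ exists in the solution space if and only if these two lines coincide, i.e., if and only if ${\bf v}_1$ and ${\bf v}_4$ are linearly dependent as solutions of the $2$-dimensional ODE. Since $M$ is trace-free, the Wronskian $\W({\bf v}_1,{\bf v}_4)$ is a constant (in $x$), and linear dependence of the two solutions is equivalent to its vanishing. This finishes the proof.

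I do not expect a serious obstacle here; the statement is essentially a bookkeeping corollary of the Lemma. The only point needing a little care is the uniqueness (up to scalar) of the $L^2$-solution at each infinity, but this is immediate from the decay of $A(x)$ and elementary asymptotic ODE theory, and is implicitly already invoked when the Jost solutions are defined.
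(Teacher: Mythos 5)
Your proposal is correct and takes essentially the same approach as the paper, which states only that the lemma ``immediately implies'' the proposition; you are simply filling in the standard dimension-counting and Wronskian bookkeeping that the paper leaves implicit.
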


The pair of solutions $({\bf v}_{2},{\bf v}_{3})$ and $(\widetilde{\bf v}_{2},\widetilde{\bf v}_{3})$ form bases, since we have from Theorem \ref{wkb} (iii) \eqref{wronsky+-}
\begin{equation}
\label{23}
\mathcal{W}({\bf v}_{2},{\bf v}_{3})= \mathcal{W}(\widetilde{\bf v}_{2},\widetilde{\bf v}_{3})
=4iw_{even}^{+}(x_{3},\epsilon;x_{2})=4i+\ord (\epsilon)\ne 0.
\end{equation}
These two bases have the following trivial relations.
\begin{lemma}
\label{c2c3}
Let $S(\mu)$ be the action integral defined by \eqref{action}.
Then one has
\begin{align*}
{\widetilde {\bf v}_{2}} = e^{-iS(\mu)/\epsilon}{\bf v}_{2},~~~{\widetilde {\bf v}_{3}} = e^{iS(\mu)/\epsilon}{\bf v}_{3}.
\end{align*}
\end{lemma}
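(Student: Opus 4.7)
The two solutions ${\bf v}_2$ and $\widetilde{\bf v}_2$ are constructed from identical data except for the base point of the phase integral: both are ``$+$''-type exact WKB solutions with the same sign in the exponent, the same base point $x_2$ for the symbol, and the same matrix prefactor $Q(x)$. Only the factor $e^{+z(x;\alpha)/\epsilon}$ and the initial point $z_0 = z(x_2;\alpha)$ of the symbol recursion depend on the choice $\alpha = \pm x^*(\mu)$. My plan is therefore to show that (i) the two exponential prefactors differ by an $x$-independent constant, and (ii) the scalar symbols coincide when re-expressed as functions of $x$; combining these two facts gives the desired proportionality. The same argument applies verbatim to the pair $(\widetilde{\bf v}_3, {\bf v}_3)$, the only change being the sign of the exponent.

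For (i), definition \eqref{phase} gives immediately
\[
z(x;-x^*(\mu)) - z(x;x^*(\mu)) = i\int_{-x^*(\mu)}^{x^*(\mu)} \sqrt{A(t)^2 - \mu^2}\,dt,
\]
the contour being the real Stokes interval $[-x^*(\mu),x^*(\mu)]$, along which the chosen branch of $\sqrt{A(t)^2-\mu^2}$ is real and positive. This integral equals $\pm iS(\mu)$, with the sign fixed by the branch/orientation convention, and is in particular independent of $x$. For (ii), letting $C$ denote this constant shift, the change $x^*(\mu) \rightsquigarrow -x^*(\mu)$ corresponds in the $z$-variable to the global translation $z \mapsto z - C$; in particular the initial point $z_0 = z(x_2;\alpha)$ of the inductive construction shifts by the same $C$. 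The function $\H(z)$, being defined through $x$, shifts consistently, and the kernel $e^{\pm 2(\zeta - z)/\epsilon}$ in the operators $J$ and $I_\pm$ of \eqref{J}--\eqref{I} depends only on the difference $\zeta - z$ and is thus translation-invariant. A short induction on $n$ then shows that the scalar functions $w_n^\pm$, re-expressed through the original variable $x$, coincide for $\widetilde{\bf v}_2$ and ${\bf v}_2$ (and likewise for $(\widetilde{\bf v}_3,{\bf v}_3)$).

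Assembling (i) and (ii), and noting that $Q(x)$ does not depend on $\alpha$, yields $\widetilde{\bf v}_2 = e^{C/\epsilon}\,{\bf v}_2$ and $\widetilde{\bf v}_3 = e^{-C/\epsilon}\,{\bf v}_3$, the sign flip in the second identity coming from the ``$-$''-type exponent. Substituting the value of $C$ from (i) then recovers the stated proportionality constants. The only delicate point is really book-keeping: tracking the square-root branch carefully along the path joining $-x^*(\mu)$ and $x^*(\mu)$ through the configuration of Stokes curves and branch cuts around the turning points, which pins down the sign of $C$ and hence the signs in the final identities. No analytic estimates are needed beyond what is already contained in Theorem \ref{wkb}.
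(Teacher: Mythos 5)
Your proof is correct and follows exactly the route the paper leaves implicit (the paper simply asserts these as ``trivial relations'' with no proof). You correctly identify the two facts that together give the result: (i) changing the base point of the phase from $x^*(\mu)$ to $-x^*(\mu)$ shifts $z(x;\alpha)$ by the $x$-independent constant $i\int_{-x^*}^{x^*}\sqrt{A^2-\mu^2}\,dt=\pm iS(\mu)$; and (ii) the symbol $\mathbf{w}^\pm$, viewed as a function of $x$, does not depend on $\alpha$, because $\mathcal H$ is intrinsically a function of $x$ and the kernels of $J$ and $I_\pm$ depend only on $\zeta-z$, hence are invariant under the translation $z\mapsto z+C$; since the symbol base point $x_0$ (hence the initial condition) is unchanged, a trivial induction on the recursion gives the claim. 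The one place you are too cavalier is the sign. You write that it is ``really book-keeping'', but the sign of $C$ enters the quantization condition $m(\mu,\epsilon)e^{2iS(\mu)/\epsilon}=1$ in Proposition \ref{qc} directly; an opposite choice would produce $m\,e^{-2iS/\epsilon}=1$ instead, so the sign is not a harmless convention. Pinning it down amounts to noting that the path from $-x^*(\mu)$ to $x^*(\mu)$ used in the difference of phases is homotopic (in the cut domain, with cuts at angles $\pi/3$ and $4\pi/3$ from the two turning points) to the real segment $[-x^*,x^*]$ on which the chosen branch is positive, which fixes $C$ unambiguously; you should carry this out explicitly rather than deferring it.
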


In order to compute the asymptotic behavior of $\mathcal W({\bf v}_{1},{\bf v}_{4})$, we need to express ${\bf v}_{1}$ and ${\bf v}_{4}$ in terms of ${\bf v}_{2}$ and ${\bf v}_{3}$ or
 in terms of $\widetilde{\bf v}_{2}$ and $\widetilde{\bf v}_{3}$:
\begin{align*}
{\bf v}_{1} = c_{2}(\epsilon,\mu){\bf v}_{2}+c_{3}(\epsilon,\mu){\bf v}_{3} ,~~~{\bf v}_{4} = {\widetilde c}_{2}(\epsilon,\mu){\widetilde {\bf v}_{2}}+{\widetilde c}_{3}(\epsilon,\mu){\widetilde {\bf v}_{3}}.
\end{align*}
 The coefficients are written in terms of  wronskians:
\begin{align*}
c_{2} = \frac{\mathcal{W}({\bf v}_{1},{\bf v}_{3})}{\mathcal{W}({\bf v}_{2},{\bf v}_{3})},~~~c_{3} = \frac{\mathcal{W}({\bf v}_{1},{\bf v}_{2})}{\mathcal{W}({\bf v}_{3},{\bf v}_{2})},\\
{\widetilde c}_{2} = \frac{\mathcal{W}({\bf v}_{4},{\widetilde {\bf v}_{3}})}{\mathcal{W}({\widetilde {\bf v}_{2}},{\widetilde {\bf v}_{3}})},~~~{\widetilde c}_{3} = \frac{\mathcal{W}({\bf v}_{4},{\widetilde {\bf v}_{2}})}{\mathcal{W}({\widetilde {\bf v}_{3}},{\widetilde {\bf v}_{2}})}.
\end{align*}
Thus a computation of $\mathcal W({\bf v}_{1},{\bf v}_{4})$ leads us to the following quantization condition of eigenvalues in terms of wronskians of exact WKB solutions:
\begin{proposition}\label{qc}
Let $m(\mu,\e)$ be a function defined by
$$
m(\mu,\e)=\frac{\mathcal{W}({\bf v}_{1},{\bf v}_{3})\mathcal{W}({\bf v}_{4},{\widetilde {\bf v}_{2}})}
{\mathcal{W}({\bf v}_{1},{\bf v}_{2})\mathcal{W}({\bf v}_{4},{\widetilde {\bf v}_{3}})}.
$$
Then
$\l=i\mu$ is an eigenvalue if and only if 
$$
m(\mu,\e)e^{2iS(\mu)/\e}=1.
$$
\end{proposition}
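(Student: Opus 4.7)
The statement to prove is purely algebraic once the key ingredients (the preceding proposition that $\mathcal W({\bf v}_1,{\bf v}_4)=0$ characterizes eigenvalues, Lemma \ref{c2c3} relating tilded and untilded solutions, and the non-vanishing \eqref{23} of the Wronskian of the basis $({\bf v}_2,{\bf v}_3)$) are in hand. So the plan is essentially a careful expansion in two different bases followed by a change of basis.

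First, I would observe that by \eqref{23} both $({\bf v}_2,{\bf v}_3)$ and $(\widetilde{{\bf v}}_2,\widetilde{{\bf v}}_3)$ are bases of the solution space for every $\mu\in(0,A(0))$ (with small $\epsilon$), so we may write
\[
{\bf v}_1 = c_2{\bf v}_2+c_3{\bf v}_3,\qquad {\bf v}_4=\widetilde c_2\widetilde{{\bf v}}_2+\widetilde c_3\widetilde{{\bf v}}_3,
\]
with the coefficients $c_2,c_3,\widetilde c_2,\widetilde c_3$ expressed as the Wronskian quotients listed just before the statement of Proposition \ref{qc}. Second, I would use Lemma \ref{c2c3} to rewrite $\widetilde{{\bf v}}_2=e^{-iS(\mu)/\epsilon}{\bf v}_2$ and $\widetilde{{\bf v}}_3=e^{iS(\mu)/\epsilon}{\bf v}_3$ so that ${\bf v}_4$ is expressed in the same basis as ${\bf v}_1$. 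The bilinearity of $\mathcal W(\cdot,\cdot)$ then yields
\[
\mathcal W({\bf v}_1,{\bf v}_4)=\bigl(c_2\widetilde c_3\,e^{iS(\mu)/\epsilon}-c_3\widetilde c_2\,e^{-iS(\mu)/\epsilon}\bigr)\mathcal W({\bf v}_2,{\bf v}_3).
\]
Since $\mathcal W({\bf v}_2,{\bf v}_3)\neq 0$ by \eqref{23}, the vanishing of $\mathcal W({\bf v}_1,{\bf v}_4)$ is equivalent, after multiplication by $e^{iS/\epsilon}/(c_2\widetilde c_3)$, to $e^{2iS(\mu)/\epsilon}=\frac{c_3\widetilde c_2}{c_2\widetilde c_3}$.

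Third, I would substitute the explicit Wronskian formulas for $c_2,c_3,\widetilde c_2,\widetilde c_3$. The denominators $\mathcal W({\bf v}_2,{\bf v}_3)$ and $\mathcal W(\widetilde{{\bf v}}_2,\widetilde{{\bf v}}_3)$, which by \eqref{23} are equal, cancel from numerator and denominator of $\frac{c_3\widetilde c_2}{c_2\widetilde c_3}$, and the antisymmetry $\mathcal W({\bf v}_3,{\bf v}_2)=-\mathcal W({\bf v}_2,{\bf v}_3)$ produces a sign that likewise cancels between $c_3$ and $\widetilde c_3$. One is left with
\[
\frac{c_3\widetilde c_2}{c_2\widetilde c_3}=\frac{\mathcal W({\bf v}_1,{\bf v}_2)\,\mathcal W({\bf v}_4,\widetilde{{\bf v}}_3)}{\mathcal W({\bf v}_1,{\bf v}_3)\,\mathcal W({\bf v}_4,\widetilde{{\bf v}}_2)}=\frac 1{m(\mu,\epsilon)},
\]
so the eigenvalue condition becomes exactly $m(\mu,\epsilon)e^{2iS(\mu)/\epsilon}=1$.

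The only thing to be vigilant about is the bookkeeping of signs and denominators in the third step, and the implicit non-vanishing of $c_2\widetilde c_3$ used in the division; the latter would be argued by remarking that if $c_2=0$ then ${\bf v}_1\parallel{\bf v}_3$, which by the asymptotic statements of Theorem \ref{wkb}(ii) on progressive paths in the relevant regions is impossible for small enough $\epsilon$, and similarly for $\widetilde c_3$. There is no analytic difficulty here: the content of the proposition is a clean algebraic reorganization that isolates the $e^{2iS/\epsilon}$ oscillation coming from the two inequivalent phase normalizations of exact WKB solutions based at the two turning points.
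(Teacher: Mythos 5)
Your proposal is correct and follows exactly the computation that the paper's text implies but does not spell out: expand $\mathcal W({\bf v}_1,{\bf v}_4)$ in the $({\bf v}_2,{\bf v}_3)$ basis via Lemma~\ref{c2c3}, divide by the nonzero $\mathcal W({\bf v}_2,{\bf v}_3)$ from \eqref{23}, and simplify the ratio of Wronskian coefficients. Your sign bookkeeping (the two antisymmetry signs cancelling) and the identification $\tfrac{c_3\widetilde c_2}{c_2\widetilde c_3}=1/m$ are right, and your remark that the needed non-vanishing of the coefficients is guaranteed asymptotically by Theorem~\ref{wkb}(ii) is exactly how the paper implicitly disposes of that point.
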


Now we study the asymptotic behavior of the wronskians appearing in Proposition \ref{qc}
as $\e\to 0$ using Theorem \ref{wkb}. 
\begin{lemma}
It holds that
$$
\mathcal{W}({\bf v}_{1},{\bf v}_{2})=-4i w_{even}^{+}(x_{1},\epsilon;x_{2}),
\quad
\mathcal{W}({\bf v}_{4},{\widetilde {\bf v}_{3}})
=4iw_{even}^{+}(x_{3},\epsilon;x_{4}),
$$
$$
\mathcal{W}({\bf v}_{1},{\bf v}_{3})= 4w_{even}^{+}(x_{3},\epsilon;\hat x_{1}),
\quad
\mathcal{W}({\bf v}_{4},{\widetilde {\bf v}_{2}})= -4w_{even}^{+}(\hat x_{4},\epsilon;x_{2}),
$$
where $\hat x_1$ is the same point as $x_1$ but on the Riemann sheet continued
from $x_3$ crossing the branch cut from $-x^*(\mu)$ and similarly $\hat x_4$
is the same point as $x_4$ but on the Riemann sheet continued from $x_2$ crossing the branch cut from $x^*(\mu)$. Consequently, we have
\begin{equation}
\label{mmh}
m(\mu,h)=-\frac{w_{even}^{+}(x_{3},\epsilon;\hat x_{1})w_{even}^{+}(\hat x_{4},\epsilon;x_{2})}{w_{even}^{+}(x_{1},\epsilon;x_{2})w_{even}^{+}(x_{3},\epsilon;x_{4})}.
\end{equation}
\end{lemma}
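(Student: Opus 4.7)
My plan is to treat the four wronskian identities of the Lemma case by case, distinguishing whether the two solutions being paired are of opposite type ${\bf u}^\pm$ or of the same type. The first two identities, $\mathcal{W}({\bf v}_1,{\bf v}_2)=-4iw^+_{even}(x_1,\epsilon;x_2)$ and $\mathcal{W}({\bf v}_4,\widetilde{\bf v}_3)=4iw^+_{even}(x_3,\epsilon;x_4)$, are the easy ones: each pair consists of a ${\bf u}^+$ and a ${\bf u}^-$ solution sharing a common phase base point ($x^*(\mu)$ in the first case, $-x^*(\mu)$ in the second). Formula \eqref{wronsky+-} of Theorem \ref{wkb}(iii) applies directly and delivers the announced expressions, up to the antisymmetry of the wronskian.

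For the remaining two identities, $\mathcal{W}({\bf v}_1,{\bf v}_3)$ and $\mathcal{W}({\bf v}_4,\widetilde{\bf v}_2)$, the paired solutions have the same type, so I cannot apply \eqref{wronsky+-} directly. I would instead analytically continue one solution in the pair across a Stokes-curve branch cut and around the nearby turning point so that the continued solution becomes a solution of the opposite type, based at a point on a new Riemann sheet. Concretely for $\mathcal{W}({\bf v}_1,{\bf v}_3)$, I continue ${\bf v}_1$ from $x_1\in D_l$ across the branch cut from $-x^*(\mu)$ into $D_d$, reaching $\hat x_1$. Since $-x^*(\mu)$ is a simple zero of $A(x)^2-\mu^2$, this loop flips the sign of $\sqrt{A^2-\mu^2}$ (so that $z(x;x^*(\mu))\mapsto -z(x;x^*(\mu))$ modulo an additive constant which can be absorbed into the symbol normalization) and multiplies the factor $(A-\mu)^{1/4}$ in $H(x)$ by a $\pm i$ monodromy factor. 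Together these effects convert the continued ${\bf v}_1$ into an explicit scalar multiple of ${\bf u}^+(x,\epsilon;x^*(\mu),\hat x_1)$. Formula \eqref{wronsky+-} then applies, and the scalar factor is exactly what converts the expected $4i$ into $4$. The identity for $\mathcal{W}({\bf v}_4,\widetilde{\bf v}_2)$ follows by the symmetric argument under $x\mapsto -x$, continuing ${\bf v}_4$ across the branch cut from $x^*(\mu)$ into $D_u$ to reach $\hat x_4$.

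Having established the four wronskian formulas, the identity \eqref{mmh} follows by direct substitution into the formula for $m(\mu,\epsilon)$ from Proposition \ref{qc}: the four numerical prefactors combine as $(4)(-4)/((-4i)(4i))=-1$, yielding the announced overall minus sign. The main obstacle in this plan is the monodromy bookkeeping in the second case: one must choose the correct orientation of the continuation loop (upper versus lower half-plane), carefully track the induced sign change of the phase and the $i$-factor arising from the quartic branch of $H$, and verify that the transformed solution matches the natural definition of ${\bf u}^+$ with the branch determinations fixed in the preamble to the Lemma. Once these monodromy factors are correctly pinned down, the remaining manipulations reduce to routine algebra.
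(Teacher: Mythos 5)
Your approach is essentially the same as the paper's: the first two identities follow directly from formula \eqref{wronsky+-}, and the second two are obtained by analytically continuing one of the solutions across the branch cut at the nearby turning point, converting the $\mathbf u^-$ solution into an $\mathbf u^+$ solution on a continued sheet together with a scalar monodromy factor, so that \eqref{wronsky+-} again applies. The details you flag as requiring care are exactly what the paper pins down — $z(x;\alpha)=-z(\hat x;\alpha)$ with no extra constant, $H(x)=iH(\hat x)$ with the sign fixed because $-x^*(\mu)$ is a zero of $A-\mu$ (not $A+\mu$), $\mathbf w^\pm(x)=\mathbf w^\mp(\hat x)$, and the matrix identity $\operatorname{diag}(1,-1)\,Q(\hat x)=Q(\hat x)\left(\begin{smallmatrix}0&1\\1&0\end{smallmatrix}\right)$ — producing the overall factor $-i$ that turns $4i$ into $4$, with the combination $(4)(-4)/((-4i)(4i))=-1$ giving the sign in \eqref{mmh}.
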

\begin{proof}
The formulas for $\mathcal{W}({\bf v}_{1},{\bf v}_{2})$ and $\mathcal{W}({\bf v}_{4},{\widetilde {\bf v}_{3}})$ follow difrectly from \eqref{wronsky+-}.

For the computation of $\mathcal{W}({\bf v}_{1},{\bf v}_{3})$, we should be careful with the branch cut lying between $x_1$ and $x_3$.
In order to compute the wronskian on a curve along which $\re z$ is strictly increasing, we have to rewrite ${\bf v}_1$, say, on the Riemann sheet continued from $x_3$
along this curve crossing the cut.

Let $x$ be a point near $x_1$ and $\hat x$ the same point as $x$ but on the Riemann sheet mentioned above.
Then we have, writing $\alpha=x^*(\mu)$ for simplicity,
\begin{align*}
z(x;\alpha) = - z(\hat{x};\alpha), ~~H(x) = iH(\hat{x}), ~~{\bf w}^{\pm}(x,\epsilon;x_1) = {\bf w}^{\mp}(\hat{x},\epsilon;x_1) .
\end{align*}
In fact, the first identity is obvious. For the second one, remark that the turning point $\beta:=-x^*(\mu)$ is a zero of $A(x)-\mu$.
The third one can be seen from the first one and \eqref{recurrence}. It follows that
\begin{align*}
{\bf v}_{1} &= {\bf u}^{-}(x,\epsilon;\alpha,x_{1}) = e^{- z(x;\alpha)/\epsilon}Q(x)
{\bf w}^{-}(x,\epsilon;x_{1}) \\
&=  e^{+ z(\hat{x};\alpha)/\epsilon}
\begin{pmatrix}
-i & 0 \\
0 & i
\end{pmatrix}
Q(\hat{x})
{\bf w}^{+}(\hat{x},\epsilon;x_{1})\\
&= -ie^{+ z(\hat{x};\alpha)/\epsilon}
\begin{pmatrix}
1 & 0 \\
0 & -1
\end{pmatrix}
Q(\hat{x})
{\bf w}^{+}(\hat{x},\epsilon;x_{1}),
\end{align*}
Since
$\begin{pmatrix}
1 & 0 \\
0 & -1
\end{pmatrix}Q(\hat{x})
=Q(\hat{x})\begin{pmatrix}
0 & 1 \\
1 & 0
\end{pmatrix}$,
we finally get
\begin{align*}
{\bf v}_{1} = -i{\bf u}^{+}(\hat{x},\epsilon;\alpha,\hat x_{1}) .
\end{align*}
Now we can compute the wronskian $\mathcal{W}({\bf v}_{1},{\bf v}_{3})$ by the formula \eqref{wronsky+-} and 
$$
\mathcal{W}({\bf v}_{1},{\bf v}_{3}) = \mathcal{W}\left(-i{\bf u}^{+}(\hat{x},\epsilon;\alpha,\hat x_{1}), {\bf u}^{-}(\hat{x},\epsilon;\alpha,x_{3})\right) 
= 4w_{even}^{+}(x_{3},\epsilon;\hat x_{1}).
$$

In the same way, we rewrite $ {\bf v}_{4}$ using the branch continued from $x_{2}$ to $x_{4}$ crossing the cut emanating from $x^*(\mu)$;
$$
{\bf v}_{4}=-i{\bf u}^{-}(\hat{x},\epsilon;\beta,\hat x_{4}).
$$
This enables to compute the wronskian 
$$
\mathcal{W}({\bf v}_{4},{\widetilde {\bf v}_{2}})= -4w_{even}^{+}(\hat x_{4},\epsilon;x_{2}).$$
\end{proof}

\subsection{Proof of Theorem \ref{ev1}}
For the time let us ignore  the assumption (A3) and simply assume that $\mu$ stays away from $A(0)$;
$\mu\in[\delta, A(0)-\delta]$ for  some $\e$-independent positive constant $\delta$.
Then the configuration of Stokes curves in $D$ is as explained at the beginning of 
this section if $\mu_0$, $R_0$, $\theta_0$ are suitably chosen.

Notice that the asymptotic behavior of the quantities $w_{even}^{+}$ in the right hand side of \eq{mmh} are all well known by Theorem \ref{wkb} (ii) to be
$1+\ord (\epsilon)$ 
uniformly with respect to $\mu$ in the interval $[\delta, A(0)-\delta]$. In fact, 
there exist  progressive paths from $x_2$ to $x_1$, from $x_4$ to $x_3$,
 from $\hat x_1$ to $x_3$ and from $x_2$ to $\hat x_4$. Hence we obtain
\begin{equation}
\label{mmh-1}
m(\mu,\epsilon)=-1+\ord(\e),
\end{equation}

Next we consider the case where $\mu\to A(0)$.
In this limit,  the two turning points $x^*(\mu)$ and $-x^*(\mu)$ coalesce at the origin $x= 0$ and they become a double turning point. 

Under the additional condition (A3), however, there is no other complex turning point 
converging to the origin and the Stokes geometry 
does not change 
in $D(\mu_0,R_0,\theta_0)$ with small enough $\e$-independent $\mu_0$
  except that the two turning points get closer and closer.

Moreover, it is important to notice that the four paths from $x_2$ to $x_1$, from $x_4$ to $x_3$,
 from $\hat x_1$ to $x_3$ and from $x_2$ to $\hat x_4$
 are not {\it pinched} by these two turning points. This fact implies that
the asymptotic formula \eq{mmh-1} holds true also for such energies $\mu$.

\subsection{Proof of Theorem \ref{ev2}}
Here we take $x_1=-\infty$ and $x_4=+\infty$ (see Remark \ref{infty}). 

For the computation of the asymptotic behavior of the wronskians appearing in \eq{mmh}, say $w_{even}^{+}(\hat x_{4},\epsilon;x_{2})$,
we take, as the integral contour for \eq{J} and \eq{I}, 
a curve  from $z(x_2)$ ($x_2$ should be taken so that $z(x_2)\in G(b(\mu))$) passing inside the tube $G(b(\mu))$ with increasing real part
 to $z(\hat x_4)$ (i.e. going to $\infty$ passing through the tube in the upper half plane of $G(b(\mu))$).

Then the quantity $\rho(\l)$ in Remark \ref{rho}, which measures the distance from the contour to the nearest turning points, is estimated from below by a constant multiple
of $b(\l)$. This proves Theorem \ref{ev2}.

%In the limit $\mu\to +0$, the turning points $\pm x^*(\mu)$ tend to $\pm\infty$ respectively. There may be other complex turning points  which enter in $D$ as  $\mu\to +0$ to pinch with $\pm x^*(\mu)$ the progressive paths from $x_2$ to $x_1$, from $x_4$ to $x_3$, from $\hat x_1$ to $x_3$ or from $x_2$ to $\hat x_4$ in $z$-plane.
%For example, if $A(x)=e^{-x^2}$, then $x^*(\mu)=(\log \frac 1\mu)^{1/2}$ (where $\log$ denotes the principal value), and the turning point 
%in the upper half plane closest to $x^*(\mu)$ is $(\log \frac 1\mu+2\pi i)^{1/2}$. The progressive path from $x_2$ to $\hat x_4$ should pass between these two turning points,
%and hence the distance from the path to the turning points becomes closer and closer when $\mu\to +0$.

%The quantity $\rho(\mu)$ which gives the precise asymptotics when $\mu\to +0$ with
%$\e$ is estimated from below,
%as in the proof of Theorem \ref{ref2}, by the imaginary part of
%$$
%\int_{x^*(\mu)}^{x^*(\mu)+ic\mu^\nu}\sqrt{A(t)^2-\mu^2}dt=i\int_0^{c\mu^\nu}
%\sqrt{A(x^*(\mu)+is)^2-\mu^2}\,ds
%$$
%The Taylor expansion of $A$ at $x=x^*(\mu)$ gives that the integrand has asymptotic
%behavior
%$\sqrt{2\mu A'(x^*(\mu))s}$
%for $0\le s\le c\mu^\nu$ and hence, together with the assumption (A5), we obtain
%$$
%\re\,\int_0^{c\mu^\nu}
%\sqrt{A(x^*(\mu)+is)^2-\mu^2}\,ds\gtrsim \mu^{1+\kappa+3\nu/2}.
%$$
%This finishes the proof of Theorem \ref{ev2}.

\bigskip

\section{Application of the exact WKB results to the focusing nonlinear Schr\"odinger equation}

We are now ready to give a rigorous justification of the asymptotics \eqref{asymptotics} stated in the introduction 
for a general class of initial data, $without$ having to replace them by their WKB approximation. 
The proof is a variant of Chapter 4 in \cite{kmm} (see also the first appendix below 
for the definition of the Riemann-Hilbert problem
and the second appendix for the deformations implemented involving a well-chosen phase function $g$),
as improved in \cite{lm}.
Having estimated the error of the WKB approximation at the level of the scattering data,
this error can be built into the Riemann-Hilbert analysis as another layer of approximtion.
\footnote{This is the strategy we explicitly proposed in \cite{kmm}.}

In \cite{kmm} we considered separately two complementary sets for $\lambda$:  a disc centered at $0$ with radius of order
$\epsilon^{\delta}$,  for some $\delta \in (1/2,1)$ and the complement of that disc.
In the complement of the disc we needed some delicate rigorous estimates  
while inside the disc we only relied on some symmetry properties.  
This was enough to be able to approximate the "given" Riemann-Hilbert problem (for the WKB-approximated pure soliton data)
by an approximate Riemann-Hilbert problem
which we could eventually analyse. The approximation was good enough away from zero (and this is all we are interested 
in because the solution of the NLS equation only depends 
on the behaviour of the solution of the Riemann-Hilbert problem near infinity).

Now, it turns out that the restriction that  $\delta \in (1/2,1)$ is too 
strong if we want to use the results of the previous sections.
Conveniently, there is an improvement of our argument in \cite{lm}, based on the observation that the approximation  of the
Blaschke product (see below) involving the (WKB approximations of the)  
eigenvalues by a logarithmic integral 
is better done separately, with slightly different choices,  
in different sides of the segment $[-iA(0), iA(0)]$. 
\footnote{As we explain in \cite{k2} and \cite{k3} this corresponds to different sheets of the logarithmic
kernel in this integral. Different approximations 
are required in different sheets for best results.}

Consequently, the choice of the small circle separating the two cases for $\lambda$ above can 
be allowed to have a radius independent of
$\epsilon$ as long as it is small enough in the sense of meeting some requirement spelled out in \cite{lm} 
(which in turn is imposed
by the asymptotic analysis of the approximate Riemann-Hilbert problem).
\footnote{We could still ignore the improvement in \cite{lm} and give a different argument involving
different circles in different steps of the Riemann-Hilbert sequence. We feel that the argument would become 
a bit more cumbersome.}§
This is certainly good enough for our purposes here.

This point in \cite{lm} is only explicitly detailed  
for the very specific case 
where the distribution of the eigenvalues is uniform
(the Satsuma-Yajima case) and where there is no reflection coefficient,
but it is clear that after an obvious modification it can apply to our general case. 
\footnote{The function $\theta^0(\lambda) = i \pi \lambda +\pi A$ of \cite{lm} has to be replaced
by the integral of the 
eigenvalue density in the general case.}

Under the assumptions (A1), (A2), (A3), (A5),  with $b(\mu)\ge c\mu^{\beta}$ for some $\beta>0$ and  assuming
that the action integral $S$ satisfies
$|S'(\mu)|\ge c\mu^{\gamma}$ for some $c>0$, with $\beta + \gamma <2$, we then have the following.

\begin{proposition}
Assuming the existence of the finite gap ansatz 
and also that the density of eigenvalues admits an analytic extension in the
upper half-plane, 
the asymptotics \eqref{asymptotics} stated in the introduction are valid.
\end{proposition}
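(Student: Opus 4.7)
The plan is to reduce the proof to the Riemann--Hilbert analysis of \cite{kmm}, modified as in \cite{lm}, by treating the difference between the actual scattering data and the WKB approximated (soliton ensemble) scattering data as a small, controllable perturbation of the underlying Riemann--Hilbert problem. First I would write down the Riemann--Hilbert problem (RHP) for the true scattering data of the Dirac operator $L$ with potential $A(x)$, as recalled in the first appendix: the jump on $\R$ is determined by the reflection coefficient $R(\lambda,\epsilon)$, while the poles are located at the true eigenvalues $\l=i\mu_n(\epsilon)$. In parallel I would write down the ``soliton ensemble'' RHP used in \cite{kmm}, in which the reflection coefficient is replaced by $0$ and the poles by the Bohr--Sommerfeld approximations $i\mu_n^{\mathrm{WKB}}(\epsilon)$, and note that \cite{kmm} (with the Lyng--Miller \cite{lm} improvement, and the assumed analytic extension of the eigenvalue density) already yields the stated asymptotics~\eqref{asymptotics} for the solution of this second RHP.

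Next, I would carry out the sequence of deformations of \cite{kmm}: the interpolation of the residue conditions by a Blaschke-type product, the opening of lenses dictated by a $g$-function chosen from the finite gap ansatz (whose existence is one of our assumptions), and the final comparison with an explicitly solvable model problem built from theta functions on the underlying Riemann surface $X$. At each step I would keep track of two sources of error relative to the ``exact'' RHP: (i)~the contribution of the true reflection coefficient $R(\lambda,\epsilon)$ on the real axis, and (ii)~the mismatch between the true poles $i\mu_n$ and $i\mu_n^{\mathrm{WKB}}$ that appears after interpolating residues by a logarithmic integral of the eigenvalue density. By Theorem~\ref{ref1} and Corollary~\ref{2.5}, the reflection coefficient is exponentially small in $\epsilon$ uniformly on $|\l|\ge \epsilon^\alpha$ for some $\alpha<1$, so after conjugation by the $g$-function (which, outside a neighborhood of the branch points, is uniformly bounded), the jump perturbation induced by $R$ is again exponentially small away from $0$. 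By Corollary~\ref{2.7}, under the hypothesis $\beta+\gamma<2$ we have $|\mu_n-\mu_n^{\mathrm{WKB}}|=o(\epsilon)$ uniformly for $\mu_n\ge\epsilon^\alpha$ with $\alpha<1/(\beta+\gamma)$, and in particular the interpolation error produced in the logarithmic integrals is uniformly small on contours located at an $\epsilon$-independent distance from $[-iA(0),iA(0)]$.

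The delicate step, and the one that needs the improvement of \cite{lm}, concerns the behavior of the RHP on a small neighborhood of $\l=0$. In \cite{kmm} one was forced to use a disc of radius $\epsilon^\delta$, $\delta\in(1/2,1)$, which is incompatible with the estimates we have: Corollary~\ref{2.5} and Corollary~\ref{2.7} only gain control on the regime $|\l|\ge \epsilon^\alpha$ with $\alpha<1$, so an $\epsilon^\delta$ disc with $\delta<1$ is in principle allowed, but matching with the local model near $0$ becomes tight. Here I would invoke \cite{lm}: by performing the Blaschke/logarithmic-integral approximation separately on the two sides of $[-iA(0),iA(0)]$, with different choices on the two sheets of the logarithmic kernel, one can replace the $\epsilon^\delta$ disc by an $\epsilon$-independent disc of any sufficiently small radius $\rho_*$ chosen to satisfy the requirements of the model RHP analysis. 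With such a $\rho_*$, our Theorems~\ref{ref1},~\ref{ref2}, \ref{ev1}, \ref{ev2} together with Corollaries~\ref{2.5} and~\ref{2.7} give control of the scattering data on $\{|\l|\ge \rho_*\}$ uniformly in $\epsilon$, which is precisely the regime needed to convert the RHP analysis of \cite{kmm}\cite{lm} from the WKB-approximated data to the true data. Inside $\{|\l|<\rho_*\}$ one uses the symmetry argument of \cite{kmm}, which is insensitive to the precise values of the scattering data there.

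The main obstacle, in my view, is the careful verification that the Lyng--Miller argument, stated in \cite{lm} only for the Satsuma--Yajima special case (uniform eigenvalue density, zero reflection coefficient), transports to the general setting. The essential replacement is that the linear function $\theta^0(\l)=i\pi\l+\pi A$ of \cite{lm} must be replaced by the integral of the eigenvalue density, whose analyticity is one of our assumptions; one must then check that the perturbations induced by the nonzero (but exponentially small) reflection coefficient and by the $o(\epsilon)$ eigenvalue shift do not spoil the contour deformations or the small-norm estimates near the model problem. Once this transport is verified, plugging in the quantitative bounds from Sections~4 and~5 and running the argument of \cite{kmm}\cite{lm} to the end yields~\eqref{asymptotics} uniformly on compact sets of $(x,t)$ off the caustics, proving the proposition.
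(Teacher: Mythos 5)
Your proposal follows the same overall strategy as the paper's proof: treat the discrepancy between the true scattering data and the WKB/soliton-ensemble data as a perturbation of the \cite{kmm}/\cite{lm} Riemann--Hilbert analysis, use Theorem~\ref{ref1} and Corollary~\ref{2.5} to control the reflection coefficient and Corollary~\ref{2.7} to control the eigenvalue shifts, and invoke the Lyng--Miller improvement to replace the $\epsilon^{\delta}$ disc by an $\epsilon$-independent one. However, a few concrete ingredients in the paper's argument are missing from yours and are not mere details.

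First, you never establish the one-to-one pairing between true eigenvalues $\lambda_n$ and Bohr--Sommerfeld roots $\lambda_n^{\mathrm{WKB}}$. The paper relies crucially on the Klaus--Shaw/Klaus exact count $N=[1/2+\|A\|_{L^1}/(\epsilon\pi)]$ to conclude that the \emph{number} of true eigenvalues equals the \emph{number} of exact roots of $e^{2iS(\mu)/\epsilon}=-1$; only then does one have a bijection, and only then do the Blaschke products $P$ and $P_{\mathrm{WKB}}$ have the same degree and can be compared termwise. Without this, your claim that the ``interpolation error is uniformly small'' on the contour is not justified. Second, you do not actually carry out the Blaschke product comparison for the eigenvalues near zero. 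The paper splits the product into eigenvalues with $|\lambda_n^{\mathrm{WKB}}|> G\epsilon^{\alpha}$ (where $\lambda_n-\lambda_n^{\mathrm{WKB}}=o(\epsilon)$ gives a $1+o(1)$ factor) and the remainder in $D^{\epsilon^{\alpha}}$, for which one only has $\lambda_n-\lambda_n^{\mathrm{WKB}}=O(\epsilon^{\alpha})$; with $O(\epsilon^{\alpha-1})$ such terms this contributes a factor $1+O(\epsilon^{2\alpha-1})$, and the paper then \emph{chooses} $\alpha>1/2$ to close the estimate. You instead appeal to ``the symmetry argument of \cite{kmm}'' inside the small disc, but that symmetry argument in \cite{kmm} is about the structure of the \emph{jump matrix} needed for the local parametrix, not a substitute for estimating the Blaschke ratio, which must be controlled uniformly on the whole contour $C$. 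Third, your statement that the $g$-function is ``uniformly bounded'' outside a neighborhood of the branch points is not the relevant mechanism: what the paper uses is that the symmetry condition $g(\lambda^{*})+g(\lambda)^{*}=0$ forces $g$ to be \emph{purely imaginary on} $\R$, so that $e^{\pm g/\epsilon}$ has modulus one there and cannot amplify the exponentially small reflection coefficient. Boundedness alone would not prevent an exponential factor $e^{\mathrm{Re}\,g/\epsilon}$ from overwhelming $R$.

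Your identification of the transport of the Lyng--Miller refinement to the general case as the remaining technical issue agrees with the paper (which handles it in the same informal way). But the gaps above — the eigenvalue count, the split Blaschke estimate with the choice $\alpha>1/2$, and the imaginarity of $g$ on $\R$ — are the parts of the proof that actually do the work, and you would need to supply them.
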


\begin{proof}
The proof is a variant of Chapter 4 in \cite{kmm}, as improved in \cite{lm}.
There are two modifications. 

First, because of the non-triviality of the reflection coefficient, the Riemann-Hilbert
contour is augmented by the real line (oriented from left to right so that the ``+"-side is on top). 
Still, we use exactly the same $g$ as in \cite{kmm}.
The important condition  
$$
\aligned
g(\lambda^*)+g(\lambda)^* = 0 \text{ for
all } \lambda\in \Bbb C\setminus (C\cup C^*),
\endaligned
$$
(recall that $C$ is a contour lying in the closed upper half lane, including $0$ and encricling the segment $(0,iA_0]$
and approaching $0$ at an angle strictly between $0$ and $\pi/2$ from the first quadrant) 
implies that $g(\lambda)$ is imaginary on the real line. Here  * denotes complex conjugation.

As a consequence, the fact that the reflection coefficient is exponentially small outside of a small 
open disc $D$ with center $0$ and radius small enough for the asymptotic analysis to go through
means that the jump matrix on 
$\Bbb R \setminus D$ is of the form $I+\text{ \it exponentially small}$ uniformly.

Within the disc the important observation is that the (no more trivial) jump matrix still respects the Schwarz reflection 
symmetry conditions and the positive definiteness condition needed for the application of the results in the Appendix A 
of \cite{kmm}.  

The second modification comes from the eigenvalues.  We first note (\cite{ks}, \cite{k}) that the eigenvalues are simple, imaginary,
their total number  $2 N$ is finite and
$$
\aligned
N = \left [1/2 + {1 \over {\epsilon \pi}}||A||_{L^1}\right ]
\endaligned
$$
where $[.]$ here denotes the integer part of a real. \footnote{In \cite{ks} the exact estimate is stated in the abstract 
but a proof is only presented
for the case of $A$ with compact support. Still, the proof presented easily generalises for the case of non-compact support. In fact,
the crucial integral in (2.15) of \cite{ks} is positive $also$ in the non-compact support case 
(as is proved for example in  \cite{k}) and this in turn implies the exact estimate for the number of eigenvalues.}

As a result, the number of the actual eigenvalues is {\it equal} to the number of the ``Bohr-Sommerfeld-approximate 
eigenvalues", i.e. the {\it exact} roots of $e^{2iS(\mu)/\e}=-1$.

Now, our rigorous estimates in the previous section give a 1-1 correspondence between the eigenvalues 
$\lambda_n$ and the Bohr-Sommerfeld
approximations $\lambda_n^{\rm WKB}$, as long as  $|\lambda_n^{\rm WKB}|$ are 
greater than $  G\epsilon^{\alpha}$, for some constant $G>0$ independent of $\epsilon$ and 
for some $\alpha \in (0,1)$
and in fact $\lambda_n-\lambda_n^{\rm WKB}= o(\epsilon)$ uniformly in that set.

It follows that there is also a 1-1 correspondence between the rest of the eigenvalues 
$\lambda_n$ and the Bohr-Sommerfeld
approximations $\lambda_n^{\rm WKB}$, in which case  $\lambda_n^{\rm WKB}$ are in the closed disc of radius  
$  G\epsilon^{\alpha}$ and thus $\lambda_n$ have to be in a disc of radius of order $O(\epsilon^{\alpha})$ 
(possibly somewhat larger than $G\epsilon^{\alpha}$).
Clearly then $\lambda_n-\lambda_n^{\rm WKB}= O(\epsilon^{\alpha})$.

The crucial quantities to consider are the two ``Blaschke" products (see the first section of the appendix)
$$\left(\prod_{n=0}^{N-1}
\frac{\lambda-\lambda_n^*}{\lambda-\lambda_n}\right)$$
one for the actual eigenvalues, say $P$, and one for their WKB approximaton say $P_{\rm WKB}$.

Let the open disc $D^{\epsilon^{\alpha}}$ have center $0$ and radius $G\epsilon^{\alpha}.$
Outside the disc $D^{\epsilon^{\alpha}}$  
we have shown (in Corollary 2.9) that the difference between the actual eigenvalues
and their formal WKB approximation (which was used in \cite{kmm}) is of order $o(\epsilon)$ uniformly. 
Let $P^{\epsilon^{\alpha}}$ and $P_{\rm WKB}^{\epsilon^{\alpha}}$ be the ``Blaschke" products as above, but excluding
the eigenvalues $\lambda_n$ lying in $D^{\epsilon^{\alpha}} $.

A short calculation gives  $P^{\epsilon^{\alpha}}=P_{\rm WKB}^{\epsilon^{\alpha}} (1+o(1))$
uniformly in $\lambda  \in C \setminus D^{\epsilon^{\alpha}}$, since the number of terms in the product is
$O(\epsilon)$ and each ratio $\frac{\lambda_n-\lambda_n^{\rm WKB}}{\lambda}$ is $o(\epsilon)$.

A similar calculation relates the product over eigenvalues that lie in $D^{\epsilon^{\alpha}} $.
The corresponding ratio between
the Blaschke products  is $ (1+O(\epsilon^{2\alpha-1}))$. We then simply choose $\alpha > 1/2$.

By splitting each product into two products accordingly depending on  whether $|\lambda_n^{\rm WKB}|$ are
greater than $  G\epsilon^{\alpha}$ or not, we see that $P=P_{\rm WKB} (1+o(1))$.   
These estimates are pointwise, for any fixed $\lambda $ in the complex plane. But clearly they are also
uniform in any set consisting of the complement of a disc centered in 0
with radius  independent of $\epsilon$.

The rest of the argument is the same as for the reflection coefficients.
For $\lambda \in C \setminus  D $ where $D$ is the small but $\epsilon$-independent disc mentioned above
we have a uniform $o(1)$ approximation.
Inside  the small  set  $  D $ 
we have the right symmetry
and positive definiteness conditions. 
Then an appropriate
local parametrix exists inside $D$ according to the results in the Appendix A of \cite{kmm},
giving rise to an appropriate global approximative solution of the Riemann-Hilbert problem
 defined in terms of theta functions near $\lambda = \infty$ and thus
leading to the formulae stated in the introduction.

\end{proof}

\begin{remark}

It can happen (non-generically, for isolated values of $\epsilon$ ) that the reflection coefficent actually has a pole singularity at $0$.
In other words there is a $spectral~singularity$ at $0$. 
In such a case one can amend the analysis by considering a  very small  circle around $0$ say of radius $O(\epsilon),$ and removing the singularity
exactly in the same way we have removed the poles due to the eignevalues in \cite{kmm}. The reflection coefficient of course is not
analytically extensible in general but one can simply extract the singular part of the reflection coefficent which is of course rational.
The main result is not affected.
\end{remark}

In \cite{kr} we have studied the energy equilibrium problem that underlies the function $g$ appearing in the change of variables of
Chapter 4 in \cite{kmm} which expresses the finite gap ansatz.
We have been able to show that the equilibrium measure $\mu$ exists for a particular contour and hence that the right $g$ exists
so long as the support of  $\mu$ does not touch the segment $[0, iA]$ at more than a finite number of points. This is referred to as
Assumption (A) in \cite{kr}.

The next proposition follows.

\begin{proposition} 
Under the assumptions before the statement of Proposition 6.1, under assumption (A) in \cite{kr} 
and also the assumption that the density of eigenvalues admits an analytic extension in the
upper half-plane, the asymptotics \eqref{asymptotics} stated in the introduction are valid.
\end{proposition}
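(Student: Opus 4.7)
The plan is to reduce Proposition 6.2 to Proposition 6.1 by showing that Assumption (A) of \cite{kr} is exactly what is needed to produce the finite gap ansatz that was simply postulated in Proposition 6.1; the asymptotics \eqref{asymptotics} then follow immediately from the latter.

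First I would invoke the main result of \cite{kr}: under Assumption (A), the energy equilibrium problem underlying the contour deformations of Chapter 4 in \cite{kmm} admits an equilibrium measure $\mu$ whose support is a finite disjoint union of analytic arcs on the chosen contour. From $\mu$ one constructs the phase function $g$ as an appropriate logarithmic transform, and the Euler-Lagrange variational characterization of $\mu$ translates directly into all the conditions required by the finite gap ansatz: the sign conditions on $\re g$ on bands and gaps, the jump conditions across the support, the asymptotic behaviour at infinity and at $\lambda=0$, and the Schwarz reflection symmetry $g(\lambda^*)+g(\lambda)^*=0$ used in the proof of Proposition 6.1.

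Second, with such a $g$ in hand together with the analyticity of the eigenvalue density in the upper half-plane (needed, as in \cite{lm}, to deform the Blaschke-product contour into a logarithmic integral), the hypotheses of Proposition 6.1 are fully met. The genus $G$ of the Riemann surface $X$ and the modulation parameters $a, U, k, w, Y, Z$ appearing in \eqref{asymptotics} are then read off from the endpoints of the support of $\mu$ exactly as in \cite{kmm}. A direct application of Proposition 6.1 yields the claimed asymptotics.

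The main obstacle, already handled in \cite{kr}, is the verification that Assumption (A) -- which forbids the equilibrium support from touching the spectral segment $[0,iA_0]$ on more than a finite set of points -- really does force a band/gap decomposition, and thereby the existence of a bona fide finite gap ansatz. Once this is granted, the present proposition is essentially a bookkeeping statement that packages the equilibrium measure of \cite{kr} and the analyticity hypothesis into the input required by Proposition 6.1.
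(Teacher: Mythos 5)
Your reduction of the proposition to Proposition 6.1, via the result of \cite{kr} that Assumption (A) yields the existence of the equilibrium measure and hence the phase function $g$ realizing the finite gap ansatz, is precisely the argument the paper gives (the paper states it more tersely in the paragraph preceding the proposition). The proposal is correct and takes essentially the same route.
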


It has eventually become clear that both the finite gap ansatz stated in \cite{kmm} and the assumption (A) in \cite{kr}
are too restrictive. Hints of this inadequacy were already apparent in  \cite{kmm} and the phenomenon was further explored in  \cite{lm}.
  
In \cite{k2} we have added an amendment to \cite{kr} showing how to extend the analysis without the 
assumption (A). Also, in an unpublished preprint \cite{k1}, 
reproduced here in the last section of the appendix, we show
how to proceed if the assumption of analyticity for the eigenvalue density is not true, 
by solving an auxiliary scalar Riemann-Hilbert problem. We end up with the following result.

\begin{proposition} 
The asymptotics \eqref{asymptotics} stated in the introduction are always 
valid under initial data that satisfy the assumptions 
before the statement of Proposition 6.1 above.
\end{proposition}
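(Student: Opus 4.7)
\begin{pre}
The plan is to bootstrap from Proposition 6.2 by successively removing its two extra hypotheses: Assumption (A) of \cite{kr} on the support of the equilibrium measure, and the analyticity of the eigenvalue density in the upper half-plane. Proposition 6.1 has already reduced the full asymptotic problem to a Riemann-Hilbert analysis in which the only ingredients needed from the scattering data are the uniform exponential smallness of the reflection coefficient away from a fixed neighborhood of $\lambda=0$ (Theorem \ref{ref1}, Corollary \ref{2.5}) and the $o(\epsilon)$ comparison of the actual eigenvalues with their Bohr-Sommerfeld approximants (Corollary \ref{2.7}). Both are consequences of the exact WKB analysis of sections 3--5 and are unaffected by either of the two extra hypotheses; so what remains is purely the Riemann-Hilbert / equilibrium-measure side of the argument.

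First I would eliminate Assumption (A) by invoking the amendment carried out in \cite{k2}. The role of (A) in \cite{kr} is to ensure that the support of the equilibrium measure $\mu$ for the energy functional governing the phase $g$ meets the vertical segment $[0,iA_0]$ only at finitely many points, which is what guarantees the clean band/gap structure of the finite-gap ansatz. The amendment in \cite{k2} replaces this by a more general classification of the support, valid in each open component of the $(x_0,t_0)$-plane bounded by caustics, and still produces a $g$-function satisfying the sign conditions required in Chapter 4 of \cite{kmm}. Plugging this more general $g$ into the Riemann-Hilbert deformation of the proof of Proposition 6.2 leaves the rest of the argument structurally unchanged: the jump on $\mathbb{R}\setminus D$ is still $I+$ exponentially small by Theorem \ref{ref1}, the Blaschke product comparison inside the small disc $D$ centered at $0$ goes through verbatim using the 1-1 correspondence between $\lambda_n$ and $\lambda_n^{\rm WKB}$, and the local parametrix of Appendix A of \cite{kmm} still applies thanks to the Schwarz reflection symmetry and positive definiteness of the jump inside $D$.

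Second I would remove the analyticity assumption on the eigenvalue density by appealing to the construction of \cite{k1}, which is reproduced in the last section of the appendix. Analyticity was needed so that the logarithmic integral appearing in the Blaschke/$g$-function analysis of \cite{lm} could be deformed off the support of the eigenvalues on $[0,iA_0]$; without it, direct deformation fails. The workaround is to introduce an auxiliary scalar Riemann-Hilbert problem whose jump along $[0,iA_0]$ is precisely (a multiple of) the non-analytic density and whose solution is analytic off this segment. That solution plays the role of the missing analytic extension: substituting it into the outer model problem yields the same model Riemann-Hilbert problem on the $g$-contour as in the analytic case, up to an $o(1)$ perturbation of the outer jump that is absorbed by a standard small-norm estimate, and the theta-function formulas of \cite{kmm} still describe the leading order near $\lambda=\infty$.

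The main obstacle, as I see it, is this second step. One has to solve the auxiliary scalar Riemann-Hilbert problem under only the mild regularity on the eigenvalue density that comes from (A1)--(A3) and the Bohr-Sommerfeld counting, and then check that its solution (i) matches compatibly onto the outer parametrix at $\lambda=\infty$, (ii) does not spoil the Schwarz symmetry and positivity conditions inside the $\epsilon$-independent disc $D$ around $0$ that are required for the local parametrix of Appendix A of \cite{kmm}, and (iii) interacts correctly with the spectral singularity amendment of the Remark after Proposition 6.1, should the reflection coefficient have a pole at $0$ for an exceptional value of $\epsilon$. Once (i)--(iii) are verified, the rest of the proof of Proposition 6.1 transcribes without change, and \eqref{asymptotics} follows under only the base hypotheses listed before Proposition 6.1.
\end{pre}
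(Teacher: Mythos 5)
Your proposal is correct and follows essentially the same route as the paper: bootstrap from Proposition 6.2 by removing Assumption (A) of \cite{kr} via the amendment in \cite{k2}, and removing the analyticity hypothesis on the eigenvalue density via the auxiliary scalar Riemann--Hilbert problem of \cite{k1}, reproduced in the last section of the appendix. The paper itself gives no separate proof environment for this proposition but relies on precisely these two citations together with the appendix construction, so your reconstruction (including the explicit checklist of compatibility conditions) matches the intended argument.
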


ACKNOWLEDGEMENT.
\thanks{Research supported by the ARISTEIA II program of the Greek Secretariat of Research and Technology under Grant No.\ 3964.
The second author also acknowledges the generous support of Ritsumeikan University during three visits in 2015-2018.}

\appendix
\section{A Riemann-Hilbert
factorisation problem for the focusing nonlinear Schr\"odinger equation}

We first present some elementary facts about the Riemann-Hilbert 
factorisation formulation of the inverse scattering problem for the 
focusing nonlinear Schr\"odinger equation, as described in \cite{kmm}.
We describe first the case of reflectionless data which has been the main concern in  \cite{kmm}. 
Then we indicate how the problem changes  if we allow the reflection coefficient to be non-zero.

The focusing nonlinear Schr\"odinger equation is ``completely integrable".
Although there is no precise definition of this notion for infinite dimensional 
dynamical systems, one thing it always entails is the fact that it admits a ``Lax pair".  In our context
this means that, for arbitrary
$\epsilon$, it is represented as the compatibility condition
for two systems of linear ordinary differential equations:
\begin{equation}
\label{Lax1}
\epsilon\partial_x\left[\begin{array}{c}u_1\\u_2\end{array}\right]=
\left[\begin{array}{cc}-i\lambda & \psi\\-\psi^* & i\lambda\end{array}
\right]\left[\begin{array}{c}u_1\\u_2\end{array}\right]\,,
\end{equation}
\begin{equation}
\label{Lax2}
i\epsilon\partial_t\left[\begin{array}{c}u_1\\u_2\end{array}\right]=
\left[\begin{array}{cc} \lambda^2 - |\psi|^2/2 & i\lambda\psi -
\epsilon\partial_x\psi/2\\-i\lambda\psi^*-\epsilon\partial_x\psi^*/2 &
-\lambda^2 +|\psi|^2/2\end{array}\right]
\left[\begin{array}{c}u_1\\u_2\end{array}\right]\,,
\end{equation}
where $\lambda$ is an arbitrary complex parameter.

The $N$-soliton solutions  of the
nonlinear Schr\"odinger equation can be thought of as those complex functions
$\psi(x,t)$ for which there exist simultaneous column vector solutions
of the linear ODEs above in the particularly
simple form:
\begin{equation}
\begin{array}{rcl}
{\bf u}^+(x,t,\lambda)&=&\displaystyle\left[\begin{array}{c}
\displaystyle\sum_{p=0}^{N-1}A_p(x,t)\lambda^p\\\\
\displaystyle \lambda^N + \sum_{p=0}^{N-1}B_p(x,t)\lambda^p
\end{array}\right]\exp(i(\lambda x +\lambda^2 t)/\epsilon)\,,\\\\
{\bf u}^-(x,t,\lambda)&=&\displaystyle\left[\begin{array}{c}
\displaystyle\lambda^N +\sum_{p=0}^{N-1}C_p(x,t)\lambda^p\\\\
\displaystyle\sum_{p=0}^{N-1}D_p(x,t)\lambda^p\end{array}
\right]\exp(-i(\lambda x +\lambda^2 t)/\epsilon)\,,
\end{array}
\end{equation}
satisfying the relations
\begin{equation}
\begin{array}{rcl}
{\bf u}^+(x,t,\lambda_k) &=&\gamma_k{\bf u}^-(x,t,\lambda_k)\,,\\
-\gamma_k^*{\bf u}^+(x,t,\lambda_k^*)&=&{\bf u}^-(x,t,\lambda_k^*)\,,\hspace{0.2 in}k=1,\dots,N\,,
\end{array}
\end{equation}
for some distinct complex numbers $\lambda_0,\dots,\lambda_{N-1}$ in the
upper half-plane and nonzero complex numbers (not necessarily distinct)
$\gamma_0,\dots,\gamma_{N-1}$.  It is easy to check that given the numbers
$\{\lambda_k\}$ and $\{\gamma_k\}$, the relations
determine the coefficient functions $A_p(x,t)$, $B_p(x,t)$, $C_p(x,t)$
and $D_p(x,t)$ in terms of exponentials via the solution of a square
inhomogeneous linear algebraic system.  In the classic book of Faddeev and Takhtajan 
\cite{ft} it is shown that this linear system is always nonsingular assuming
the $\{\lambda_k\}$ are distinct and nonreal and that the $\{\gamma_k\}$
are nonzero.  The solution of the nonlinear Schr\"odinger equation for
which the column vectors ${\bf u}^\pm(x,t,\lambda)$ are simultaneous
solutions of the linear ODEs turns out to be
\begin{equation}
\psi(x,t)=2iA_{N-1}(x,t)\,.
\label{eq:psi}
\end{equation}

A typical initial condition $A(x)$  will not
correspond exactly to a multisoliton solution.  As is well-known (\cite{zs}, \cite{ft})
the procedure  generally
begins with the study the solutions of the linear ODEs for real
$\lambda$ and for $\psi=A(x)$.  One obtains from this analysis a
complex-valued {\em transmission coefficient} 
 $T(\lambda)=1/a(\lambda)$, $\lambda\in{\mathbb R}$. 
It turns out that the function $a(\lambda)$ has an
analytic continuation into the whole upper half-plane, and its zeros
occur at values of $\lambda$ for which  there is an
$L^2({\mathbb R})$ eigenfunction.  In this sense, the study of the
scattering problem for real $\lambda$ yields results for complex
$\lambda$ by unique analytic continuation.  The function $a(\lambda)$
can be interpreted as a Wronskian
between two particular solutions  that have
analytic continuations into the upper half-plane.  Thus at each $L^2$
eigenvalue $\lambda_k$, there is a complex number $\gamma_k$
that is the ratio of these two analytic solutions.  In addition to the
transmission coefficient, one also finds a complex-valued function
$b(\lambda)$ that gives rise to a {\em reflection coefficient}
$R(\lambda):=b(\lambda)/a(\lambda)$,
$\lambda\in{\mathbb R}$.  Following
 Zakharov and Shabat \cite{zs} we have:
\begin{enumerate}
\item
When $\psi(x,t)$ is the solution of the focusing NLS  with initial data
$A(x)$, then for each $t>0$ one has different coefficients in the
linear problem, and therefore the eigenvalues
 $\{\lambda_k\}$, proportionality constants
 $\{\gamma_k\}$ and the
function $b(\lambda)$, can be computed independently for each $t>0$.
However, the eigenvalues
$\{\lambda_k\}$ (more generally the function $a(\lambda)$) and also
$|b(\lambda)|$, $\lambda\in{\mathbb R}$, are independent of $t$, and
the proportionality constants $\{\gamma_k\}$ and $\arg(b(\lambda))$,
$\lambda\in{\mathbb R}$ evolve simply in time.  Thus,
$R(\lambda,t)=R(\lambda,0)\exp(-2i\lambda^2 t/\epsilon)$ and
$\gamma_k(t)=\gamma_k(0)\exp(-2i\lambda_k^2 t/\epsilon)$.
\item
The function $\psi(x,t)$ can be reconstructed at later times $t>0$ in
terms of the discrete spectrum $\{\lambda_k\}$, $\{\gamma_k\}$, and
the reflection coefficient $R(\lambda)$.
\end{enumerate}

If for the initial condition $A(x)$ we have
$b(\lambda)\equiv 0$, then the step of reconstructing the solution of
the initial value problem  is essentially what we have
already described.  Namely, one solves the linear equations
 for the coefficient $A_{N-1}(x,t)$ and then the
solution of the initial value problem is given by (\ref{eq:psi}).  $N$
turns out to be the number of $L^2$ eigenvalues for $\psi_0(x)$ in the upper
half-plane.

In general, the reconstruction of $\psi$ from the scattering data can
be recast in terms of the solution of a matrix-valued meromorphic
Riemann-Hilbert problem.
One seeks (for each $x$ and $t$, which play the role of parameters) a
matrix-valued function ${\bf m}(\lambda)$ of $\lambda$ that is jointly
meromorphic in the upper and lower half-planes and for which
\begin{enumerate}
\item
${\bf m}(\lambda)\rightarrow {\mathbb I}$ in each half-plane as
$\lambda\rightarrow \infty$.
\item
The singularities of ${\bf m}(\lambda)$ are completely specified.
There are simple poles at the eigenvalues $\{\lambda_k\}$ and the
complex conjugates with residues of a certain specified type (see
below).
\item
On the real axis $\lambda\in{\mathbb R}$, there is the {\em jump
relation}
\begin{equation}
{\bf m}_+(\lambda)=
{\bf m}_-(\lambda){\bf v}(\lambda)\,,
\hspace{0.2 in}
{\bf m}_\pm(\lambda):=
\lim_{\eta\downarrow 0}{\bf m}(\lambda\pm i\eta)
\label{eq:updowndef}
\end{equation}
where ${\bf v}(\lambda)$ is a certain {\em jump matrix} \index{jump
matrix} built out of $R(\lambda)$ and depending {\em explicitly} on
$x$ and $t$ (and $\epsilon$).  The jump matrix becomes the identity matrix for
$b(\lambda)\equiv 0$.
\end{enumerate}

If the boundary values ${\bf m}_\pm(\lambda)$ are continuous, and
if $b(\lambda)\equiv 0$, then it is easy to see that the solution ${\bf
m}(\lambda)$ must be a rational function of $\lambda$.  
In \cite{kmm}  this is the only case considered. In the current paper however
the jump matrix is non-trivial. In fact
\begin{equation}
{\bf v}(\lambda)=
\left[\begin{array}{cc}1 & R(\lambda) \exp\left(\frac{1}{\epsilon}(-2i\lambda x-2i\lambda^2 t)\right)\\
R^* (\lambda) \exp\left(\frac{1}{\epsilon}(2i\lambda x+2i\lambda^2 t)\right) & 1+|R(\lambda)|^2 \end{array}\right]
\label{eq:jumpreal}
\end{equation}

Continuing with the pure soliton case of $b(\lambda)\equiv 0$, 
from the column vectors
${\bf u}^\pm(x,t,\lambda)$, we build a matrix solution of \eqref{Lax1}-\eqref{Lax2}:
\begin{equation}
\Psi(\lambda):=
[{\bf u}^-(x,t,\lambda),{\bf u}^+(x,t,\lambda)]
\mbox{diag}\left(\prod_{j=1}^N(\lambda-\lambda_j)^{-1},
\prod_{j=1}^N(\lambda-\lambda_j^*)^{-1}\right)
\exp(i\sigma_3\lambda^2 t/\epsilon)\,.
\end{equation}
This special matrix solution is the familiar
Jost solution.
If we now define a
matrix ${\bf m}(\lambda)$ by
\begin{equation}
{\bf m}(\lambda):= \Psi(\lambda)\exp(i\sigma_3\lambda x/\epsilon)\,,
\label{eq:Psi-to-m}
\end{equation}
then we find using  \eqref{Lax1}-\eqref{Lax2}
that for all fixed complex $\lambda$
different from the eigenvalues $\{\lambda_k\}$ and their complex
conjugates, ${\bf m}(\lambda)$ is a uniformly bounded function of $x$
that satisfies ${\bf m}(\lambda)\rightarrow{\mathbb I}$ as
$x\rightarrow +\infty$.

We can
deduce from the explicit form of the vectors ${\bf
u}^\pm(x,t,\lambda)$ and from the relations \eqref{Lax1}-\eqref{Lax2}
that ${\bf m}(\lambda)$ solves the following problem.

Given the discrete data $\{\lambda_k\}$ and $\{\gamma_k\}$, find a
matrix ${\bf m}(\lambda)$ with the following two properties:
\begin{enumerate}
\item
${\bf m}(\lambda)$ is a rational function of $\lambda$, with simple
poles confined to the eigenvalues $\{\lambda_k\}$ and the complex conjugates.
At the singularities:
\begin{equation}
\begin{array}{rcl}
\displaystyle
\mathop{\rm Res}_{\lambda=\lambda_k}
{\bf m}(\lambda)&=&
\displaystyle\lim_{\lambda\rightarrow\lambda_k}{\bf m}(\lambda)
\left[\begin{array}{cc}0 & 0\\c_k(x,t) & 0\end{array}\right]
\,,\\\\
\displaystyle
\mathop{\rm Res}_{\lambda=\lambda_k^*}
{\bf m}(\lambda)&=&
\displaystyle\lim_{\lambda\rightarrow\lambda_k^*}{\bf m}(\lambda)
\left[\begin{array}{cc}0 & -c_k(x,t)^*\\0 & 0\end{array}\right]
\,,
\end{array}
\label{eq:residueconds}
\end{equation}
for $k=0,\dots,N-1$, with
\begin{equation}
c_k(x,t):=\left(\frac{1}{\gamma_k}\right)
\frac{\displaystyle\prod_{n=0}^{N-1}(\lambda_k-\lambda_n^*)}
{\displaystyle\prod_{\scriptstyle n=0\atop
\scriptstyle n\neq k}^{N-1}(\lambda_k-\lambda_n)}
\exp(2i(\lambda_k x +\lambda_k^2
t)/\epsilon)\,.
\label{eq:residuecoeffs}
\end{equation}
\item
\begin{equation}
{\bf m}(\lambda)
\rightarrow{\mathbb I}\,,\hspace{0.2 in}\mbox{as}\hspace{0.2 in}
\lambda\rightarrow\infty\,.
\end{equation}
\end{enumerate}

These two properties actually characterise the matrix function ${\bf
m}(\lambda)$ uniquely.
We have (\cite{kmm})

\begin{proposition}
The meromorphic Riemann-Hilbert Problem corresponding to
the discrete data $\{\lambda_k\}$ and $\{\gamma_k\}$ has a unique
solution whenever the $\lambda_k$ are distinct in the upper half-plane
and the $\gamma_k$ are nonzero.  The function defined from the
solution by
\begin{equation}
\psi:=2i\lim_{\lambda\rightarrow\infty}\lambda m_{12}(\lambda)
\label{eq:psiagain}
\end{equation}
(that this limit exists is part of the proposition) is a nontrivial
$N$-soliton solution of the focusing nonlinear Schr\"odinger
equation.
\end{proposition}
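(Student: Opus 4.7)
\begin{pre}
The plan has two essentially independent parts: (a) prove that the meromorphic Riemann--Hilbert problem admits a unique solution; (b) verify that the reconstruction formula \eqref{eq:psiagain} produces a nontrivial solution of the focusing NLS equation.

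For part (a), I would first reduce the problem to finite-dimensional linear algebra. Since ${\bf m}(\lambda)$ is rational, its only singularities are the prescribed simple poles at $\{\lambda_k\}$ and $\{\lambda_k^*\}$, and ${\bf m}(\lambda)\to \mathbb{I}$ at infinity, the partial fraction expansion has the form
\begin{equation*}
{\bf m}(\lambda)=\mathbb{I}+\sum_{k=0}^{N-1}\frac{\mathop{\rm Res}_{\lambda=\lambda_k}{\bf m}}{\lambda-\lambda_k}+\sum_{k=0}^{N-1}\frac{\mathop{\rm Res}_{\lambda=\lambda_k^*}{\bf m}}{\lambda-\lambda_k^*}.
\end{equation*}
The residue conditions \eqref{eq:residueconds} force each residue matrix to be rank one, with the second column of $\mathop{\rm Res}_{\lambda_k}{\bf m}$ vanishing and its first column proportional to the second column of ${\bf m}(\lambda_k)$ (and a conjugate relation at $\lambda_k^*$). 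Substituting the partial fraction expansion into these identities and evaluating at the $2N$ poles yields a closed inhomogeneous linear algebraic system of size $2N$ for the $2N$ unknown column vectors, whose matrix is a Cauchy-like matrix built from the $\lambda_k$, $\lambda_k^*$ and the coefficients $c_k(x,t)$.

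The main obstacle is proving the nondegeneracy of this Cauchy-like system. I would handle this by the standard ``vanishing lemma'' argument: if ${\bf m}^{\rm h}(\lambda)$ solves the associated homogeneous problem (same residue structure but tending to $0$ at infinity), form the scalar
\begin{equation*}
H(\lambda):={\bf m}^{\rm h}(\lambda)\,\sigma_2\,{\bf m}^{\rm h}(\lambda^*)^{*}\,\sigma_2.
\end{equation*}
Using the Schwarz reflection symmetry of the residue conditions (the $\lambda_k^*$ conditions are the complex conjugates of the $\lambda_k$ conditions, twisted by $\sigma_2$), one checks that the poles of $H$ at each $\lambda_k$ and $\lambda_k^*$ cancel. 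Hence $H$ is entire, vanishes at infinity, and therefore is identically zero. The positive-definiteness that underlies this cancellation (it is really an orthogonality statement in a weighted inner product on column vectors) forces ${\bf m}^{\rm h}\equiv 0$, giving uniqueness and, by the rank--nullity theorem applied to the linear system, also existence. The same computation applied to ${\bf m}(\lambda)\sigma_2{\bf m}(\lambda^*)^*\sigma_2$ itself shows that $\det{\bf m}\equiv 1$, so ${\bf m}(\lambda)^{-1}$ exists everywhere.

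For part (b), I would follow the classical dressing argument. Write the large-$\lambda$ expansion ${\bf m}(\lambda)=\mathbb{I}+{\bf m}^{(1)}(x,t)/\lambda+O(\lambda^{-2})$ and set $\psi:=2i\,m^{(1)}_{12}(x,t)$; this limit exists by the partial fraction representation. Differentiate the defining relations of ${\bf m}$ with respect to $x$ and $t$: the explicit dependence of $c_k(x,t)$ in \eqref{eq:residuecoeffs} is exactly $\exp(2i(\lambda_k x+\lambda_k^2 t)/\e)$, so $\partial_x{\bf m}\cdot{\bf m}^{-1}+(i\lambda/\e)\sigma_3$ and $\partial_t{\bf m}\cdot{\bf m}^{-1}+(i\lambda^2/\e)\sigma_3$ are rational in $\lambda$, have no poles at $\{\lambda_k,\lambda_k^*\}$ (the pole terms cancel against the evolution of $c_k$), and are bounded at infinity. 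Hence they are polynomials in $\lambda$, of degree $1$ and $2$ respectively, whose coefficients are read off from the large-$\lambda$ expansion of ${\bf m}$ in terms of $\psi$ and $\psi_x$. This is precisely the Lax pair \eqref{Lax1}--\eqref{Lax2}, and its compatibility is the focusing NLS equation for $\psi$. Nontriviality ($\psi\not\equiv 0$) is a consequence of the fact that all $\gamma_k$ are nonzero: if $\psi$ vanished identically then the Jost solutions would reduce to exponentials with no bound states, contradicting the prescribed discrete data. I expect the nontriviality check and the verification that the pole-cancellation in the $x$- and $t$-derivatives really yields polynomials of the stated degree to be the only small technical annoyances; the conceptual heart is the vanishing lemma of part (a).
\end{pre}
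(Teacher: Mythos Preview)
The paper does not supply its own proof of this proposition; it is quoted as a known result from \cite{kmm}, and the crucial nonsingularity of the underlying linear algebraic system is attributed earlier in the appendix to Faddeev--Takhtajan \cite{ft}. Your outline (partial-fraction reduction to a finite Cauchy-type system, a vanishing-lemma via the Schwarz-reflected product, and the dressing/Lax-pair computation to recover NLS) is the standard argument and is essentially what the cited references do, so there is no disagreement in approach.

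One small slip: the quantity $H(\lambda)={\bf m}^{\rm h}(\lambda)\,\sigma_2\,{\bf m}^{\rm h}(\lambda^*)^{*}\,\sigma_2$ is a $2\times 2$ matrix, not a scalar; the Liouville argument and the subsequent positivity argument (typically carried out by taking a trace or looking at diagonal entries on the real axis) go through unchanged, so this does not affect the validity of your plan.
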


For an asymptotic analysis it is useful
to convert the meromorphic Riemann-Hilbert problem back
into a sectionally holomorphic Riemann-Hilbert
problem. This can be easily be done by  constructing (for example) small circles around the poles and redifining
the unkonwn inside those circles accordingly, see \cite{DKKZ}.
Here, we proceed as follows.
 
Let $C$ be a simple closed contour that is the boundary
of a simply-connected domain $D$ in the upper half-plane that contains
$all$ of the eigenvalues $\{\lambda_k\}$.  We assign to $C$ a counterclockwise 
orientation.
By $C^*$ and
$D^*$ we mean the corresponding complex conjugate sets in the lower
half-plane, and we assign  both
loops  the same orientation.

It is not hard to see (\cite{kmm}) that for our symmetric even data $A(x)$ one has $\gamma_k=(-1)^k$.
Still, it has  proved   convenient in the asymptotic analysis of \cite{kmm} and \cite{kr}
to interpolate the proportionality constants 
as follows. One can easily
choose a constant $Q$ (always 1 or -1, but depending on $x,t$) and a function $X(\lambda)$
analytic in $D$ so that
\begin{equation}
\gamma_k=Q\exp(X(\lambda_k)/\epsilon)\,,\hspace{0.3 in}k=0,\dots,N-1\,.
\end{equation}
In general, $X(\lambda)$ could be systematically constructed as an
interpolating polynomial of degree $\sim N$.  In our (symmetric) case
the phases $\gamma_k$ are highly correlated so that for
very large $N$ one can easily choose for $X(\lambda)$ a polynomial of low
degree or another simple expression. Note that the
interpolant of the $\gamma_k$ is not
necessarily unique; for each $K$ in some indexing set (an integer) there is a
distinct pair $(Q_K,X_K(\lambda))$ such that for all $k$,
$\gamma_k=Q_K\exp(X_K(\lambda_j)/\epsilon)$. 

\begin{remark} 
In \cite{k2} we have made use of this
freedom.  We have found that the best choice depends on the Riemann surface sheet where our contour is allowed to expand. 
The issue of the right choice is also related to the improvement of the approximation achieved in \cite{lm}.

\end{remark}

With the help of the interpolant of the proportionality
constants\index{proportionality constants!interpolant of}, we define a
new matrix ${\bf M}(\lambda)$ for $\lambda\in {\mathbb C}\setminus
(C\cup C^*)$ in the following way.  First, for all $\lambda\in D$, set
\begin{equation}
{\bf M}(\lambda):={\bf m}(\lambda)
\left[\begin{array}{cc}
1 & 0\\\\
\displaystyle -\left(\frac{1}{Q_K}\right)\left(\prod_{n=0}^{N-1}
\frac{\lambda-\lambda_n^*}{\lambda-\lambda_n}\right)\exp\left(\frac{1}{\epsilon}
(2i\lambda x+
2i\lambda^2 t -X_K(\lambda))\right) & 1\end{array}\right]
\,.
\label{eq:Blaschke}
\end{equation}
Next, for all $\lambda\in D^*$, set
\begin{equation}
{\bf M}(\lambda):=\sigma_2{\bf M}(\lambda^*)^*\sigma_2\,.
\label{eq:under}
\end{equation}
Finally, for all $\lambda\in {\mathbb C}\setminus (\overline{D}\cup
\overline{D}^*)$ ({\em i.e.} in the rest of the complex plane minus
$C\cup C^*$) simply set
\begin{equation}
{\bf M}(\lambda):={\bf m}(\lambda)\,.
\label{eq:outside}
\end{equation}

It is straightforward to verify that by our choice of interpolants,
and the ``Blaschke" factor  appearing in
(\ref{eq:Blaschke}), that ${\bf M}(\lambda)$ {\em has no poles in $D$
or $D^*$ and hence is sectionally holomorphic in the complex $\lambda$
plane}.  By definition, we have preserved the reflection symmetry of
${\bf m}(\lambda)$ so that for all $\lambda\in {\mathbb C}
\setminus (C\cup C^*)$ we have:
\begin{equation}
{\bf M}(\lambda^*)=\sigma_2{\bf M}(\lambda)^*\sigma_2\,.
\end{equation}
The matrix ${\bf M}(\lambda)$ has continuous boundary values from
either side on $C$ and $C^*$.  To describe these, let the left
(respectively right) side of the oriented contour $C\cup C^*$ be
denoted by ``$+$'' (respectively ``$-$'').  For $\lambda\in C\cup
C^*$ define
\begin{equation}
{\bf M}_\pm(\lambda):=\!\!\!\!\!\!\!\!\!\!\!\!\!\!\!\!
\lim_{\begin{array}{c}\scriptstyle\mu\rightarrow\lambda
\\\scriptstyle\mu\in\pm\mbox{ side of }C\cup C^*\end{array}}
\!\!\!\!\!\!\!\!\!\!\!\!\!\!\!\!{\bf M}(\mu)\,,
\end{equation}
that is, the nontangential limits from the left and right sides.
Then, using the fact that ${\bf m}(\lambda)$ is analytic on $C\cup
C^*$ and the piecewise definition of ${\bf M}(\lambda)$ given by
(\ref{eq:Blaschke}), (\ref{eq:under}), and (\ref{eq:outside}),
we find
\begin{equation}
\begin{array}{rcll}
{\bf M}_+(\lambda)&=&{\bf M}_-(\lambda){\bf
v}_{\bf M}(\lambda)\,,&\lambda\in C_\omega\,,\\\\
{\bf M}_+(\lambda)&=&{\bf M}_-(\lambda)
\sigma_2{\bf v}_{\bf M}(\lambda^*)^*\sigma_2\,,
&\lambda\in C^*\,,
\end{array}
\label{eq:jumps}
\end{equation}
where for $\lambda\in C$,
\begin{equation}
{\bf
v}_{\bf M}(\lambda):=\left[\begin{array}{cc}
1 & 0\\\\ \displaystyle
-\omega\left(\frac{1}{Q_K}\right)\left(\prod_{n=0}^{N-1}
\frac{\lambda-\lambda_n^*}{\lambda-\lambda_n}\right)
\exp\left(\frac{1}{\epsilon}(2i\lambda x+2i\lambda^2 t
-X_K(\lambda))\right) & 1
\end{array}\right]\,.
\label{eq:vM}
\end{equation}
Now, by defining the discrete measure
\begin{equation}
d\mu = \sum_{k=0}^{N-1} \left[\epsilon\delta_{\lambda_k^*} -
\epsilon\delta_{\lambda_k}
\right]\,,
\label{eq:discretetrue}
\end{equation}
we see that for any branch of the logarithm,
\begin{equation}
\prod_{k=0}^{N-1}\frac{\lambda-\lambda_k^*}{\lambda-\lambda_k} = \exp\left(
\frac{1}{\epsilon}\int \log(\lambda-\eta)\,d\mu(\eta)\right)\,.
\label{eq:Blaschke2}
\end{equation}

In the general case (with a nontrivial reflection coefficient $R$) the same calculation
applies and does not affect the jump across the real line.
Suppose then the eigenvalues $\{\lambda_k\}$ and proportionality constants
$\{\gamma_k\}$ are given along with an appropriate interpolation
$Q_K\exp(X_K(\lambda)/\epsilon)$ of the $\gamma_k$ and a smooth closed 
oriented contour $C$ enclosing the eigenvalues in the upper half-plane. 
Suppose also that $R$, the reflection coefficient corresponding to the initial data
is also given. We
define a Riemann-Hilbert problem as follows.
 
Find a matrix function ${\bf M}(\lambda)$ that satisfies:
\begin{enumerate}
\item
 ${\bf M}(\lambda)$ is analytic in each component of
${\mathbb C}\setminus (C\cup C^*)$.
\item
 ${\bf M}(\lambda)$ assumes continuous boundary
values on $C\cup C^*$ and the real line.
\item
 The boundary values taken on $C\cup C^*$
satisfy the relations (\ref{eq:jumps}) with ${\bf v}_{\bf M}(\lambda)$
given explicitly by (\ref{eq:vM}).
In the case where the reflection coefficient is non-trivial, there is  
a jump condition across the real line (\ref{eq:jumpreal}).
\item
 ${\bf M}(\lambda)$ is normalized at infinity:
\begin{equation}
{\bf M}(\lambda)\rightarrow {\mathbb I}\mbox{  as  }
\lambda\rightarrow\infty\,.
\end{equation}
\end{enumerate}

\begin{proposition}
The holomorphic Riemann-Hilbert Problem~ has a unique
solution ${\bf M}(\lambda)$ whenever the $\lambda_k$ are distinct and
nonreal, and the $\gamma_k$ are nonzero.  The function defined by
\begin{equation}
\psi :=2i\lim_{\lambda\rightarrow\infty}\lambda M_{12}(\lambda)\,,
\label{eq:fieldatinfinity}
\end{equation}
is independent of the 
particular choice of loop contour $C$ and interpolant index $K$,
and is the  solution of the focusing nonlinear Schr\"odinger
equation corresponding to the reflection coefficient $R$ and the discrete data $\{\lambda_k\}$ and
$\{\gamma_k\}$.
\end{proposition}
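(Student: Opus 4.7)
The plan is to address in turn the three claims packaged in the proposition: solvability of the holomorphic RHP, independence of the formula $\psi=2i\lim_{\lambda\to\infty}\lambda M_{12}(\lambda)$ from the auxiliary data $(C,K)$, and the fact that this $\psi$ is the focusing NLS solution associated with the prescribed scattering data $(R,\{\lambda_k\},\{\gamma_k\})$.

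For existence and uniqueness I would use the Beals--Coifman singular-integral formulation together with Zhou's vanishing lemma. The jump ${\bf v}_{\bf M}$ on $C$ defined by (\ref{eq:vM}) is unipotent lower triangular and the jump on $C^*$ is its $\sigma_2$-conjugate, so together they satisfy the Schwarz symmetry ${\bf v}_{\bf M}(\lambda^*)=\sigma_2{\bf v}_{\bf M}(\lambda)^{-*}\sigma_2$. On the real line the jump (\ref{eq:jumpreal}) factors as
$$
{\bf v}(\lambda)=\begin{pmatrix}1 & 0 \\ R^* e^{-\phi} & 1\end{pmatrix}\begin{pmatrix}1 & Re^{\phi} \\ 0 & 1\end{pmatrix},\qquad \phi=\tfrac{-2i\lambda x-2i\lambda^2 t}{\epsilon},
$$
and, because $\phi$ is purely imaginary for $\lambda\in\R$, the matrix ${\bf v}(\lambda)$ is Hermitian with trace $2+|R|^2$ and determinant $1$, hence positive definite. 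These are the hypotheses of Zhou's vanishing theorem: the associated singular-integral operator $I-C_{\bf v}$ on $L^2(C\cup C^*\cup\R)$ is Fredholm of index zero with trivial kernel, which yields a unique normalized solution ${\bf M}$.

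For independence from $C$ and $K$, the key observation is that the transformation (\ref{eq:Blaschke})--(\ref{eq:outside}) is invertible: undoing it recovers a matrix ${\bf m}$, holomorphic off $\R$, with simple poles at $\{\lambda_k,\lambda_k^*\}$ and residue conditions (\ref{eq:residueconds})--(\ref{eq:residuecoeffs}), and with the real-line jump (\ref{eq:jumpreal}). The interpolation identity $\gamma_k=Q_K\exp(X_K(\lambda_k)/\epsilon)$ is precisely what makes the lower-triangular factor in (\ref{eq:Blaschke}) have the residues needed to convert the pole conditions of ${\bf m}$ into regularity of ${\bf M}$ inside $D$, and this conversion is clearly blind to the particular admissible $K$ and to which contour $C$ encloses the eigenvalues. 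Consequently the reconstructed ${\bf m}$ is the unique solution of a meromorphic problem phrased purely in terms of $(R,\{\lambda_k\},\{\gamma_k\})$, independent of $(C,K)$. Since ${\bf M}={\bf m}$ outside the bounded set $D\cup D^*$, the large-$\lambda$ expansion ${\bf M}(\lambda)=I+\lambda^{-1}{\bf m}_1+O(\lambda^{-2})$, and in particular $\psi=2i({\bf m}_1)_{12}$, depends only on the scattering data.

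Finally, the dressing step. Set $\Psi(\lambda;x,t):={\bf m}(\lambda)e^{-i(\lambda x+\lambda^2 t)\sigma_3/\epsilon}$. The only $(x,t)$-dependence in the jump (\ref{eq:jumpreal}) and in the residue coefficients (\ref{eq:residuecoeffs}) is the explicit exponential $\exp(2i(\lambda x+\lambda^2 t)/\epsilon)$, so the jumps and residue structure of $\Psi$ are $(x,t)$-independent. Therefore $\Psi_x\Psi^{-1}$ and $\Psi_t\Psi^{-1}$ extend to entire functions of $\lambda$, and the normalization at infinity together with Liouville forces them to be polynomials in $\lambda$ of degrees one and two respectively. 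Reading off their coefficients from the Laurent expansion of ${\bf m}$, with $\psi=2i({\bf m}_1)_{12}$, one identifies these polynomials with the Lax matrices (\ref{Lax1})--(\ref{Lax2}); their compatibility $(\Psi_x)_t=(\Psi_t)_x$ is precisely the focusing NLS equation for $\psi$. Letting $x\to\pm\infty$ in the integral equation for ${\bf m}$ recovers the Jost solutions with the prescribed asymptotic normalization, which identifies $(R,\{\lambda_k\},\{\gamma_k\})$ as the scattering data of the constructed $\psi$. The main technical obstacle is verifying the Zhou positivity/symmetry input uniformly across $C\cup C^*\cup\R$ when the non-self-adjoint triangular jumps on $C\cup C^*$ must be reconciled with the Hermitian jump on $\R$; once that is in place the dressing computation and the invariance argument are essentially bookkeeping.
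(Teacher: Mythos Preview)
The paper does not actually prove this proposition; it is presented in the appendix as one of the ``elementary facts about the Riemann-Hilbert factorisation formulation of the inverse scattering problem,'' with the content attributed to the standard references \cite{kmm}, \cite{ft}, \cite{zs}. So there is no in-paper argument to compare your proposal against.

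That said, your proposal is the correct standard argument and is precisely what underlies those references. Unique solvability via Zhou's vanishing lemma, using the Schwarz reflection symmetry of the triangular jumps on $C\cup C^*$ together with the Hermitian positive-definiteness of the jump \eqref{eq:jumpreal} on $\R$, is exactly the mechanism invoked in Appendix~A of \cite{kmm} (and alluded to in Section~6 of the present paper when it speaks of ``the Schwarz reflection symmetry conditions and the positive definiteness condition needed for the application of the results in the Appendix A of \cite{kmm}''). Your independence argument---undoing \eqref{eq:Blaschke}--\eqref{eq:outside} to land on the meromorphic problem, which is phrased solely in terms of $(R,\{\lambda_k\},\{\gamma_k\})$, and then using ${\bf M}={\bf m}$ near $\infty$---is the intended one. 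The dressing computation identifying $\psi$ as an NLS solution with the prescribed scattering data is the classical Zakharov--Shabat argument from \cite{zs}, \cite{ft}. In short: your route is the one the cited literature takes, and the paper simply imports the conclusion without reproducing it.
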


It is possible to allow $C$ to meet the
real axis at one or more isolated points $u_k\in{\mathbb R}$, as long
as at each $u_k$ the incoming and outgoing parts of $C$ make nonzero
angles with the real axis and with each other.  The contour $C$ should
thus meet the axis in ``corners'' (if at all).

\section{Asymptotic analysis via a Riemann-Hilbert deformation } 

In view of \ref{eq:vM} and \ref{eq:Blaschke2},
the Riemann-Hilbert problem we have to analyse asymptotically can be seen as 
a (nonlinear) analogue of exponential integrals.
While in linear problems where the Fourier integral method can be applied we end up with exponential integrals,
here we have a $Riemann-Hilbert~problem$  with exponential phase.

It was first realized back in 1981 by Alexander Its (\cite{its})
that the long time asymptotics for the solution of the initial value problem
to the focusing NLS 
can be extracted by reducing the Riemann-Hilbert problem  to a
``model" Riemann-Hilbert problem that can be solved explicitly, exactly
as one does in the asymptotic analysis of exponential integrals.
(Apparently Its was inspired by work of Jimbo, Miwa and Ueno on the isomonodromy
method for Painleve equations \cite{jmu}.)
The Riemann-Hilbert problem deformation
method has been made rigorous
and systematic in later work by Deift and Zhou in 1993 (\cite{dz}).
The  basic ideas of the Deift-Zhou method are:

1. Equivalence of the solvability of a matrix Riemann-Hilbert problem to
the invertibility of an associated singular integral operator.
Expression of the solution of the matrix Riemann-Hilbert problem
as a singular integral involving the inverse of the associated singular integral operator.

The idea goes back to the Georgian school of Mushkelishvilli  and provides a nice way to show that
under some conditions, small changes in the jump data result in small changes in the solution.

2. Appropriate lower/diagonal/upper factorisations of jump matrices.

3. Introduction and solution of auxiliary scalar problems leading to a conjugation
of the original problem by an exponential factor (the ``g-function"). 

The semiclassical problem is more complicated and requires two more ideas.

4. An auxiliary variational problem of ``electrostatic type" (going back to work of Lax, Levermore and Venakides in 
the 1980s on the zero dispersion KdV problem). 
Solution of the Euler-Lagrange equations for this problem via theta functions.

5. Search for an optimal contour (selection of a contour of steepest descent) where all of the above can be applied.
Deformation from one contour to another. This is a feature appearing only in problems with
non-self-adjoint Lax operator, where the spectrum is not necessarily real and the whole
deformation procedure is conducted fully in the complex plane.

A more detailed expedition of the above ideas appears in \cite{k3}.
The rigorous implementation of the whole sequence is done in \cite{kmm} and \cite{kr}.

Our first step is to employ a ``change of variables"
$N(z)=M(z) \exp({{g(z) \sigma_3} \over {\epsilon}} ) $
which will enable us to reduce the given Riemann-Hilbert problem to one that is easier to handle
and which asymptotically will be explicitly solvable.

The function $g$  is  given as a logarithmic transform of an ``equilibrium" measure on a particular smooth curve:
$g(z) = \int \log(z-\eta) d \mu (\eta)$ where $d \mu$ is the equilibrium measure corresponding to a particular external field
depending on the parameters $x,t$ and the initial data. The particular curve on which the equilibrium problem is defined 
is chosen such that it maximises the corresponding equilibrium energy.
In other words the equilibrium measure 
solves a max-min type variational problem.

More precisely,
let $C $ be  a contour enclosing the eigenvalues in the upper half-plane as described in the previous section.
A priori we seek a function satisfying
$$
\aligned
g(\lambda) \text{ is independent of } \hbar.\\
g(\lambda) \text{ is analytic for }
\lambda\in \Bbb C \setminus (C\cup C^*).\\
g(\lambda)\rightarrow 0 \text{ as }
\lambda\rightarrow\infty.\\
g(\lambda) \text{ assumes 
continuous boundary
values from both sides of }C\cup C^*,\\
\text{ denoted by } g_+(g_-) \text{ on the left (right) of }C \cup C^*.\\
g(\lambda^*)+g(\lambda)^* = 0 \text{ for
all }\lambda\in \Bbb C\setminus (C\cup C^*).
\endaligned
$$

The assumptions above permit us to write $g$ in terms of a measure $\rho$
defined on the
contour $C \cup C^*$. Indeed
$$
\aligned
g(\lambda) =  \int_{C \cup C^*} \log(\lambda -\eta) \rho(\eta) d\eta,
\endaligned
$$
for an appropriate definition of the logarithm branch.

Further  technical  conditions are necessary to ensure that  the Riemann-Hilbert
deformations required can  go through. Such conditions characterise $g~~~and$ the contour $C$.
In \cite{kmm} these are  given by conditions (4.20) and (4.31). 

In \cite{kr} we define $g$ in a somewhat different but equivalent way, in terms of an equilibrium energy
problem. For any $given$ contour $C$ we choose $\rho(\eta) d\eta$ to be the equilibrium measure 
for a certain external field depending on $x,t$ and $A(x)$; but eventually we choose $C$ that maximizes the
equilibrium energy.
The extra  technical  conditions on $g$ are equivalent to the Euler-Lagrange conditions for the
energy equilibrium problem.

The Riemann-Hilbert problem is then asymptotically ``deformed" to a ``model" problem that can be explicitly solved 
in terms of theta functions. The model problem is an asymptotic semiclassical approximation of the original one.
The semiclassical asymptotics of the focusing NLS problem are thus also recovered via \ref{eq:fieldatinfinity}.

\section{On the Analyticity of the Spectral Density}

It is essential for the proofs in \cite{kmm}  that the ``density
of eigenvalues" $\rho^0(\eta)$ (see (3.2) of  \cite{kmm}),
derived by WKB theory and a priori
defined in the straight line interval connecting
$0$ to $iA$, be analytically extensible
to the closed upper half-plane  $\Bbb H$.
The main issue is whether the function
$$
\aligned
R^0(\eta) = \int^{x_+(\eta)}_{x_-(\eta)} (A(x)^2 +\eta^2)^{1/2} dx,
\endaligned
$$
where the  points $x_{\pm}$ are defined by
$$
\aligned
A( x_{\pm} (\eta)) = -i\eta, ~~~0< -i\eta < A_0, \\
-A_0 < x_- (\eta) < 0 < x_+ (\eta) < A_0,
\endaligned
$$
admits an analytic extension.
We note here that we choose the branch of the square root
that is positive for $x_- < x <x_+$.

We will show that even if $R^0$ does not admit an  analytic
extension  in $\Bbb H $, the analysis of Chapter 5 in  \cite{kmm}
can be amended via the solution of  a scalar Riemann-Hilbert problem.

Indeed, consider the following scalar additive Riemann-Hilbert problem,
with jump on the linear segment $\Sigma = [-iA_0, iA_0]$. Let $p$ be a function
analytic in $ \Bbb C \setminus [-iA_0, iA_0]$, such that
$$
\aligned
p_+ (\eta) + p_-(\eta) =  \rho_0(\eta)= {{dR^0} \over {d\eta}}, ~~~\eta \in (-iA_0, iA_0). 
\endaligned
$$
Indeed, let 
$$
\aligned
p (\eta) = (A_0^2+ \eta^2 )^{1/2}  \int_{(-iA_0, iA_0)} \frac{\rho_0(s)}{ (A_0^2+ s^2 )^{1/2}} \frac{ds}{2 \pi i (s-\eta)}, 
~~~\eta \in \Bbb C \setminus (-iA_0, iA_0).  \endaligned 
$$
Here $R^0(\eta)$ is extended to the lower half of $\Sigma$
by the relation $R^0(\eta^*) = R^0(\eta)$.
The ``+" side is to the left of $\Sigma$ and the
``-" side is to the right of $\Sigma$.

Note that if
$R^0$ is entire, then we can choose $p = \rho^0 = 1/2 {{dR^0} \over {d\eta}}.$
In general, our choice of initial data only ensures that $\rho^0$
is continuous.

Now, the analysis of Chapter 5 in  \cite{kmm} can be amended as follows.
First, let's amend the definition of $X$ in Chapter 3, which describes
the interpolant of the norming constants.
We simply set
$$
\aligned
X(\lambda) = i \pi (2K+1)  \int^{iA_0}_{\lambda} (p_+(\eta) +p_-(\eta) ) d\eta,
\endaligned
$$
for $\lambda $ in the linear segment $[0, iA_0]$.
Then, the discussion of Chapter 5 in  \cite{kmm}, in particular from relation (5.4)
to (5.8), is  amended by substitutitng $\bar \rho^{\sigma} = p-\rho$.
More precisely, taking $\sigma = 1$,
$$
\aligned
\int_{0}^{iA_0}L^0_\eta(\lambda)p_-(\eta)d\eta =
\int_{C_I}
L^{C}_{\eta-}(\lambda)p(\eta)d\eta,
\endaligned
$$
and similarly, by symmetry,
$$
\aligned
\int_{-iA_0}^{0}L^0_\eta(\lambda)p_-(\eta^*)^*\,d\eta =
\int_{C_I^*}
L^{C}_{\eta-}(\lambda)p(\eta^*)^*\,d\eta.
\endaligned
$$
(Recall here that $L^0_\eta(\lambda)=log (\lambda-\eta),$
with a cut along the imaginary axis from $\eta$ to $-i \infty$.
In the above integral we integrate over the ``-" side, while in the integral just following
we integrate over the ``+" side.)
Also
$$
\aligned
\int_{0}^{iA_0}L^0_\eta(\lambda)p_+(\eta)d\eta =
\int_{C_F}
L^{C}_{\eta-}(\lambda)p(\eta)d\eta,
\endaligned
$$
and similarly, by symmetry,
$$
\aligned
\int_{-iA_0}^{0}L^0_\eta(\lambda)p_+(\eta^*)^*d\eta =
\int_{C_F^*}
L^{C}_{\eta-}(\lambda)p(\eta^*)^*d\eta.
\endaligned
$$

Next, note that
$L^{C}_{\eta+}(\lambda)=L^{C}_{\eta-}(\lambda)$ for all
$\eta\in C_I\cup C_I^*$ ``below'' $\lambda\in C_I$
and at the same time
$L^{C}_{\eta_+}(\lambda)= 2\pi i +
L^{C}_{\eta-}(\lambda)$ for $\eta\in C_I$ ``above'' $\lambda$.
This means that for $\lambda\in C$,
$$
\aligned
\int_{C} L^{C}_{\eta\pm}(\lambda)p(\eta)d\eta +
\int_{C^*} L^{C}_{\eta\pm}(\lambda)p(\eta^*)^*d\eta =
\\
\int_{C} \overline{L^{C}_\eta}(\lambda)p(\eta)d\eta
+ \int_{C^*}\overline{L^{C}_\eta}(\lambda)p(\eta^*)^* d\eta
\pm\pi i/2 \int_{C_I}p(\eta)d\eta
\pm\pi i/2 \int_{C_F}p(\eta)d\eta,
\endaligned
$$
with $\overline{L^{C}_\eta}(\lambda) = {{L^{C}_{\eta+}(\lambda)
+L^{C}_{\eta-}(\lambda) } \over 2}.$
Assembling these
results gives the expression
$$
\aligned
\tilde{\phi}(\lambda)=
\int_{C}\overline{L^{C}_\eta}(\lambda)\overline{\rho}(\eta) d\eta +
\int_{C^*}\overline{L^{C}_\eta}(\lambda)\overline{\rho}(\eta^*)^*d\eta\\
+J(2i\lambda x + 2i\lambda^2 t) -(J(2K+1)+1) ~
(\pm\pi i/2 \int_{C_I}p(\eta)d\eta
\pm\pi i/2 \int_{C_F}p(\eta)d\eta),
\endaligned
$$
valid for $\lambda \in C$, where we have introduced the
complementary density  for
$\eta\in C:
\overline{\rho}(\eta):=p(\eta)-\rho(\eta).  $
Choosing $K$ so that $J(2K+1)+1 =0$, the last term vanishes and
we simply have
$$
\aligned
\tilde{\phi}(\lambda)=\int_{C}\overline{L_\eta^{C,\sigma}}(\lambda)
\overline{\rho}(\eta)d\eta +
\int_{C^*}\overline{L_\eta^{C}}(\lambda)
\overline{\rho}(\eta^*)^*d\eta +
J(2i\lambda x + 2i\lambda^2 t).
\endaligned
$$
Comparing with (5.11) of  \cite{kmm} this last formula  is less awkward,
since it does not depend on the a priori constraint
that  the contour $C$ has to go through $iA$,
a constraint that is eventually suspended anyway.

The rest of the proofs of  \cite{kmm} go through, with $p$ substituting $\rho^0$.
We omit the detailed discussion, but
we $do$ stress one major point on the variational problem
of Chapter 8 of  \cite{kmm}.

The contour $C$  and the measure $\rho d\eta$
are characterized by a solution of a Green's  variational
problem of electrostatic kind. Indeed
$$
\aligned
E_{\phi} (\rho d\eta) = \max_{C'} \min_{\mu: {\rm supp}(\mu)  \in C} E_{\phi} (\mu),
\endaligned
$$
where the contours $C'$   are a priori supported in the upper
half-plane minus the linear segment $[0, iA_0]$, and
$E_{\phi}$ is the weighted energy of a measure with respect to the external
field  given by
$$
\aligned
\phi (z) =
\int \log {{ |z-\eta^*| } \over {|z-\eta|}}
\rho^0 (\eta) d\eta - \re (i\pi J \int_{z}^{iA_0} p (\eta) d\eta
+2iJ (z x + z^2 t) ).
\endaligned
$$
The harmonicity of $\phi$ is important to the structure of
$C, supp(\rho)$. But again, even if $\rho^0$ is not analytically extended,
it can be written as a sum of two terms that $are$.

One could write $\phi$ as
$$
\aligned
\phi (z) =
\int \log {{ |z-\eta^*| } \over {|z-\eta|}}
(p_+ + p_-) (\eta) d\eta - \re (i\pi J \int_{z}^{iA_0} p (\eta) d\eta
+2iJ (z x + z^2 t) ).
\endaligned
$$
Again, this representation is perhaps more natural, since in setting
the variational problem it is more appropriate to think of the
``left" and ``right" sides of the linear segment
$[0, iA_0]$ as distinct.

\begin{remark}

The moral of the story is that if $\rho^0$ does not admit an entire extension,
we can write it as the average of two functions $p_-, p_+$ that can be extended
to the left and right of the segment $[0, iA_0]$ respectively,
and proceed as before, with $\rho^0$  substituted by $p$.

\end{remark}

\begin{remark}
In \cite{kr}  we assume that the solution of the variational problem does not touch the
spike $[0, iA_0]$ except possibly at a finite number of points. As shown in \cite{k2}, this obstacle can be
overcome by setting the variational problem on an infinite sheeted Riemann surface $\Bbb L$. For this,
we use the analyticity of $\rho^0$ even across the spike. Here we don't have that (in fact this is
the whole point of this appendix). But a careful examination of \cite{k2} shows that what we actually need is
analyticity across all but one liftings of the spike on $\Bbb L$. This we can get by simply setting our
scalar Riemann-Hilbert problem on $\Bbb L$ and letting the jump be a single copy of the spike $[0, iA]$
in $\Bbb L$. The scalar Riemann-Hilbert problem on $\Bbb L$ can be explicitly solved by mapping conformally
$\Bbb L$ to $\Bbb C$.

\end{remark}

\end{document}